\newcommand{\ketbra}[2]{|#1\rangle\! \langle #2|}
\definecolor{teal}{RGB}{42, 157, 143}
\definecolor{yellow}{RGB}{233, 196, 106}
\definecolor{red}{RGB}{210, 66, 66}
\definecolor{lred}{RGB}{222,94,100}
\definecolor{lblue}{RGB}{179, 235, 242 }
\newcommand{\fref}[1]{\textcolor{blue}{\hyperref[#1]{Fig.$\,$\bfseries\ref{#1}}}}
\newcommand{\lemref}[1]{\textcolor{blue}{\hyperref[#1]{Lemma$\,$\bfseries\ref{#1}}}}
\newcommand{\thmref}[1]{\textcolor{blue}{\hyperref[#1]{Thm.$\,$\bfseries\ref{#1}}}}
\newtheorem{theorem}{Theorem}[section]
\newtheorem{proposition}{Proposition}[section]
\newtheorem{lemma}[theorem]{Lemma}
\theoremstyle{definition}
\newtheorem{definition}{Definition}[section]
\tikzset{
    every node/.style={font=\small},
    arrow/.style={-{Stealth}, thick},
    implies/.style={->, double equal sign distance, thick}
}
\newtcolorbox[auto counter]{mybox}[2][]{%width = \textwidth,
    breakable = false,
    enhanced,
    sharp corners,
    colback=violet!3!white,
    colframe=violet!40!white,
    fonttitle=\bfseries,
    title={\centering \strut #2}, % Center the description in the title
    enlarge bottom at break by=5mm,
    enlarge top at break by=5mm,
    overlay first={%
        \draw[black, line width=0.5mm](frame.south west)--(frame.south east);
        \node[anchor=north east] at (frame.south east) {continued on next page};
    },
    overlay middle={%
        \draw[black, line width=0.5mm](frame.south west)--(frame.south east);
        \draw[black, line width=0.5mm](frame.north west)--(frame.north east);
        \node[anchor=north east] at (frame.south east) {continued on next page};
        \node[anchor=south west] at (frame.north west) {continued from next page};
    },
    overlay last={%
        \draw[black, line width=0.5mm](frame.north west)--(frame.north east);
        \node[anchor=south west] at (frame.north west) {continued from last page};},
    #1
}
\let\oldaddcontentsline\addcontentsline
\renewcommand{\addcontentsline}[3]{}
\begin{document}

\preprint{APS/123-QED}

%Just trying out different ones 
\title{Cooling a Qubit using $n$ Others}% : \\ \vspace{1pt} \smaller{}\texttt{Gap Structures, Energetic Inequalities \& Perfect Matching}}

\author{Jake Xuereb}
\email{jake.xuereb@tuwien.ac.at}

\affiliation{Vienna Center for Quantum Science and Technology, Atominstitut, TU Wien, 1020 Vienna, Austria}
\author{Benjamin Stratton}
\email{ben.stratton@bristol.ac.uk}
\affiliation{Quantum Engineering Centre for Doctoral Training, H. H. Wills Physics Laboratory and Department of Electrical \& Electronic Engineering, University of Bristol, BS8 1FD, UK}
\affiliation{H.H. Wills Physics Laboratory, University of Bristol, Tyndall Avenue, Bristol, BS8 1TL, UK}
\author{Alberto Rolandi}
\affiliation{Vienna Center for Quantum Science and Technology, Atominstitut, TU Wien, 1020 Vienna, Austria}
\author{Jinming He}
\affiliation{School of Physical Sciences, University of Science and Technology of China, Hefei 230026, China}
\author{Marcus Huber}
\affiliation{Vienna Center for Quantum Science and Technology, Atominstitut, TU Wien, 1020 Vienna, Austria}
\affiliation{Institute for Quantum Optics and Quantum Information - IQOQI Vienna, Austrian Academy of Sciences, Boltzmanngasse 3, 1090 Vienna, Austria}
\author{Pharnam Bakhshinezhad}
\affiliation{Vienna Center for Quantum Science and Technology, Atominstitut, TU Wien, 1020 Vienna, Austria}

\date{\today}

\begin{abstract}
In the task of unitarily cooling a quantum system with access to a larger quantum system, known as the machine or reservoir, \textit{how does the structure of the machine impact an agent's ability to cool and the complexity of their cooling protocol}? Focusing on the task of cooling a single qubit given access to $n$ separable, thermal qubits with arbitrary energy structure, we answer these questions by giving two new perspectives on this task. Firstly, we show that a set of inequalities related to the energetic structure of the $n$-qubit machine determines the protocol which cools the qubit to the coldest reachable state, which parts of the machine contribute to this protocol and give rise to a Carnot-like bound. Secondly, we show that cooling protocols can be represented as perfect matchings on bipartite graphs enabling the optimization of cost functions e.g., gate complexity or dissipation. Our results generalize the algorithmic cooling problem, establish new fundamental bounds on quantum cooling and offer a framework for designing novel autonomous thermal machines and cooling algorithms.
\end{abstract}

\maketitle

\section{Introduction} 

%Cooling is important in qthermo and quantum computing 
The task of cooling a quantum system using another~\cite{Reeb_2014,ralph_swap,wilming_third_law,Masanes2017,clivaz_pre_2019,clivaz_prl_2019,taranto_23,bassman_campisi_24} is a foundational point of inquiry in quantum thermodynamics~\cite{Goold_2016, binder2018thermodynamics, strasberg2022quantum}. The problem has been studied in a multitude of settings including both closed and open system dynamics and has inspired practical computational subroutines such as algorithmic cooling~\cite{Park2016,schulman_vazirani,schulman_limits,raeisi_mosca,naye_prl,Alhambra2019heatbathalgorithmic,naye_comparison} and autonomous quantum refrigerators~\cite{skrzypczyk_10,ronnie_12,Mitchison_review}. Both have applications to passive error correction and state preparation, with the latter recently experimentally realised to create an autonomous reset mechanism for a superconducting qubit~\cite{Aamir2025}.

%What's your setup and why did you choose it
In this work, we consider the task of cooling a qubit via a unitary interaction with $n$ thermal qubits which will often be referred to as the \textit{machine} throughout this work. By cooling, we mean increasing the ground state population of the system. Here, we go beyond prior work~\cite{bassman_campisi_24,schulman_vazirani,schulman_limits} by allowing the $n$ thermal qubits used for cooling to be non-identical; the qubits can have varying energy gaps, effectively corresponding to different temperatures. This enables an understanding of how the energetic structure of the $n$-thermal qubits impacts the cooling scenario. Moreover, it brings the scenario closer to practical reality where physical qubits are unlikely to be identical in real hardware, whether intentionally or by design.

\begin{figure}[t]
\begin{mybox}{Main Results}
\justifying
When unitarily cooling a qubit with energy gap $\omega$ at temperature $T_S$ with access to $n$ qubits at temperature $T_M$ and increasing gaps $\Gamma = (\gamma_1, \gamma_2, \dots, \gamma_n) \, : \,E_\text{Max} = \sum^n_{j=1} \gamma_j$ the optimal cooling protocol (maximally increasing the ground state population of $S$) is specified by; \textit{A ground state subspace energy level $\ket{0_S i_M}$  should be exchanged with an energy level from the excited state subspace $\ket{1_S j_M}$, for some $i_M, j_M \in \{0,1\}^{\times n}$ if} 
     \begin{gather*}
        \frac{1}{2}\left(\frac{T_M}{T_S}\omega + E_\text{Max}\right) < E(i_M)
    \end{gather*} 
where $E(i_M)$ is the energy corresponding to the level $\ket{0_S i_M}$ given by $i_M \cdot \Gamma$. Intuitively, cooling $S$ depends on the energetic structure of $M$--whose individual energies can be ordered. The macroscopic quantities forming the L.H.S of the above inequality determines the midpoint in this order. If an energy falls above this midpoint, a corresponding energy may be found below it such that exchanging these energy levels cools $S$. Using these inequalities we are able to obtain the following in this work.
\begin{itemize}
    \item[Sec. III] A single condition to determine the minimal set of energy level exchanges needed for optimal cooling, leading to a set of inequalities that characterise a cooling scenario.

    \item[Sec. III] A Carnot-\textit{like} bound on any energy level exchange that cools the system. 

    \item[Sec. IV] Conditions determining when sets of machine qubits are energetically irrelevant for cooling, characterising when a machine is reducible (some machine qubits do not contribute) or irreducible (all the machine contributes).

    \item[Sec. IV] A framework to optimise cooling unitaries relative to a cost using minimum weight perfect matching on bipartite graphs~\cite{godsil01,comb_opt}.  
\end{itemize}
\end{mybox}
\end{figure}

Previous works on cooling qubit systems have, in addition to using identical qubits, allowed the machine qubits to be refreshed between successive rounds of unitary cooling~\cite{Alhambra2019heatbathalgorithmic,clivaz_pre_2019,clivaz_prl_2019}, with asymptotic bounds on the cooling found in this case. Alternately, fixed machines have been considered but have been either asymptotically large~\cite{karen_dynamical_2011,memory_taranto,taranto_23,taranto2024efficientlycoolingquantumsystems}, very small~\cite{skrzypczyk_10,ralph_swap} and/or used identical qubits~\cite{schulman_vazirani, schulman_limits,raeisi_mosca,bassman_campisi_24}. In this work, we remove the assumptions of identical machine qubits and the ability to rethermalise the machine, focusing on finite machines. By removing the ability to rethermalise qubits, we allow the machine structure's effect on cooling to be isolated and studied. In addition, we ensure that our cooling protocols can be physically implemented as a single quantum circuit and examined in terms of their gate complexity, given they contain no non-unitary steps nor asymptotic considerations.

However, even in the case of non-identical machine qubits, this is a \textit{heuristically} solved problem. It is part of the folklore of quantum thermodynamics that \textit{to cool a system $S$ given access to an auxiliary machine $M$, one need only take their joint system and machine density and reorder its eigenvalues in decreasing order}~\cite{Allahverdyan_2004,Horodecki2013,clivaz_prl_2019}. Surprisingly, little attention has been paid to the fine-grained details of these cooling
dynamics, namely \textit{what form does this sorting take and how does this depend on the energetic structure of the machine}? This is despite the potential for vast differences in the complexity of processes which generate these cooling operations. For instance, the energy level exchanges of a cooling protocol could simplify down to a \texttt{SWAP} between the system and one of the $n$ machine qubits. Alternatively, it could require an operation controlled on all $n$ machine qubits. Recently, it was shown that in a single-shot state transformation under unitary dynamics, the energetically optimal trajectory can be identified on a polytope of potential population vectors the system can take~\cite{silva2024optimalunitarytrajectoriescommuting}. In this work we address this gap by presenting a framework that not only identifies the necessary energy level permutations for optimal cooling in terms of a set of physically motivated inequalities, but also evaluates the difficulty of implementing them. Specifically, we establish criteria which we term \textit{reducibility conditions} to detect whether a cooling interaction involves all of the machine or merely a part of it. Shockingly, we find that some cooling scenarios only make use of a single qubit from an $n$ qubit machine. But our criteria can be used as tools to construct \textit{irreducible} machines whose parts are always useful. We further explore these dynamics by presenting a method for optimising over the set of optimal cooling unitaries for some given constraint, here considering the example of minimising gate complexity.

\begin{figure*}[t]
    \centering
\includegraphics[width=\linewidth]{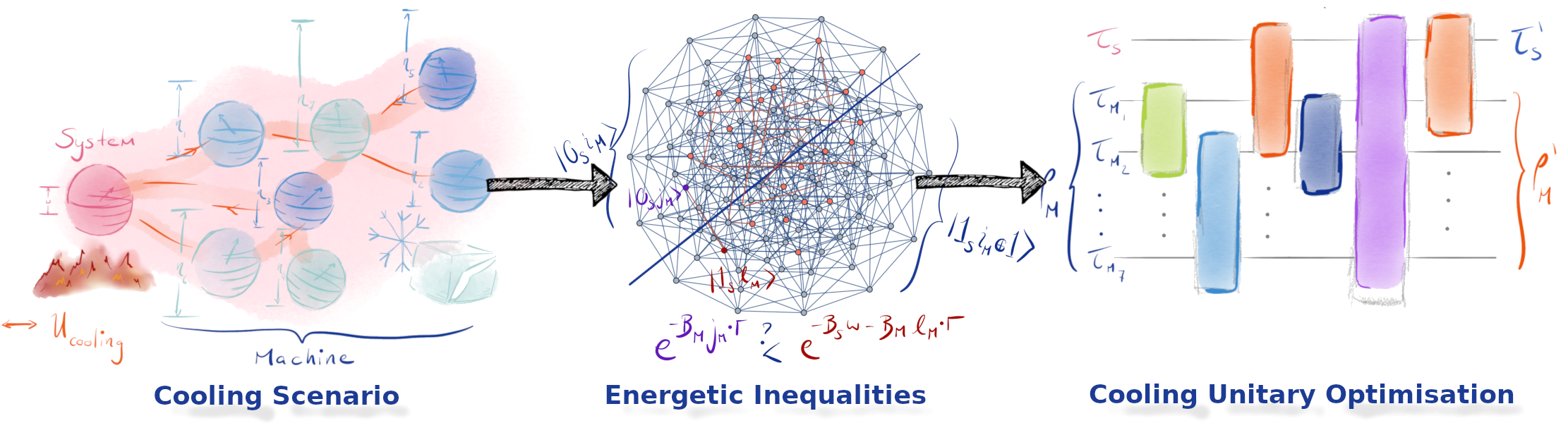}
    \caption{\textit{Summary Illustration --} We examine unitary cooling scenarios where a qubit at temperature $\beta_S$ with gap $\omega$ is coupled to an $n$ qubit machine at temperature $\beta_M$ with energy gap structure $\Gamma = (\gamma_1 ,\gamma_2, \dots, \gamma_n)$. Since the temperature of a two-level system is determined by the ratio of its ground and excited state populations, to cool the qubit system an agent must exchange disordered populations on the global population vector from the ground state subspace $\ket{0_S i_{M}}$ to the excited state subspace $\ket{1_S j_{M}}$. We find that such exchanges only cool if the energy $E(i_M)$ is larger than $1/2(T_M\omega/T_S + E_\text{Max})$ exposing a set of inequalities which determine various properties of the cooling scenario. By finding a representation for cooling unitaries as perfect matchings on bipartite graphs we are able to optimise cooling protocols relative to a cost function by using minimum weight perfect matching techniques.}
\end{figure*}

This manuscript is organised as follows; in Sec.~\ref{sec:example} we present a simple three qubit example which exposes the main insights of this work. In Sec~\ref{sec:ineq} we introduce a set of inequalities that fully characterise a cooling scenario. In Sec~\ref{sec:unitary} we begin by presenting the reducibility conditions for a cooling scenario and move on to the problem optimising cooling unitaries relative to a cost function which we solve by using a novel graph representation. We follow with closing discussions and conclude after which technical matter is provided in the appendices.
\section{Setting \& Illustrative Example}
\label{sec:example}

\textit{What does it mean to cool a qubit}? The state of a qubit always has a well defined temperature. This is due to qubit states always being describable by a Gibbs state with respect to a Hamiltonian $H = E_g\ketbra{g}{g} + E_e\ketbra{e}{e}$ at inverse temperature $\beta$, \begin{gather}\tau_\beta = \frac{\ketbra{g}{g} + e^{-\beta (E_e-E_g)}\ketbra{e}{e} }{1 + e^{-\beta (E_e-E_g)}}, ~ ~ \beta = \frac{1}{k_BT}~,
\end{gather}
where $T$ is the temperature and $k_B$ is the Boltzmann constant. The goal of cooling can be rephrased as increasing the ground state population of the qubit, which effectively lowers $T$ when $H$ is fixed. As such, we may, at points, also refer to qubits with larger energy gaps as being ``\textit{colder}'', with this alluding to a lower effective temperature with respect to the qubit with the smallest energy gap. Using this approach, one can apply the results we present to qubits in general states by \emph{mathematically} assigning such a Hamiltonian to the individual system and machine qubits. Throughout the work, without loss of generality, we take $\hbar = 1$ and fix the computational basis to be $\ketbra{g}{g} = \ketbra{0}{0}, \ketbra{e}{e} = \ketbra{1}{1}$. We denote the partition function of $\tau_\beta$ as $\mathcal{Z} = \textrm{tr}\{e^{-\beta H}\}$ and when working with bit strings use superscripts, e.g., $i^n \in \{0,1\}^{\times n}$ to denote the length of a string and subscripts e.g. $i_2$ in $i_1i_2i_3$, to denote the position of the index of a bit.

Before introducing the general case we present an expository example. Consider three qubits, all with $\beta~=~1$: a system qubit, $\tau_S = \frac{1}{1+e^{\omega}}(\ketbra{0}{0} + e^{-\omega}\ketbra{1}{1})$, and two machine qubits, $\tau_{M_1} = \frac{1}{1+e^{- \gamma_1}}(\ketbra{0}{0} + e^{-\gamma_1}\ketbra{1}{1})$ and $\tau_{M_2} = \frac{1}{1+e^{-\gamma_2}}(\ketbra{0}{0} + e^{-\gamma_2}\ketbra{1}{1})$. We assume that $\omega \leq \gamma_1 \leq \gamma_2$, such that $M_2$ could be colder than $M_1$ that could be colder than $S$. Such an ordering is always possible if one aims to cool the hottest qubit $S$. We now seek to understand \textit{by how much can $S$ be cooled via a joint unitary on $S, M_1$, and $M_2$ for different $M_1$ and $M_2$, and what operations are required to achieve this cooling}?

The state of the system and the machine can be expressed as a diagonal matrix of 8 populations
\begin{align}
    \tau_S \hspace{-0.05cm} \otimes \hspace{-0.05cm} \tau_M &= \hspace{-0.85cm} \sum_{i_S, i_{M_1}, i_{M_2} \in \{0,1\}} \hspace{-0.75cm} \frac{e^{-(i_S \omega + i_{M_1}\gamma_1 + i_{M_2}\gamma_2)}}{ \mathcal{Z}_{S}\mathcal{Z}_{M}} \ketbra{i_S i_{M_1} i_{M_2}}{i_S i_{M_1} i_{M_2}}\nonumber\\
    &=\frac{1}{\mathcal{Z}_{S}\mathcal{Z}_{M}}\text{diag}\underbrace{\left(1, e^{- \gamma_2}, e^{-\gamma_1}, e^{-(\gamma_1 + \gamma_2)}, \right.}_{\ket{0_S i_M}}  \\
    &\hspace{2cm}\underbrace{\left. e^{- \omega}, e^{-(\omega + \gamma_2)}, e^{-(\omega+ \gamma_1)}, e^{-(\omega+\gamma_1 + \gamma_2)}\right)}_{\ket{1_S i_M}}, \nonumber
\end{align}
where $\mathcal{Z}_S = (1+e^{-\omega})$ and $\mathcal{Z}_{M} = (1+e^{-\gamma_1})(1+e^{-\gamma_2})$. It proves useful to adopt a bitwise product to denote these populations. Specifically, we introduce the machine gap vector $\Gamma = (\gamma_1 , \gamma_2)$, and define its product with the 2-bit-string $i_M = i_{M_1}i_{M_2}$ that labels a given basis state, as $i_M \cdot \Gamma = i_{M_1}\gamma_1 + i_{M_2}\gamma_2$ giving the energy associated to this level. With this, we are in a position to set the scene. Initially, the ground state population of S is $p_{0} = \sum_{i_M}\frac{e^{- i_M \cdot \Gamma}}{\mathcal{Z}_{SM}}$ obtained by summing over entries $\ket{0_S i_M} \, : i_M \in \{0,1\}^{\times 2}$ in a partial trace and similarly the initial excited state population of $S$ is $1~-~p_0~=~\sum_{i_M} \frac{e^{-(\omega + i_M \cdot \Gamma)}}{\mathcal{Z}_{SM}}$. To optimally cool $S$, our aim is therefore to maximise the sum over the entries $\ket{0_S i_M}$ under the action of a joint unitary on $S, M_1$ and $M_2$. Any unitary cooling operation $U(\tau_S \otimes \tau_M)U^\dagger$ must preserve these global populations as they are eigenvalues. Thus, to cool $S$ we can only permute the $SM$ basis kets to ensure that the populations contributing to the ground state population after cooling ( $p_0'$)  are the largest. That is, to perform this optimal cooling we seek a permutation that assigns the $4$ largest populations of $SM$ to the basis states $\ket{0_s i_M}$.

To achieve this, we first note that all permutations of the global state can be broken down into a product of exchanges featuring only two energy levels of $SM$, which we will often term two-level permutations (TLP). In particular, we are interested in cooling TLPs which exchange $\ket{0_S i_M} \leftrightarrow \ket{1_S j_M}$ if their associated populations satisfy
\begin{equation}
    e^{-{ i_M \cdot \Gamma}} < e^{-{ (\omega + j_M\cdot \Gamma)}}, \label{eq:which_swaps_cool}
\end{equation}
as the new population of the basis state $\ket{0_S i_M}$ will now be larger. Using Eq.~\eqref{eq:which_swaps_cool} one can find a set of TLP that are necessary to ensure that the $4$ largest populations are associated to the $\ket{0_s i_M}$ basis states.

Due to the non-decreasing gap structure, $\omega \leq \gamma_1 \leq \gamma_2$, the $\ket{0_Si_M}$ populations in non-increasing order are $1 \geq e^{- \gamma_1} \geq e^{- \gamma_2} \geq e^{-( \gamma_1 + \gamma_2)}$. Given the $\ket{1_Si_M}$ populations are the $\ket{0_Si_M}$ populations multiplied by the constant $e^{-\omega}$, they too are ordered in the same way. The TLP of the basis kets associated to the largest and smallest populations is therefore cooling if $e^{-(\gamma_1 + \gamma_2)} < e^{-\omega}$, which holds if $\omega < \gamma_1 + \gamma_2$. This always holds due to the gap structure, and hence the TLP $\ket{0_S11} \leftrightarrow \ket{1_S00}$ is always cooling and leads to the largest amount of cooling. Considering the next largest population in $\ket{1_Sj_M}$ against the next smallest in $\ket{0_Si_M}$, the associated TLP is cooling if $e^{-\gamma_2} < e^{-(\omega + \gamma_1)}$, which holds if $\omega < \gamma_2 - \gamma_1$. This is not always true and depends on the energetic structure of the machine. The remaining potential exchanges $\ket{0_S10} \leftrightarrow \ket{1_S01}$ and $\ket{0_S00} \leftrightarrow \ket{1_S11}$ can never cool as $\ket{0_S00}$ and $\ket{0_S01}$ already hold two of the four largest populations in the state. As we verified, $\ket{1_S00}$ holds one of the four largest, and the TLP $\ket{0_S11} \leftrightarrow \ket{1_S00}$ always ensures its correct placement. This leaves just one final population, with two possible scenarios a) if $\omega \geq \gamma_2 - \gamma_1$, $\ket{0_S01}$ holds the $4$th largest population already, b) if  $\omega < \gamma_2 - \gamma_1$, $\ket{1_S10}$ holds the $4$th largest population and the TLP $\ket{0_S10} \leftrightarrow \ket{1_S01}$ is needed.

In scenario a) the only TLP that must be applied is $\ket{0_S11} \leftrightarrow \ket{1_S00}$. A unitary $S_{011,100}= \ketbra{0_{S}11}{1_{S}00} + \ketbra{1_{S}00}{0_{S}11} + \mathbb{1}_{\rm Rest}$ optimally cools $S$ and can be broken down into a series of \texttt{Toffoli} gates as in Fig.~\ref{fig:3_qubit}. The final ground population of the system is then \hbox{$p'^{(a)}_0 = \frac{1}{\mathcal{Z}_S \mathcal{Z}_M}\left(1 + e^{-\gamma_2} + e^{-\gamma_1}  + e^{-\omega} \right)$}. It can be seen that $p^{(a)}_0$ is larger than the initial ground state population of $M_2$, the coldest qubit, if $p^{(a)}_0 > 1/(1 + e^{-\gamma_2})$ which occurs when $\omega > \gamma_2 - \gamma_1$, which is the criteria for scenario a). In scenario b) both $\ket{0_S11} \leftrightarrow \ket{1_S00}$ and $\ket{0_S01} \leftrightarrow \ket{1_s10}$ must be implemented. With both conditions fulfilled, $i_{M_1}$ becomes a free index such that all exchanges $\ket{0_S i_{M_1} 1} \leftrightarrow \ket{1_S i_{M_1} 0} : i_{M_1} \in \{0,1\}$ must be carried out which is achieved by $\mathtt{SWAP}_{S, M_2}$. The final achievable ground state population after cooling is therefore implicitly that of $M_2$ i.e. $p'^{(a)}_0 = 1/(1 +e^{- \gamma_2})$.

This is a curious outcome. In b) where the energetic gap of $S$ is smaller than the difference between the energy gaps of the machine qubits, $M_1$ does not contribute to the optimal cooling! The operation for optimal cooling is purely bipartite, with one machine qubit left unused. On the other hand, in a), where the energetic gap of $S$ is larger than the difference between the energy gaps of the machine qubits, an agent is able to cool further than the temperature of the coldest qubit. However, the dynamics are necessarily tripartite. We can better understand these two cooling scenarios by visualising them as vertex weighted graphs. Let each eigenket $\ket{i_S i_{M_1}i_{M_2}}$ correspond to a vertex weighted by $e^{-\beta(i_S\omega + i_M \cdot\Gamma)}$ and let two vertices $v_i$ and $v_j$ be connected by an edge if they correspond to bitstrings $i_S i_{M_1}i_{M_2}$ and $j_S j_{M_1}j_{M_2}$ that have Hamming distance 1 i.e. differ by one bit. This is the weighted 3 bit hypercube graph $Q_3$--the graph of bitstrings of length 3 with edges between bitstrings differing by one bit (Hamming distance 1)~\cite{godsil01}--and permutations of the $SM$ eigenkets for cooling can be thought of as operations on $Q_3$. An example of each of the two cooling scenarios described is represented on $Q_3$ in \fref{fig:3_qubit}. Even cooling a single qubit with access to two other qubits seems to lead to an intricate set of scenarios which whilst complex is seemingly captured by a simple set of relationships between the energy gaps of the system and machine.
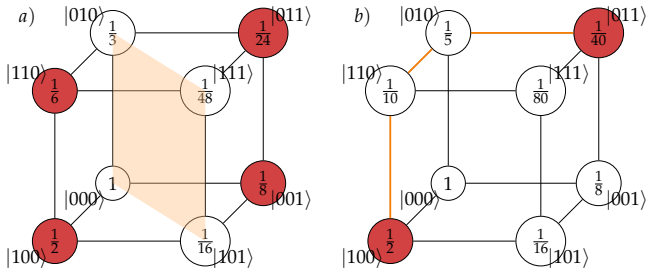
\begin{figure}[t]
\label{fig:eg_2}
\scalebox{0.8}{
\begin{tikzpicture}[scale=2.5]
\def\radius{0.85}
\node[anchor=south east] at (115:\radius + 0.25) {$a)$};
    % Define 3D cube vertex positions
    \node[draw, circle] (C) at (0, 0, 0) {$1$}; % (0,0,0)
    \node[anchor=north east] at (0, 0, 0) {$\ket{0_S00}$};
    \node[draw, circle, fill = red] (D) at (1, 0, 0) {$\frac{1}{8}$}; % (1,0,0)
    \node[anchor=north west] at (1, 0, 0) {$\ket{0_S01}$};
    \node[draw, circle] (E) at (0, 1, 0) {$\frac{1}{3}$}; % (0,1,0)
    \node[anchor=south east] at (0, 1, 0) {$\ket{0_S10}$};
    \node[draw, circle, fill = red] (F) at (1, 1, 0) {$\frac{1}{24}$}; % (1,1,0)
    \node[anchor=south west] at (1, 1, 0) {$\ket{0_S11}$};
    \node[draw, circle, fill = red] (G) at (0, 0, 1) {$\frac{1}{2}$}; % (0,0,1)
    \node[anchor=north east] at (0, 0, 1) {$\ket{1_S00}$};
    \node[draw, circle] (H) at (1, 0, 1) {$\frac{1}{16}$}; % (1,0,1)
    \node[anchor=north west] at (1, 0, 1) {$\ket{1_S01}$};
    \node[draw, circle, fill = red] (A) at (0, 1, 1) {$\frac{1}{6}$}; % (0,1,1)
    \node[anchor=south east] at (0, 1, 1) {$\ket{1_S10}$};
    \node[draw, circle] (B) at (1, 1, 1) {$\frac{1}{48}$}; % (1,1,1)
    \node[anchor=south west] at (1, 1, 1) {$\ket{1_S11}$};
    % Draw edges of the cube
    \draw (C) -- (D); % (000) -> (001)
    \draw (C) -- (E); % (00The state of a qubit always has a well defined tem-0) -> (010)
    \draw (C) -- (G); % (000) -> (100)
    \draw (D) -- (F); % (001) -> (011)
    \draw (D) -- (H); % (001) -> (101)
    \draw (E) -- (F); % (010) -> (011)
    \draw (E) -- (A); % (010) -> (110)
    \draw (F) -- (B); % (011) -> (111)
    \draw (G) -- (H); % (100) -> (101)
    \draw (G) -- (A); % (100) -> (110)
    \draw (H) -- (B); % (10The state of a qubit always has a well defined tem-1) -> (111)
    \draw (A) -- (B); % (110) -> (111)
    \fill[orange!50, opacity=0.4] (0, 1, 0) -- (1, 1, 1) -- (1, 0, 1) -- (0, 0, 0) -- cycle;
\end{tikzpicture}
\hspace{0.1cm}
\begin{tikzpicture}[scale=2.5]
\def\radius{0.85}
\node[anchor=south east] at (115:\radius + 0.25) {$b)$};
    % Define 3D cube vertex positions
    \node[draw, circle] (C) at (0, 0, 0) {$1$}; % (0,0,0)
    \node[anchor=north east] at (0, 0, 0) {$\ket{0_S00}$};
    \node[draw, circle] (D) at (1, 0, 0) {$\frac{1}{8}$}; % (1,0,0)
    \node[anchor=north west] at (1, 0, 0) {$\ket{0_S01}$};
    \node[draw, circle] (E) at (0, 1, 0) {$\frac{1}{5}$}; % (0,1,0)
    \node[anchor=south east] at (0, 1, 0) {$\ket{0_S10}$};
    \node[draw, circle, fill = red] (F) at (1, 1, 0) {$\frac{1}{40}$}; % (1,1,0)
    \node[anchor=south west] at (1, 1, 0) {$\ket{0_S11}$};
    \node[draw, circle, fill = red] (G) at (0, 0, 1) {$\frac{1}{2}$}; % (0,0,1)
    \node[anchor=north east] at (0, 0, 1) {$\ket{1_S00}$};
    \node[draw, circle] (H) at (1, 0, 1) {$\frac{1}{16}$}; % (1,0,1)
    \node[anchor=north west] at (1, 0, 1) {$\ket{1_S01}$};
    \node[draw, circle ] (A) at (0, 1, 1) {$\frac{1}{10}$}; % (0,1,1)
    \node[anchor=south east] at (0, 1, 1) {$\ket{1_S10}$};
    \node[draw, circle] (B) at (1, 1, 1) {$\frac{1}{80}$}; % (1,1,1)
    \node[anchor=south west] at (1, 1, 1) {$\ket{1_S11}$};
    % Draw edges of the cube
    \draw (C) -- (D); % (000) -> (001)
    \draw (C) -- (E); % (000) -> (010)
    \draw (C) -- (G); % (000) -> (100)
    \draw (D) -- (F); % (001) -> (011)
    \draw (D) -- (H); % (001) -> (101)
    \draw (E) -- (F); % (010) -> (011)
    \draw (E) -- (A); % (010) -> (110)
    \draw (F) -- (B); % (011) -> (111)
    \draw (G) -- (H); % (100) -> (101)
    \draw (G) -- (A); % (100) -> (110)
    \draw (H) -- (B); % (101) -> (111)
    \draw (A) -- (B); % (110) -> (111)
    % Draw 2 level swap path
    \draw[orange, thick] (G) -- (A);
    \draw[orange, thick] (A) -- (E);
    \draw[orange, thick] (E) -- (F);
\end{tikzpicture}
}
\caption{In a) we have three qubits at temperature $\beta = 1$ and with gaps $\omega = \log(2), \gamma_1 = \log(3)$ and $\gamma_2 =  \log(8)$ with the populations factored by $\frac{4}{9}$ for ease of readability. In orange we see a plane denoting the axis of reflection corresponding to a $\mathtt{SWAP}$ between the system qubit and the machine qubit 2 (bipartite) which results in the optimal distribution. In b) we have three qubits at temperature $\beta = 1$ but with gaps $\omega = \log(2)$ and $\gamma_1  = \log(5)$ and $\gamma_2 = \log(8)$ giving one less disordered pair, with populations factored by $\frac{81}{40}$. Here the edges in orange correspond to a series of two-level swaps or $\mathtt{Toffolis}$ in the order $\ket{1_S00} \leftrightarrow \ket{1_{S}10}$ then $\ket{0_{S}10} \leftrightarrow \ket{0_{S}11}$ and $\ket{0_{S}10} \leftrightarrow \ket{
1_{S}10}$ (tripartite) that give the optimal distribution.}
\label{fig:3_qubit}
\end{figure}

\section{A Set of Inequalities}
\label{sec:ineq}
\paragraph*{Finding the disordered populations.}Turning our attention to the general problem, imagine a qubit with gap $\omega$ at temperature $\beta_S$ as our system $S$ and $n$-qubits at temperature $\beta_M$ with an energetic gap structure $\Gamma = (\gamma_1, \gamma_2, \dots, \gamma_n)$ such that $\gamma_i \leq \gamma_{i+1} ~ \forall~i \in \{1,N\}$, as the machine $M$ in this cooling scenario. The system and machine have the initial joint state 
\begin{gather}
    \tau_S \otimes \tau_M = \hspace{-0.5cm}\sum_{\substack{i_S \in \{0,1\}\\ i_M \in \{0,1\}^{\times n }}} \frac{e^{-(\beta_S i_S \omega + \beta_M i_M \cdot \Gamma)}}{\mathcal{Z}_{S}\mathcal{Z}_{M}} \ketbra{i_S i_M}{i_S i_M},
\end{gather}
where $\mathcal{Z}_{S} =  (1 + e^{-\beta_S \omega})$ and $\mathcal{Z}_{M} = \prod^{n}_{j = 1}(1+e^{-\beta_M \gamma_j})$. As before, any TLP $\ket{0_S i_M} \leftrightarrow \ket{1_s j_M}$ on $\tau_S \otimes \tau_M$ that an agent performs is cooling on $S$ if their associated populations satisfy $e^{-\beta_M i_M \cdot \Gamma} < e^{-(\beta_S\omega + \beta_M j_M \cdot \Gamma)}$. This implies that any pair of energy eigenkets in the $SM$ Hilbert space that cool down $S$ when permuted must satisfy 
\begin{gather}
    \frac{T_M}{T_S}\omega < (i_M - j_M)\cdot \Gamma, \label{eq:gen_ineq}
\end{gather}
that is the difference in energy associated with this exchange must be greater than $\frac{T_M}{T_S}\omega$. In this case we say that the populations of the eigenkets $\ket{0_Si_M}$ and $\ket{1_Sj_M}$ are disordered. To find the TLP's needed for performing optimal cooling, one can then consider the strategy of iteratively swapping the largest populations $\ket{1_S j_M}$ with the smallest populations $\ket{0_S i_M}$, as in the example above. Conveniently, in the general case this can be achieved by always considering the TLP's $\ket{0_S i_M} \leftrightarrow \ket{1_S i_M \oplus 1}$, where $\oplus 1$ is a bit-wise addition $\!\!\!\mod \!2$. See Appendix.~\ref{app:Hamming_weight_conj} for the proof. Using Eq.~\eqref{eq:gen_ineq}, a given TLP is therefore necessary for optimal cooling if 
\begin{gather}
    \frac{T_M}{T_S}\omega < (i_M - (i_M \oplus 1) )\cdot \Gamma, \label{eq:hamming_weight_con_ineq}
\end{gather}
which, given that $(i_M \oplus 1) \cdot \Gamma  = \sum^{n}_{l = 1} (i_l \oplus 1 )\gamma_l = \sum^{n}_{l = 1} \gamma_l - \sum_{l} i_l \gamma_l = E_\text{Max} - (i_M \cdot \Gamma)$, where $E_{\rm Max} = \sum_{l=1}^{n} \gamma_l$, simplifies to 
\begin{gather}
    \frac{1}{2}\left(\frac{T_M}{T_S} \omega + E_\text{Max}\right) < i_M\cdot\Gamma. \label{eq:ineq_simp}
\end{gather}

This is our first main result, an inequality that detects every disordered population that must be exchanged to cool $S$ in terms of whether a change in energy is greater than the product of the ratio of temperatures of system and machine and the energy gap of $S$. The quantity $\frac{1}{2}\left(\frac{T_M}{T_S}\omega + E_\text{Max}\right)$ will appear several times in this manuscript and appears to be a constant that dictates many properties of the cooling scenario such as the order of interaction required to cool and how much a given machine is able to cool.

We can note that if the energy level $i_M$ of the machine has an energy $i_M\cdot \Gamma$ above the threshold $\frac{1}{2}\left(\frac{T_M}{T_S}\omega + E_\text{Max}\right)$ and another energy level $l_M$ has energy larger than $i_M$ (i.e. $l_M \cdot \Gamma > i_M \cdot \Gamma$) then $\ket{0_S l_M}$ should also be exchanged for cooling if $\ket{0_S i_M}$ is to be exchanged. In fact, considering the energies of the machine we have the largest energy $E_\text{Max} = 1^n\cdot\Gamma$, smallest energy $0 = 0^n\cdot\Gamma$ and every energy in between given by $i_M\cdot \Gamma$ for $i_M \in \{0,1\}^{\times n}$. Since $(i_M \oplus 1)\cdot \Gamma~=~E_\text{Max}- i_M\cdot \Gamma$, if $i_M\cdot \Gamma$ is above $E_\text{Max}/2$ then $(i_M \oplus 1) \cdot \Gamma$ is below $E_\text{Max}/2$. Therefore we find that $E_\text{Max}/2$ is the median of the energies. This allows us to reinterpret Eq.~\eqref{eq:gen_ineq} as; \textit{if the energy $i_M\cdot \Gamma$ is above the shifted median value $\frac{1}{2}\left(\frac{T_M}{T_S}\omega + E_\text{Max}\right)$ then there exists at least one energy $j_M\cdot \Gamma$ below the median such that exchanging $\ket{0_S i_M}$ and $\ket{1_S j_M}$ cools  $S$.} For more details and visualisations see Appendix~\ref{app:Hamming_weight_conj}.

\paragraph*{A Carnot-\textit{like} bound.} Manipulating Eq.~\eqref{eq:gen_ineq} we surprisingly find that the change in effective energy gap of the system due to a single energy level exchange $\ket{0_S i_M} \leftrightarrow \ket{1_S j_M}$ in the system-machine Hilbert space is upper-bounded by the Carnot efficiency of the system and machine $\eta_c~=~1 - T_M/T_C$. To expose this fundamental bound, divide Eq.~\eqref{eq:gen_ineq} by $\omega$, multiply each side by $-1$ and add 1 to both sides giving 
\begin{gather}
   \frac{\Delta E_S}{E_S} = \frac{\omega - (i_M - j_M)\cdot \Gamma}{\omega} < 1 - \frac{T_M}{T_S}
\end{gather}
where $\Delta E_S = \omega - (i_M - j_M)\cdot \Gamma$ and $E_S = 0 + \omega$. This bound suggests that the change in effective energy gap of the system induced by a global energy ket exchange is upper bounded by the Carnot efficiency of $SM$ which is independent of the energetic structure $\Gamma$ of the $n$-qubit machine. A conventional Carnot efficiency bound cannot be obtained in this scenario as the reduced state of the machine after the cooling interaction will no longer be at its initial temperature nor thermal w.r.t the machine Hamiltonian. It is interesting to see that despite this the Carnot efficiency still bounds the population changes a single TLP can induce. This bound is saturated when $(i_M - j_M)\cdot \Gamma = \frac{T_M}{T_S}\omega$ which corresponds to energy level exchange which results in the smallest change in energy that still cools $S$.

\begin{figure*}[t]
    \includegraphics[width = \textwidth]{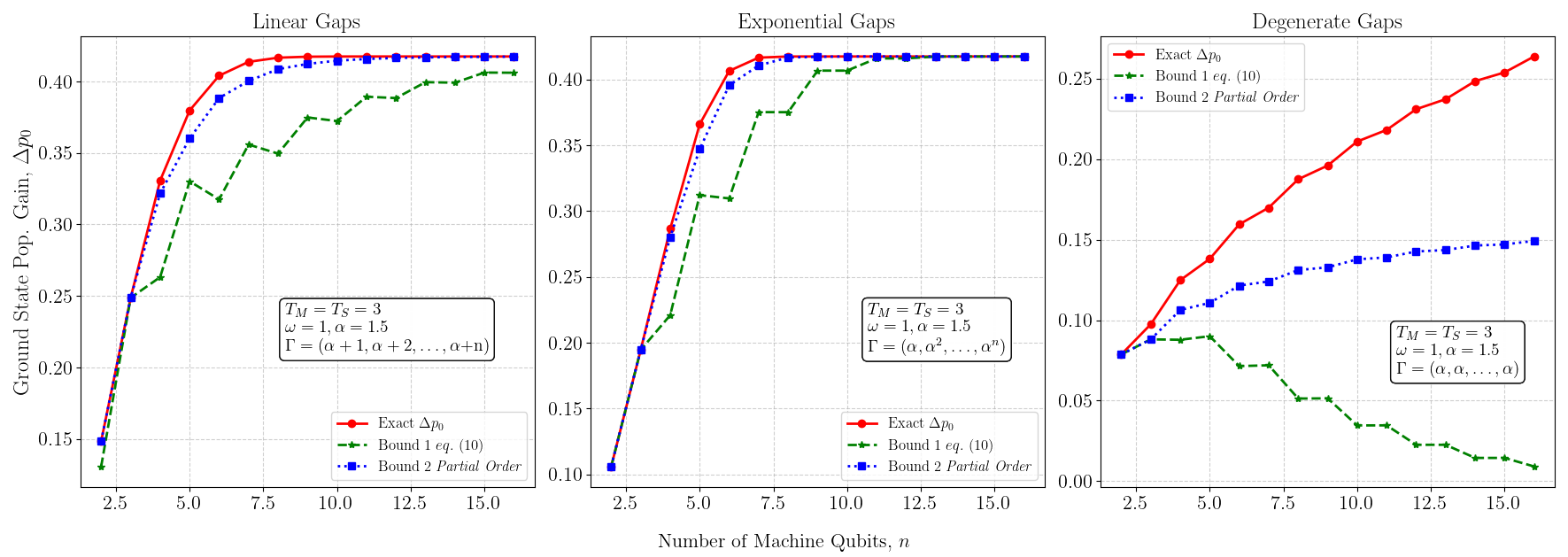}
\caption{In this figure we compare the change in ground-state population of $S$,  $\Delta p_0$, induced by the cooling protocol for different machine energy gap structures as the number of qubits in the machine $n$ increases. We also compare the lower bounds derived to approximate the change in population, in Green we see the performance of the bound given in Eq.~\eqref{eq:lower_bound_pop_change} and in Blue we see the bound calculated algorithmically using the partial order presented in Appendix~\ref{app:swappable_set}. We see that whilst the fixed bound Eq.~\eqref{eq:lower_bound_pop_change} performs well for machines with energetic structures that grow quickly it performs poorly for machines with fixed gaps. This is because Eq.~\eqref{eq:lower_bound_pop_change} captures a fixed set of strings to be exchanged which in the linear and exponential cases seem to capture most exchanges even as $n$ grows but in the degenerate case we see that this fixed set of strings contributes less and less to the growing number of strings which need to be exchanged as $n$ grows. In contrast, the Blue bound is adaptive constructed via the partial order and so performs better in all scenarios.}
\label{fig:pops}
\end{figure*}

\paragraph*{How much can a machine cool?} Armed with our bitwise notation and hierarchy of inequalities we can investigate how much a given machine can cool our system qubit. The figure of merit we are interested in is the optimal change in ground state population of $S$, $\Delta p_0 = p'_0 - p_0$, where $p'_0$ is the ground state population after cooling. Examining $\Delta p_0$ we begin by noticing that the ground state population after cooling can be expressed as a sum of populations of unchanged energy levels and swapped energy levels
\begin{align}
    p'_0 = \hspace{-0.2cm}\underbrace{\sum_{i_M \in \{0,1\}^{\times n} \, \setminus \, \mathbb{S}} \hspace{-0.1cm} \frac{e^{-\beta_M i_M \cdot \Gamma}}{{\mathcal{Z}_S\mathcal{Z}_{M}}}}_\text{unchanged} + \underbrace{\sum_{j_M \in \mathbb{S}} \frac{e^{-\beta_S \omega - \beta_M (j_M \oplus 1)\cdot \Gamma}}{\mathcal{Z}_S\mathcal{Z}_{M}}}_\text{swapped}, \label{eq:post_pop}
\end{align}
where $\mathbb{S}$ is the set of $n$-bit strings satisfying Eq.~\eqref{eq:ineq_simp}. The initial ground state population can be expressed as $p_0 = 1/\mathcal{Z}_S$ and so when subtracting it from $p_0'$ acts as $\mathcal{Z}_M$ when taking the common denominator $\mathcal{Z}_S \mathcal{Z}_M$. Then, it can be be noted that $\mathcal{Z}_M$ can be expanded to be written as a sum over all $n$-bit strings $\mathcal{Z}_M = \sum_{i_M \in \{0,1\}^{\times n}} e^{-\beta_M i_M \cdot \Gamma}$. From here, it can be seen that the $p'_0$ contribution in Eq.~\eqref{eq:post_pop} is nothing more than a sum of two disjoint sets whose union is $\{0,1\}^{\times n}$. Splitting $\mathcal{Z}_M$ into a contribution from $\mathbb{S}$ and its set complement the unchanged populations in Eq.~\eqref{eq:post_pop} cancel to give a sum over $\mathbb{S}$
\begin{align}
    \Delta p_0 &= \sum_{j_M \in \mathbb{S}} \frac{e^{-\beta_S \omega - \beta_M(E_\text{Max} - j_M \cdot \Gamma)} -  e^{-\beta_{M} j_{M} \cdot \Gamma}}{{\mathcal{Z}_S\mathcal{Z}_{M}}},
\end{align}
when simplified. This quantity can always be found if the set $\mathbb{S}$ is known, as discussed below for specific examples. If $\mathbb{S}$ is not completely known, but instead a subset of $\mathbb{S}$ is known, a lower bound can be at least be placed on $\Delta p_0$. For example, take the case that $T_M\omega/T_S < \gamma_1$, and then consider the bit-string $b_M = 10^{n-k-1}1^{k}$ where $k=n/2$ if $n$ is even and $k=(n-1)/2$ if $n$ is odd. Given the assumption that $\Gamma$ is ordered in non-decreasing order, this is a bit-string such that all bit-strings including and above this (in terms of the lexicographic order, cf. Appendix~\ref{app:swappable_set}) are guaranteed to satisfy Eq.~(\ref{eq:ineq_simp}) for all $\Gamma$. By considering performing the swaps associated to just these bit-strings, one finds that 
\begin{align}
        \Delta p_0 \geq \frac{1}{\mathcal{Z}_S \mathcal{Z}^{\Tilde{k}:n}_M}& \big( e^{- \beta_S \omega} - e^{-\beta_M E(\Tilde{k})} \big) ~+ \label{eq:lower_bound_pop_change}\\ 
        &\frac{1}{\mathcal{Z}_S \mathcal{Z}_M} \big( e^{-\beta_S E(\Tilde{k})} - e^{-\beta_S\omega - \beta_M(E_{\rm max} - E(\Tilde{k}))} \big),  \nonumber
\end{align}
where $E(\Tilde{k}) = \sum_{i = \Tilde{k}}^{n} \gamma_i$, with $\Tilde{k}= n/2 + 1$ if $n$ is even and $\Tilde{k}= (n + 3)/2$ if $n$ is odd, and $\mathcal{Z}^{\Tilde{k}:n}_M = \prod_{i=\Tilde{k}}^{n} (1 + e^{-\beta_M \gamma_i})$ is the partition function of the $n - \Tilde{k}$ machine qubits with the largest energy gaps (see Appendix~\ref{app:how_cold} for more details). From this example, we can further this line of thought by considering what is the largest subset of $\mathbb{S}$ that always satisfies Eq.~(\ref{eq:ineq_simp}). We find that this set can be characterized with the help of the lexicographic partial order of bit-strings (cf. Appendix~\ref{app:swappable_set}). Depending on the value of $\omega$ it consists of all the bit-strings larger than $a_M = 1^{2\ell-n}(01)^{n-\ell}$ when $\sum_{i=1}^{2(\ell-1)-n} \gamma_i \leq T_M\omega/T_S  < \sum_{i=1}^{2\ell-n} \gamma_{i}$ (with $\ell = \lceil \frac{n+1}{2}\rceil,\dots,n$). As opposed to the previous bound with $b_M$ we cannot obtain a formula as explicit as Eq.~\eqref{eq:lower_bound_pop_change} for the set composed of the elements larger than $a_M$. However, since the set is fully characterizable as a list of bit-strings, we compare both of the bounds in Fig.~\ref{fig:pops} (Green and Blue respectively) against the actual value of $\Delta p_0$ calculated numerically. 

\section{Cooling Unitaries : Reducibility \& Matching on Graphs}
\label{sec:unitary}
Having developed a framework to detect which populations must be exchanged in the joint system-machine state to optimally cool $S$, what remains to be considered is \textit{how}? As we've seen, optimally cooling $S$ given $M$ is simple; \textit{find the machine energy levels $i_M \in \{0,1\}^{\times n}$ satisfying $\frac{1}{2}(\frac{T_M}{T_S}\omega + E_\text{Max}) < i_M \cdot \Gamma$ and then perform a set of TLPs to exchange all $\ket{0_S i_M} \leftrightarrow \ket{1_S i_M \oplus 1}$}. However, denoting the sets of energy levels to be swapped as $\ket{A} = \{\ket{0_S i_M} \in \mathbb{S}\}$ and $\ket{B} = \{\ket{1_S i_M \oplus 1}\}$, it can be seen that we have the freedom to pair each $\ket{0_S i_M} \in \ket{A}$ with any $\ket{1_S j_M} \in \ket{B}$ and obtain the same increase in ground state population, $\Delta p_0$. There are therefore at least $|\mathbb{S}|!$ unitaries of the form $U_\text{cool} = \ketbra{A}{B} + \ketbra{B}{A} + \mathbb{1}_\text{Rest}$ that can cool $S$ to the same temperature.\footnote{This is because we have two lists of size $|\mathbb{S}|$. Starting with one element in $\ket{A}$ we have $|\mathbb{S}|$ choices to pair it with in $\ket{B}$, whereas the next element has $|\mathbb{S}|-1$ possibilities and so on giving $|\mathbb{S}|\times |\mathbb{S}|-1 \times |\mathbb{S}|-2 \times \dots \times 1 = |\mathbb{S}|!$ possible pairings in general. Other unitaries may be considered that cool optimally but also act on levels outside $\ket{A}$ without any effect on the cooling.}
But, these unitaries, whilst all valid for cooling, are far from equivalent in implementation or thermodynamic cost. In this section we will begin by examining the order of interaction necessary to carry out $U_\text{cool}$ for different machines and then move on to represent the problem of choosing a $U_\text{cool}$ under some cost constraints via a minimum weight perfect matching on bipartite graphs~\cite{godsil01}.

\subsection{Reducibility of Machines} 

Within the setting we have described it is perhaps natural to think that increasing the number of qubits in the machine is always helpful. And indeed it is, as increasing the number of qubits in the machine increases $E_\text{Max}$. However, one can ask whether it is only the increased energy that improves the cooling, or also the extra dimensions in the Hilbert space. In other words, we must investigate whether a cooling operation involving $S$ and an $n$-qubit machine $M$ is ever reducible to an operation involving just some $k < n$ coldest qubits of the machine. In which case, the energetic gap structure $\Gamma$ extended to $n$ qubits appears dimensionally \textit{wasteful} and could be reduced to a machine with less qubits involving only a subset of $\Gamma$. To begin this investigation, let's look at the most egregiously wasteful kind of machine, an $n$ qubit machine that is reducible to a bipartite interaction involving the coldest qubit in the machine and $S$. Such a machine can be said to be $(n-1)$-reducible, meaning that its warmest $n-1$ qubits do not participate in the cooling process. Mathematically, this means that $U_\text{cool}$ acts as the identity on these $n-1$ qubits leaving them as free indices. This would mean that the set of energy levels to be exchanged can be expressed $\mathbb{S}~=~\{\ket{0_S i^{n-1}1}~\,|~\, i^{n-1} \in \{0,1\}^{\times n -1}\}$ and so every $\ket{0_S i^{n-1}1}$ satisfies Eq.~\eqref{eq:ineq_simp}, such that
\begin{gather}
    \hspace{-0.25cm}\frac{1}{2}\left(\frac{T_M}{T_S}\omega + E_\text{Max}\right) < i^{n-1}1\cdot \Gamma \,\,\, \forall \, i^{n-1} \in \{0,1\}^{\times n-1}.
\end{gather}
Most stringently, this must also be valid for $i^{n-1} = 0^{n-1}$ where only the qubit with gap $\gamma_n$ contributes to cooling. For this string, the above condition becomes 
\begin{gather}
    \frac{1}{2}\left(\frac{T_M}{T_S}\omega + E_\text{Max}\right) < \gamma_n, \label{eq:n-1_red}
\end{gather}
that is \textit{a cooling scenario involving a machine whose largest gap is larger than half the sum of all the gaps in the machine and the constant $T_M/T_S\omega$ is reducible to a bipartite interaction} (see Appendix~\ref{app:reduc} for more details). This is an unexpected outcome where we see that if the functional form of $\Gamma$ grows steeply enough with $n$ then all that matters for cooling is the energetic gap of the largest qubit, and rest of the machine is \textit{useless}. Whilst surprising, we already encountered this situation in our introductory example in Sec.~\ref{sec:example} when only one of two qubits in the two qubit machine was involved in the cooling operation when $\omega < \gamma_2 - \gamma_1$. A quick calculation shows this is precisely the $(n-1)$-reducibility condition Eq.~\eqref{eq:n-1_red} for $T_M = T_S$ with a $2$ qubit machine. Whilst this condition is quite restrictive, it arises naturally in one of the examples we have considered, exponential machines. An $n$ qubit machine with an exponential gap structure is one whose gap structure grows exponentially with $n$ such that $\Gamma = (\gamma, \gamma^2,\dots, \gamma^n)$. Applying Eq.~\eqref{eq:n-1_red}, we find that a sufficient condition for an exponential machine to be ($n-1$)-reducible is $\gamma > (T_M/T_S)\omega$ with $\gamma > 1$ (cf. Appendix~\ref{app:n-1_red}).
\begin{figure}[h]
    \centering
\tikzset{every picture/.style={line width=0.75pt}} %set default line width to 0.75pt   
\scalebox{0.9}{
 \begin{tikzpicture}
 [x=0.75pt,y=0.75pt,yscale=-1,xscale=1]
%uncomment if require: \path (0,300); %set diagram left start at 0, and has height of 300

% Connection
\draw [color={rgb, 255:red, 74; green, 144; blue, 226 }  ,draw opacity=1 ]   (284.98,154) -- (296.23,154)(284.98,157) -- (296.23,157) ;
\draw [shift={(304.23,155.5)}, rotate = 180] [color={rgb, 255:red, 74; green, 144; blue, 226 }  ,draw opacity=1 ][line width=0.75]    (10.93,-3.29) .. controls (6.95,-1.4) and (3.31,-0.3) .. (0,0) .. controls (3.31,0.3) and (6.95,1.4) .. (10.93,3.29)   ;
% Connection
\draw [color={rgb, 255:red, 65; green, 117; blue, 5 }  ,draw opacity=1 ]   (366.23,153.85) -- (381.39,153.77)(366.24,156.85) -- (381.41,156.77) ;
\draw [shift={(389.4,155.23)}, rotate = 179.72] [color={rgb, 255:red, 65; green, 117; blue, 5 }  ,draw opacity=1 ][line width=0.75]    (10.93,-3.29) .. controls (6.95,-1.4) and (3.31,-0.3) .. (0,0) .. controls (3.31,0.3) and (6.95,1.4) .. (10.93,3.29)   ;
% Connection
\draw [color={rgb, 255:red, 74; green, 144; blue, 226 }  ,draw opacity=1 ]   (196.98,154.16) -- (214.98,154.11)(196.99,157.16) -- (214.99,157.11) ;
\draw [shift={(222.98,155.59)}, rotate = 179.84] [color={rgb, 255:red, 74; green, 144; blue, 226 }  ,draw opacity=1 ][line width=0.75]    (10.93,-3.29) .. controls (6.95,-1.4) and (3.31,-0.3) .. (0,0) .. controls (3.31,0.3) and (6.95,1.4) .. (10.93,3.29)   ;

%Straight Lines [id:da5201903325847583] 
\draw [color={rgb, 255:red, 74; green, 144; blue, 226 }  ,draw opacity=1 ]   (205.84,190.95) -- (192.66,174.89)(208.16,189.05) -- (194.98,172.98) ;
\draw [shift={(188.75,167.75)}, rotate = 50.64] [color={rgb, 255:red, 74; green, 144; blue, 226 }  ,draw opacity=1 ][line width=0.75]    (10.93,-3.29) .. controls (6.95,-1.4) and (3.31,-0.3) .. (0,0) .. controls (3.31,0.3) and (6.95,1.4) .. (10.93,3.29)   ;
%Straight Lines [id:da18212516789833066] 
\draw [color={rgb, 255:red, 74; green, 144; blue, 226 }  ,draw opacity=1 ]   (224.84,188.05) -- (237.77,172.24)(227.16,189.95) -- (240.1,174.14) ;
\draw [shift={(244,167)}, rotate = 129.29] [color={rgb, 255:red, 74; green, 144; blue, 226 }  ,draw opacity=1 ][line width=0.75]    (10.93,-3.29) .. controls (6.95,-1.4) and (3.31,-0.3) .. (0,0) .. controls (3.31,0.3) and (6.95,1.4) .. (10.93,3.29)   ;
%Straight Lines [id:da5085699319237452] 
\draw [color={rgb, 255:red, 65; green, 117; blue, 5 }  ,draw opacity=1 ]   (340.71,191.27) -- (330.54,174.14)(343.29,189.73) -- (333.12,172.61) ;
\draw [shift={(327.75,166.5)}, rotate = 59.3] [color={rgb, 255:red, 65; green, 117; blue, 5 }  ,draw opacity=1 ][line width=0.75]    (10.93,-3.29) .. controls (6.95,-1.4) and (3.31,-0.3) .. (0,0) .. controls (3.31,0.3) and (6.95,1.4) .. (10.93,3.29)   ;
%Straight Lines [id:da6506444664165931] 
\draw [color={rgb, 255:red, 65; green, 117; blue, 5 }  ,draw opacity=1 ]   (369.32,189.33) -- (383.14,171.63)(371.68,191.17) -- (385.51,173.48) ;
\draw [shift={(389.25,166.25)}, rotate = 128] [color={rgb, 255:red, 65; green, 117; blue, 5 }  ,draw opacity=1 ][line width=0.75]    (10.93,-3.29) .. controls (6.95,-1.4) and (3.31,-0.3) .. (0,0) .. controls (3.31,0.3) and (6.95,1.4) .. (10.93,3.29)   ;
%Straight Lines [id:da07330870887115626] 
\draw    (270.73,229.85) -- (255.59,215.79)(272.77,227.65) -- (257.63,213.59) ;
\draw [shift={(250.75,209.25)}, rotate = 42.88] [color={rgb, 255:red, 0; green, 0; blue, 0 }  ][line width=0.75]    (10.93,-3.29) .. controls (6.95,-1.4) and (3.31,-0.3) .. (0,0) .. controls (3.31,0.3) and (6.95,1.4) .. (10.93,3.29)   ;
%Straight Lines [id:da7660618036203161] 
\draw    (311,227.88) -- (325.53,214.95)(313,230.12) -- (327.52,217.19) ;
\draw [shift={(332.5,210.75)}, rotate = 138.32] [color={rgb, 255:red, 0; green, 0; blue, 0 }  ][line width=0.75]    (10.93,-3.29) .. controls (6.95,-1.4) and (3.31,-0.3) .. (0,0) .. controls (3.31,0.3) and (6.95,1.4) .. (10.93,3.29)   ;
%Straight Lines [id:da9724365280642137] 
\draw [color={rgb, 255:red, 0; green, 0; blue, 0 }  ,draw opacity=1 ][line width=1.5]  [dash pattern={on 1.69pt off 2.76pt}]  (205.67,140.67) -- (205.5,169.75) ;
%Straight Lines [id:da45247035201696306] 
\draw [line width=1.5]  [dash pattern={on 1.69pt off 2.76pt}]  (376.25,140.75) -- (376.5,170) ;

% Text Node
\draw (291.73,239) node    {$|0_{S} i_{1} i_{2} 1\rangle $};
% Text Node
\draw (223.73,198.5) node  [color={rgb, 255:red, 74; green, 144; blue, 226 }  ,opacity=1 ]  {$|0_{S} i_{1} 01\rangle $};
% Text Node
\draw (355.23,198.5) node  [color={rgb, 255:red, 65; green, 117; blue, 5 }  ,opacity=1 ]  {$|0_{S} i_{1} 11\rangle $};
% Text Node
\draw (253.98,155.5) node  [color={rgb, 255:red, 74; green, 144; blue, 226 }  ,opacity=1 ]  {$|0_{S} 10 1\rangle $};
% Text Node
\draw (335.23,155.5) node  [color={rgb, 255:red, 65; green, 117; blue, 5 }  ,opacity=1 ]  {$|0_{S} 011\rangle $};
% Text Node
\draw (420.4,155.08) node  [color={rgb, 255:red, 65; green, 117; blue, 5 }  ,opacity=1 ]  {$|0_{S} 111\rangle $};
% Text Node
\draw (165.98,155.75) node  [color={rgb, 255:red, 74; green, 144; blue, 226 }  ,opacity=1 ]  {$|0_{S} 001\rangle $};
% Text Node
\draw (391,235) node [anchor=north west][inner sep=0.75pt]   [align=left] {2 - reducible};
% Text Node
\draw (389.5,193.25) node [anchor=north west][inner sep=0.75pt]   [align=left] {1 - reducible};
% Text Node
\draw (454.5,146) node [anchor=north west][inner sep=0.75pt]   [align=left] {irreducible};
% Text Node
\draw (199.17,124.83) node [anchor=north west][inner sep=0.75pt]   [align=left] {{\footnotesize (i)}};
% Text Node
\draw (368.33,126) node [anchor=north west][inner sep=0.75pt]   [align=left] {{\footnotesize (ii)}};
% Connection
\draw [color={rgb, 255:red, 74; green, 144; blue, 226 }  ,draw opacity=1 ]   (256.23,197) -- (314.73,197)(256.23,200) -- (314.73,200) ;
\draw [shift={(322.73,198.5)}, rotate = 180] [color={rgb, 255:red, 74; green, 144; blue, 226 }  ,draw opacity=1 ][line width=0.75]    (10.93,-3.29) .. controls (6.95,-1.4) and (3.31,-0.3) .. (0,0) .. controls (3.31,0.3) and (6.95,1.4) .. (10.93,3.29)   ;
\end{tikzpicture}
}
    \caption{Diagram of reducibility inequalities for a 3-qubit machine showing the partial order in the inequalities, where $\ket{0_S i_M} \implies \ket{0_S j_M}$ means that if $\ket{0_S i_M} \in \mathbb{S}$, then so is $\ket{0_S j_M} \in \mathbb{S}$. Although, the reverse implication does not necessarily hold. }
    \label{fig:3_qubi_ineq}
\end{figure}
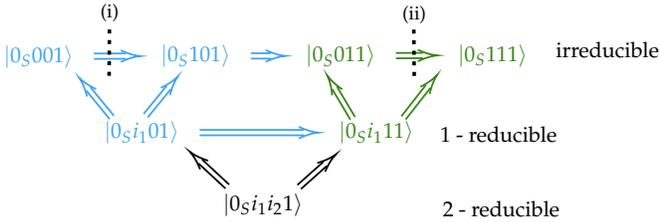

Following this line of reasoning, one can find conditions for a machine to be $k$ - reducible by examining under which conditions all the energy levels $\ket{0_S i^k 1^{n-k}}$ need to be exchanged such that the $k$ bits in $i^k$ are free indices, so that the unitary acts as the identity on the $k$ qubits corresponding to these $k$ bits. This situation is not straightforward, as one must consider $\ket{0_S i^k l^{n-k}}$ for every possible choice of $l^{n-k} \in \{0,1\}^{\times n-k}$. Consider Fig~\ref{fig:3_qubi_ineq}, which visualises the structure of the partially ordered set of inequalities for a 3-qubit machine. This poset has 3 levels corresponding to the machine being 2 - reducible, meaning the optimal cooling interaction involves only the coldest qubit in the machine, 1-reducible, involving the coldest two qubits in $M$, and irreducible, involving all the qubits in $M$. If the machine is 2-reducible, that is $\ket{0_S i_1 i_2 1}$ satisfies Eq.~\eqref{eq:ineq_simp} for any $i_1,i_2 \in \{0,1\}$, then every energy level in the hierarchy is also a member of $\mathbb{S}$ and must be exchanged. Ascended one level of the hierarchy we find two scenarios, only one of which leads to 1-reducibility. It must be the case that $\ket{0_S i_1 11} \in \mathbb{S}$ whilst $\ket{0_S i_1 01} \notin \mathbb{S}$ for $i_1 \in \{0,1\}$, preventing 2-reducibility. If $\ket{0_S i_1 01} \in \mathbb{S}$ then we have $\ket{0_S i_1 11} \in \mathbb{S}$ also, leading again to 2-reducibility. At the final level, two scenarios lead to irreducibility. Firstly in situation (i) in Fig~\ref{fig:3_qubi_ineq} we have $\ket{0_S001} \notin \mathbb{S}$ and $\ket{0_S101} \in \mathbb{S}$ implying that $\ket{0_S011}, \ket{0_S111} \in \mathbb{S}$ also. Whilst this gives $\ket{0_S i_1 11} \in \mathbb{S}$ for $i_1 \in \{0,1\}$, we have $\ket{0_S101} \in \mathbb{S}$ but $\ket{0_S 001} \notin \mathbb{S}$ and so any exchange involving $\ket{0_S101}$ is 4-partite. Secondly, the machine is irreducible if $\ket{0_S011} \notin \mathbb{S}$ and $\ket{0_S111} \in \mathbb{S}$, condition (ii) in Fig~\ref{fig:3_qubi_ineq}. The machine must be irreducible in this case as any exchange involving $\ket{0_S111} \in \mathbb{S}$ cannot involve $\ket{0_S011} \notin \mathbb{S}$ i.e. $i_1$ cannot be a free index. Any TLP that cools scenario (ii) acting on $\ket{0_S111}$ necessarily acts on all qubits in the cooling scenario. This begins to elucidate the complex hierarchical structure of the energetic inequalities we have explored in this manuscript, in Appendix~\ref{app:reduc} we give a visualisation for the 4 qubit machine hierarchy of inequalities and formal definitions for $k$-reducibility in general. The general structure of the partial order of these inequalities is explored in Appendix~\ref{app:swappable_set}.

\paragraph*{Irreducible Machines} While general $k$-reducibility conditions are difficult to obtain, we are able to fully characterise scenarios under which a machine is irreducible. This is arguably the most important reducibility criterion because it teaches us how to design \textit{useful} machines where adding more qubits always leads to more potential for cooling. Such machines exemplify the 3rd Law of Thermodynamics~\cite{taranto_23} as increase in the number of constituents of the machine and so its dimension leads to further cooling.\\
An irreducible machine must use every qubit for cooling, even its warmest qubit. This implies that $i_1$ cannot be a free index in $U_\text{cool}$, meaning that there must exists at least one $l^{n-1} \in \{0,1\}^{\times n - 1}$ such that $\ket{0_S 1 l^{n-1}} \in \mathbb{S}$ whilst $\ket{0_S 0 l^{n-1}} \notin \mathbb{S}$. If not true, then the joint action of all these permutations renders the first qubit of the machine useless, making the machine reducible. The first condition $\ket{0_S 1 l^{n-1}} \in \mathbb{S}$ imposes the upper bound $\frac{1}{2}(\frac{T_M}{T_S}\omega + E_\text{Max}) < 1l^{n-1} \cdot \Gamma$, whilst the second condition $\ket{0_S 0 l^{n-1}} \notin \mathbb{S}$ imposes the lower bound $0l^{n-1} \cdot \Gamma \leq \frac{1}{2}(\frac{T_M}{T_S}\omega + E_\text{Max})$. Taken together, we come to the condition that \textit{an $n$ qubit machine with gap structure $\Gamma$ at temperature $T_M$ acts irreducibly when cooling a qubit at temperature $T_S$ with gap $\omega$ if there exists at least one $l^{n-1} \in \{0,1\}^{\times n - 1}$ such that}
\begin{gather}
0 \leq  \frac{1}{2}\left(\frac{T_M}{T_S}\omega + E_\text{Max}\right) - l^{n-1} \cdot \Gamma_{2:n} < \gamma_1, \label{eq:irreducibility}
\end{gather}
where $\Gamma_{k:n} = (\gamma_k, \gamma_{k+1}, \ldots, \gamma_n)$ such that $E(l^{n-1}) = l^{n-1} \cdot \Gamma_{2:n}$ can be considered an energy of the $n-1$ coldest qubits. Intuitively, one can understand this inequality to mean that there is at least one energy level $l^{n-1}$ of the $n-1$ coldest machine qubits which is only large enough to optimally cool if $\gamma_1$ is included -- exchanging $l^{n-1}$ without a contribution from the warmest qubit would not cool $S$. It is worth noting that is clear from eq.(\ref{eq:irreducibility}) that by investing work and effectively cooling $S$ by increasing $\omega$ and keeping $\Gamma$ constant, one can make any irreducible machine reducible by trading unitary complexity for thermodynamic work. We investigate this inequality in further depth in Appendix~\ref{app:reduc}.

\paragraph*{Degenerate Machines are always irreducible.} A first application of Eq.~\eqref{eq:irreducibility} is to show when a machine of identical qubits, as typically considered in algorithmic and dynamical cooling scenarios~\cite{schulman_vazirani,schulman_limits,karen_dynamical_2011,bassman_campisi_24}, is irreducible. For a machine of $n$ qubits at temperature $T_M$, each with gap $\gamma$, note that the energy of any level $i_M$ in the machine can be expressed as $E(i_M) = t\gamma$ where $t = \#(i_M)$ is the number of 1s in $i_M$. The irreducibility condition Eq.~\eqref{eq:irreducibility} then takes the form 
\begin{gather}
    -(n-2t) \gamma \leq \frac{T_M}{T_S}\omega \leq 2\gamma - (n-2t)\gamma, \label{eq:deg_machine_irreducible}
\end{gather}
where $E_\text{Max} = n \gamma$ and $t \in [0,n-1]$. Each $t$ gives a condition that is some interval on the real-numbers into which $(T_M/T_S) \omega$ must fall. In the degenerate case, these intervals capture the whole number-line $[0, n \gamma]$, meaning whenever $(T_M/T_S) \omega \leq n \gamma$, one can always find a $t$ such that the above condition is satisfied $\square$.  Under the reasonable assumptions that $T_M \leq T_S$ and $\omega \leq \gamma$, degenerate machines are therefore always irreducible. See Appendix~\ref{app:degen_reduc} for more details.
\subsection{Cost Optimised Cooling Unitaries from Minimum Weight Perfect Matchings}
In the prior section, we considered what order of interaction was necessary for optimal cooling given the set $\mathbb{S}$. Here, we present a framework to identify which of the $\vert \mathbb{S} \vert !$ unitaries of the form $U_\text{cool} = \ketbra{A}{B} + \ketbra{B}{A} + \mathbb{1}_\text{Rest}$ should be used to implement the cooling protocol given some additional constraint. An agent may wish to optimise different costs of their cooling protocol such as heat dissipated into their machine, work cost invested to implement $U_\text{cool}$ or the gate complexity of the protocol. While we focus on gate complexity, our methods can be applied to any such cost function. A natural way to characterise implementations of $U_\text{cool}$ is in terms of a decomposition of the unitary into a product of quantum gates. To do so, one needs to consider a gate set into which a given unitary can be decomposed. We focus on the set $\mathcal{G} = \{\mathtt{X}, \mathtt{CNOT}, \mathtt{Toffoli}, \mathtt{C^{3}X}, \dots, \mathtt{C^{n-1}X}\}$, which suffices since the unitaries we are interested in generating are permutations of the energy eigenbasis which w.l.o.g were chosen to be in the $\mathtt Z$ basis. A figure of merit for the complexity of $U_{\rm cool}$, related to the classical notion of time complexity, is then the \textit{gate complexity}. This the minimum number of gates from $\mathcal{G}$ that are required to implement $U_{\rm cool}$.

\paragraph*{Hamming Weight lower bounds Gate Complexity} Let $S_{j,k}$ denote the TLP unitary that exchanges the $n$ qubit eigenket $\ket{j}$ for the eigenket $\ket{k}$ expressed as $S_{j,k} = \ketbra{j}{k} + \ketbra{k}{j} + \mathbb{1}_{\rm Rest}$. For the gate set $\mathcal{G}$, we can show that the minimum number of gates required to implement $S_{j,k}$ using $\mathcal{G}$, denoted $\mathcal{C}_{\mathcal G}(S_{j,k})$, is lower bounded by
\begin{equation}
    \mathcal{C}_{\mathcal G}(S_{j,k}) \geq  D_H(j,k), \label{eq:lower_bound_gate_complexity}
\end{equation}
where $D_{H}(j,k)$ is the Hamming distance between the bit-strings $j$ and $k$. To prove this statement, first note that \hbox{$D(\ket{i}, U_{g} \ket{i}) \leq 1 ~ \forall ~U_{g} \in \mathcal{G}$} and $\ket{i}$, where $D(\ket{i}, \ket{j}) = D(i,j)$. This is due to each element of $\mathcal{G}$ only changing the Hamming weight of a basis state, that is the number of $1$s in its identifying bit-string, by at most one. This then leads to a change in distance between the input and output bit string of at most one. Consider now some $S_{j,k}$, where $D(j,k) = l$. This can be decomposed as $S_{j,k} = U_1 U_2 \ldots U_l$ where $U_i \in \mathcal{G}~\forall~i$. Given that $S_{j,k}\ket{j}=\ket{k}$, it must be the case that $l\geq D(j,k)$, as each $U_i$ can only change the distance by at most one and since $l = \mathcal{C}_{\mathcal G}(S_{j,k})$ by definition, the statement holds. $\square$

\begin{figure}[t]
    \centering
 \begin{tikzpicture}[scale=1, baseline=(current bounding box.center)]
    % Define positions for left (0xxx) and right (1xxx) nodes
    \def\xleft{0}
    \def\xright{3}

    % LEFT SIDE NODES (0xxx)
    \node[draw, circle, fill=lred] (A) at (\xleft, 3) {$0011$};
    \node[draw, circle, fill=lred] (B) at (\xleft, 2) {$0111$};
    \node[draw, circle, fill=lred] (C) at (\xleft, 1) {$0101$};
    % RIGHT SIDE NODES (1xxx)
    \node[draw, circle, fill=lred] (E) at (\xright, 3) {$1010$};
    \node[draw, circle, fill=lred] (F) at (\xright, 2) {$1000$};
    \node[draw, circle, fill=lred] (G) at (\xright, 1) {$1100$};
    % DRAW EDGES: Complete bipartite connections from left to right
    \foreach \leftnode in {A,B,C} {
        \foreach \rightnode in {E,F,G} {
            \draw[thick] (\leftnode) -- (\rightnode);
        }
    };
    \draw[orange,thick] (A) -- (E);
    \draw[orange,thick] (B) -- (F);
    \draw[orange,thick] (C) -- (G);
    \end{tikzpicture}
    \caption{An example of a complete bipartite graph $G(K,K\oplus 1, E)$ for a cooling scenario involving a 3-qubit machine as detailed in Appendix~\ref{app:example_2} (case 4). Each perfect matching in this bipartite graph corresponds to a choice of $U_\text{cool}$ that will lead to the same final ground state population from the system. The minimum weight perfect matching on this graph (orange) corresponds to the unitary $U^*_\text{cool}$ that minimises the Hamming weight across pairings as seen using the cost matrix Eq.~\eqref{eq:cost_mat}.}
    \label{fig:mwpm}
\end{figure}
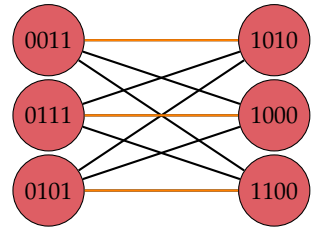
\paragraph*{Minimum Weight Perfect Matching} The above result implies that one could find a unitary that both optimally cools and potentially minimises the gate complexity of $U_{\rm cool}$ by choosing a set of TLPs that satisfy the energetic inequalities but also minimise the Hamming distance between pairs $\ket{0_S i_M} \in \ket{A}$ and $\ket{1_S j_M} \in \ket{B}$. We are therefore interested in finding the unitary $U^*_\text{cool}$ such that,
\begin{gather}
U^*_\text{cool} = \sum_{(\ket{a},\ket{b}) \in \mathcal{M}} \ketbra{a}{b} + \ketbra{b}{a} +\mathbb{1}_\text{Rest} \, \text{with} \\ \mathcal{M} = \bigg\{ (\ket{a}, \ket{b}) \in \ket{A} \times \ket{B} \bigg| \, \min_{\substack{\ket{a} \in \ket{A} \\ \ket{b} \in \ket{B}}} \sum_{\substack{\ket{a} \in \ket{A} \\ \ket{b} \in \ket{B}}} d_H(a,b) \bigg\}, \nonumber
\end{gather}
where $\mathcal{M}$ is a set of 2-tuples or pairs of bit strings labelling energy levels whose Hamming distances sum to give the minimum over all possible pairings. Note, this pairing is the opposite of the strategy we employed to \textit{identify} the smallest set of TLPs necessary for optimal cooling. There, we found the TLPs that maximised the Hamming weight distance. This emphasises the point that once we have identified the necessary TLP using that strategy, we are free to implement any of the $|\mathbb{S}|!$ alterative pairings which all achieve optimal cooling but perform differently relative to different costs.  

To find $U_{\rm cool}^{*}$, consider the complete bipartite graph $G(K,K\oplus 1, E)$, where each node on the left subgraph $K$ represents a bit-string in $\mathbb{S}$, and each node on the right subgraph $K\oplus1$ represents a bit-string in $\mathbb{S} \oplus 1$. Each node in $K$ is then connected to every other node in $K\oplus 1$ such that an edge represent a potential energy level exchange or TLP. See Appendix~\ref{app:hypercube_graph} for full details on constructing this graph. A matching $M$ on such a graph is a collection of edges of the graph where each vertex which intersects with the collection intersects precisely with one edge, a matching is said to be perfect if each vertex in a graph intersects with it. That is, a perfect matching connects every vertex in a graph to exactly one other vertex in the graph. We visualise such a complete bipartite graph for $\vert \mathbb{S} \vert = 3$ in Fig.~\ref{fig:mwpm} for the example of a $3$-qubit machine, as detailed in Appendix~\ref{app:example_2} (case 4). It can be shown that by definition a perfect matching on $G(K,K\oplus 1, E)$ corresponds to a suitable $U_\text{cool}$ as it pairs each disordered ground state energy level $\ket{0_S i_M} \in \mathbb{S}$ to exactly one $\ket{1_S j_M}$ which need be exchanged for cooling, exhausting all of $\mathbb{S}$. See Appendix~\ref{app:hypercube_graph} for the proof. If we add weights to the edges of a graph then a \textit{minimum weight perfect matching} (MWPM) is the perfect matching with pairings that minimise the sum of the edge weights in the perfect matching. By weighting each edge in $G(K,K\oplus 1, E)$ with the Hamming distance between the bit-strings labelling the nodes which intersect it, we find that \textit{a minimum weight perfect matching on the graph corresponds to a $U^*_\text{cool}$}. In achieving this correspondence we reformulate the task of finding an optimal cooling unitary subject to some additional constraint as that of finding minimum weight perfect matchings on complete bipartite graphs. A detailed proof of this correspondence is given in Appendix~\ref{app:hypercube_graph}.

Finding MWPMs is possible through the \textit{Hungarian Algorithm}~\cite{comb_opt}. The algorithm begins by assigning a cost to every possible pairing and creating a matrix $C$ of these costs. In this case, this matrix can be created by assigning $a_j \in \ket{A}$ to the columns of the matrix and $b_i \in \ket{B}$ to the rows of the matrix and computing the entries of the matrix as $c_{ij} = D_H(b_i,a_j)$. For the example in Fig.\ref{fig:mwpm}, this gives
\begin{gather}
C =
\begin{pNiceMatrix}[first-row,first-col]
    & \ket{0011} & \ket{0101} & \ket{0111}       \\
\ket{1100} & 4   & \textcolor{lred}{2}     & 3   \\
\ket{1010} & \textcolor{lred}{2}   & 4     & 3   \\
\ket{1000} & 3  & 3     & \textcolor{lred}{4}   \\
\end{pNiceMatrix}.
\label{eq:cost_mat}
\end{gather}
From C, a minimum perfect weight matching corresponds to picking $3$ entries from this matrix using every row and column once such that the sum of these $3$ entries is as low as possible. Here, this gives the unique choice of the cooling unitary that might also minimise gate complexity as $U^*_\text{cool} = \ketbra{0101}{1100} + \ketbra{1100}{0101} + \ketbra{0011}{1010} + \ketbra{1010}{0011} + \ketbra{0111}{1000} + \ketbra{1000}{0111} + \mathbb{1}_\text{Rest}$. More generally, the MWPM on $G(K,K\oplus1, E, d_H)$ will not be unique. While in this section we have focused on optimising the Hamming distance between energy levels being exchanged, as a proxy for gate complexity, there is nothing which prevent us from optimising other constraints by simply choosing another cost function $C(v_i,v_j)$ such as work or dissipated heat with which to weight the edges intersecting nodes $v_i \in K, v_j \in K \oplus1$. Indeed, adding multiple constraints e.g. Hamming distance and energetic cost can lead to a unique choice of cooling unitary, as we discuss in Appendix~\ref{app:hypercube_graph}. This offers a new way of designing cooling unitaries in quantum thermodynamics which both gives insights; into the design of circuits for cooling due to its decomposed structure considering pairs of energy levels to be exchanged at a time, and to the optimisation of these circuits relative to a cost function of interest.
\subsection{Connections to Other Work \& Applications}\textit{Algorithmic cooling} (AC)~\cite{schulman_vazirani, Park2016, naye_comparison} is a collection of techniques to cool a single thermal qubit with access to $n$ identical thermal qubits and the ability to refresh these qubits with a thermal bath. As such, most AC protocols are divided into two parts; 1. \textit{entropy compression}: the process of unitarily ordering the joint-system machine populations in decreasing order and 2. \textit{relaxation}: where the machine qubits are reset to their initial temperature with access to a thermal bath. These two steps are repeated for many rounds, cooling the target system. In this context, our work gives a general framework for the analysis of entropy compression step. Namely, we give three new insights to this part of the procedure: (i) techniques used for entropy compression such as the partner-pair algorithm~\cite{schulman_vazirani,Park2016} are heuristic and non-constructive whilst our inequalities and minimum weight perfect matching methods are constructive. (ii) The heuristic methods used in algorithmic cooling do not cater for optimisation relative to costs e.g. work cost, gate complexity whilst the MWPM method provided can be used to optimise cooling unitaries for any cost function. (iii) Our setting generalises the scenario from $n$ identical thermal qubits to $n$ thermal qubits with arbitrary energetic structure where the lexicographic order heuristically used in the partner-pair algorithm is no longer enough and the partial order discussed in this work must be considered.

\textit{Quantum Error Correction}--Several works in this field have pointed out a connection between cooling and error correction~\cite{temme2015faststabilizerhamiltoniansthermalize,Reiter2017,stab_ham,deNeeve2022,dissipative_stab,Shtanko2025boundsautonomous}. Despite experimentally implementing cooling protocols for error correction~\cite{Reiter2017,deNeeve2022} and theoretical works in dissipative quantum error correction investigate how cooling to the ground state of the Hamiltonian given by the negative sum of the stabilisers of the code~\cite{temme2015faststabilizerhamiltoniansthermalize,stab_ham,dissipative_stab} naturally restores the system to the codespace-- a connection with the state of the art of quantum thermodynamics theory is lacking. In Appendix~\ref{Ap:quantum_error_correction} we explore how the techniques presented in our work can be applied to this scenario by rephrasing the three qubit error correcting code as a cooling problem, setting the stage for future work.

\textit{Autonomous Thermal Machines}-- Much of the literature in quantum thermodynamics in the past decade has focused on cooling via the \textit{virtual qubit subspace swap}~\cite{skrzypczyk_10,brunner_12,ralph_swap,clivaz_prl_2019,Mitchison_review} where the $\ket{0_S 1_M} \leftrightarrow \ket{1_S 0_M}$ exchange is carried out. This exchange carries the most powerful population transfer which in cooling strategies using refreshed machines, asymptotically determines how much one can cool~\cite{brunner_12,clivaz_prl_2019}. In this work we have clarified using our inequalities, how many other population exchanges $\ket{0 \, i_M}, \ket{1 \, i_M \oplus 1}$ with $E(i_M)$ satisfying our main inequality, play a role in a cooling scenario. This opens the door for more efficient cooling protocols using a small finite number of refreshed machines rather than asymptotically many. Embedding the virtual qubit subspace swap in Lindbladian dynamics where it is the effective steady state interaction has also inspired autonomous thermal machines used for refrigeration~\cite{skrzypczyk_10}, timekeeping~\cite{qtm_timekeeping} and thermometry~\cite{qtm_thermometer}. Recent works~\cite{barosik_sciadv, xuereb2025} evem explore using such an exchange towards a computational goal. Since our work identifies cooling operations beyond the virtual qubit subspace swap, we anticipate that these energy level permutations can also be embedded in Lindbladian dynamics leading to more intricate autonomous thermal machines which can perform more complex tasks than those based around the virtual qubit subspace swap.

\textit{Role of Quantum Coherence \& Purification}--In this research we chose to focus on the simplistic setup of machines consisting of thermal qubits in a separable state, allowing in principle only classical correlations to play a role in cooling. An interesting direction to further this work would be to generalise to more complicated machines featuring higher-dimensional systems such as qudits, resonators, many-body systems and, importantly, quantum correlations~\cite{taranto_23,PhysRevX.5.041011,yada2024measuringmultipartitequantumcorrelations,PhysRevLett.132.140402}. In this direction, in Appendix~\ref{appendix: Generalisation to Arbitrary States}, we show how our framework and techniques can be adapted to the task of amplifying the purity of an arbitrary qubit state with access to $n$ other arbitrary qubit states.

\textit{Cooling \& Preserving Symmetries}--In recent years, the topic of thermalisation of quantum systems which preserve non-Abelian symmetries or charges has been a prominent point of inquiry~\cite{Majidy2023,Majidy2024}. Considering these works provokes one to ponder how preserving symmetries impacts an agent's ability to cool a quantum system within the framework we have presented. This question was pursued in~\cite{silva2024optimalunitarytrajectoriescommuting} where it was observed that respecting symmetries restrists that set of unitaries which can cool a quantum system. In Appendix~\ref{Ap:cooling_w_symmetries} we investigate this question within our framework for energy and total spin conservation and similarly find that the set of cooling unitaries reduces under these constraints.

\section{Conclusion}
In this work, we took the folkloric idea that cooling a quantum system given access to another consists of only reordering the eigenvalues of the joint state in descending order--and examined the structure of this sorting in detail. We found that the intricacies of this rearrangement is determined by a simple set of inequalities which depend on macroscopic properties of the cooling scenario. Namely the global temperature of the machine, its total energy $E_\text{Max}$ and temperature and energy of the system, elucidating the physics behind the heuristic sorting approach to cooling. These inequalities also allow an agent to cool the system without considering the whole system-machine energy density matrix by simply checking whether a given energy level of the machine is above the threshold value $\frac{1}{2}(T_M\omega/T_S + E_\text{Max})$. 

We were also able to show that these inequalities are sensitive to what number of qubits in the machine are contributing to an optimal cooling interaction with $S$. Here, we found that machines whose gaps increase exponentially are generally fully reducible to an interaction involving only their coldest qubit ($n-1$-reducible). And, we were able to verify that degenerate machines are generally irreducible, that is all their qubits contribute to optimally cooling $S$. This irreducibility criterion should prove especially helpful when designing quantum thermal machines for cooling quantum systems as it allows one to ensure that the whole machine they have constructed will be useful at the cooling task at hand. This is especially relevant as autonomous thermal machines are beginning to be experimentally realised~\cite{fridge_16,Aamir2025} with the hope of being deployed to cool quantum systems. 

Lastly, we provided a systematic way of decomposing cooling unitaries into a set of two-level permutations. This led us to provide a novel graph representation for cooling scenarios which is explored further in the appendices. Importantly, this graph representation inspired us to consider the optimisation of cooling unitaries relative to any cost function of interest as a perfect matching problem on bipartite graphs. In doing so we have further solidified the ability to represent quantum cooling protocols as quantum circuits~\cite{bassman_campisi_24,bassmann} and developed a new tool for their optimisation. 

The study of the theoretical limits to cooling a system has been central to every era of thermodynamic inquiry in the past centuries, be that the Carnot cycle for heat engines~\cite{Carnot1824} or the Doppler limit for laser cooling~\cite{laser_cooling}. We are now in the qubit era, forced to imagine if a quantum thermodynamic system could be fully controlled and acted upon using a quantum computer what the limits and methods of cooling would be in this scenario. Here we have offered a further step towards contending with the challenge our era of thermodynamic inquiry faces.

\label{sec:discussion}

\paragraph*{Acknowledgements}
\paragraph*{Acknowledgements}
J.X. acknowledges Josef Lauri who taught graph theory at the University of Malta and Xandru Mifsud \& Adriana Baldacchino for their hospitality and pointing out that the graphs in earlier versions of this work were hypercube graphs. The authors thank Nayeli Briones Rodriguez for enlightening discussions on the state of the art in algorithmic cooling and Paul Skrzypczyk, Chung-Yun Hsieh \& Fabien Clivaz for insightful discussions. The authors acknowledge TU Wien Bibliothek for financial support through its Open Access Funding Programme. B.S.~acknowledges support from UK EPSRC (EP/SO23607/1). J.X., M.H. and P.B. acknowledge funding from the European Research Council (Consolidator grant ‘Cocoquest’ 101043705) and financial support from the Austrian Federal Ministry of Education, Science and Research via the Austrian Research Promotion Agency (FFG) through the project FO999914030 (MUSIQ) funded by the European Union – NextGenerationEU.
A.R. acknowledges support from the the Swiss National Science Foundation for funding through Postdoc.Mobility (Grant No. P500PT225461). The code for generating the presented plots can be found at~\cite{code}.

\providecommand{\noopsort}[1]{}\providecommand{\singleletter}[1]{#1}%

\clearpage

\onecolumngrid
\appendix
\begin{center}
    \Large \bfseries Technical Matter
\end{center}
\vspace{1em}
\let\addcontentsline\oldaddcontentsline %

\begingroup
\parskip=0pt
\setcounter{tocdepth}{2}
\tableofcontents
\endgroup

\section{Hamming Weight Conjugate Energy Exchanges}
\label{app:Hamming_weight_conj}

The central mathematical object of this work is the inequality 
\begin{gather}
    \frac{T_M}{T_S}\omega < (i_M - j_M)\cdot\Gamma, \label{eq:gen_ineq_app}
\end{gather}
which when valid indicates that exchanging the energy levels $\ket{0_S i_M}$ and $\ket{1_S j_M}$ in the joint state of system at temperature $T_M$ with gap $\omega$ and $n$ qubit machine at temperature $T_M$ with gaps $\Gamma = (\gamma_1, \gamma_2, \dots, \gamma_n)$, cools $S$. In much of the manuscript we set $j_M = i_M \oplus 1$, firstly because it allows us to index the ground and excited state subspaces using a single index and secondly because it is the exchange of $\ket{0_S i_M}$ and $\ket{1_S i_M \oplus1}$ which determines whether there is any other $\ket{1_S j_M}, \, j_M \in \{0,1\}^{\times n}$ which when exchanged with $\ket{0_S i_M}$ cools $S$ also. In this Appendix we will explain why this is the case, both from the perspective of the energetic structure of the problem and from the mathematical perspective of swapping the largest entry with the smallest entry across the median of a vector. 

\subsection{A Strategy for detecting Disordered Populations}
\tikzset{
	ncbar angle/.initial=90,
	ncbar/.style={
		to path=(\tikztostart)
		-- ($(\tikztostart)!#1!\pgfkeysvalueof{/tikz/ncbar angle}:(\tikztotarget)$)
		-- ($(\tikztotarget)!($(\tikztostart)!#1!\pgfkeysvalueof{/tikz/ncbar angle}:(\tikztotarget)$)!\pgfkeysvalueof{/tikz/ncbar angle}:(\tikztostart)$)
		-- (\tikztotarget)
	},
	ncbar/.default=0.05cm,
}
\tikzset{square left brace/.style={ncbar=0.05cm}}
\tikzset{square right brace/.style={ncbar=-0.05cm}}

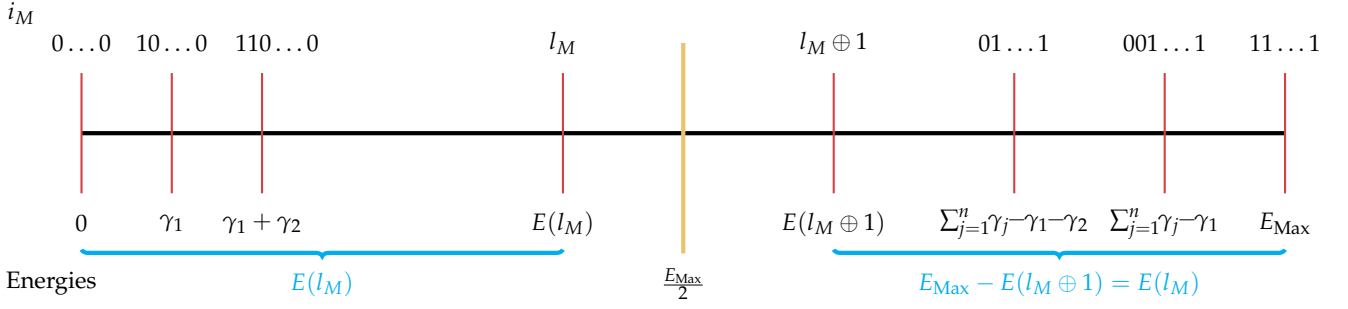
\begin{figure}[h]
	\begin{tikzpicture}[scale = 4]
	\centering
	%number line
	\draw[ultra thick] (0,0) -- (4,0);
	%closed intervals
	\draw[red, thick] (4,-0.2) -- (4,0.2);
    \draw [red, thick] (0,-0.2) -- (0,0.2);
    \draw [red, thick] (0.3,-0.2) -- (0.3,0.2);
	\draw [red, thick] (3.6,-0.2) -- (3.6,0.2);
	\draw [red, thick] (0.6,-0.2) -- (0.6,0.2);
	\draw [red, thick] (3.1,-0.2) -- (3.1,0.2);
	\draw [red, thick] (2.5,-0.2) -- (2.5,0.2);
	\draw [red, thick] (1.6,-0.2) -- (1.6,0.2);
	\draw[ultra thick, yellow] (2,-0.4) -- (2,0.3);
	%labels
    \node[] at (-0.1, -0.5) {Energies};
    \node[] at (-0.2, 0.4) {$i_M$};
    \node[] at (0, 0.3) {$0\dots0$};
    \node[] at (0.3, 0.3) {$10\dots0$};
    \node[] at (0,-0.3) {$0$};
    \node[] at (0.65,0.3){$110\dots0$};
	\node[] at (0.3,-0.3) {$\gamma_1$};
    \node[] at (0.6,-0.3) {$\gamma_1 + \gamma_2$};
    \node[] at (1.6,0.3) {$l_M$};
    \node[] at (1.6,-0.3) {$E(l_M)$};
	\node[] at (3.6,0.3) {$001\dots1$};
    \node[] at (3.6,-0.3) {$\sum^{n}_{j = 1} \hspace{-0.1cm} \gamma_j \hspace{-0.1cm}- \hspace{-0.1cm} \gamma_1$};
	\node[] at (3.1,0.3) {$01\dots1$};
    \node[] at (3.1,-0.3) {$\sum^{n}_{j = 1} \hspace{-0.1cm}\gamma_j \hspace{-0.1cm} - \hspace{-0.1cm}\gamma_1 \hspace{-0.1cm} - \hspace{-0.1cm} \gamma_2$};
	\node[] at (2.5,0.3) {$l_M \oplus 1$};
    \node[] at (2.5,-0.3) {$E(l_M \oplus 1)$};
    \node[] at (4,-0.3) {$E_\text{Max}$};
    \node[] at (4,0.3) {$11\dots1$};
	\draw[decoration={brace,raise=1ex,}, decorate,ultra thick, cyan] (1.6,-0.35) -- (0,-0.35);
    \draw[decoration={brace,raise=1ex,mirror}, decorate,ultra thick, cyan] (2.5,-0.35) -- (4,-0.35);
    \node[] at (3.25,-0.5) {\textcolor{cyan}{$E_\text{Max} - E(l_M\oplus1) =E(l_M)$}};
    \node[] at (0.8,-0.5) {\textcolor{cyan}{$E(l_M)$}};
    \node[] at (2,-0.5) {$\frac{E_\text{Max}}{2}$};
	\end{tikzpicture}
\caption{In this figure, we visualise the energy levels of the machine seeing that i) Hamming weight conjugate pairs are equally distant from the extremal energies i.e. $E(l_M)$ is equally away from $0$ as $E(l_M \oplus 1)$ is from $E_\text{Max}$  and ii) Hamming weight conjugate pairs fall on one side or the other of the median energy value $E_\text{Max}/2$.}
\label{fig:energetic_line}
\end{figure}
\paragraph*{An Informal Argument.} Considering Hamming weight conjugate pairs by setting $\ket{1_S j_M} = \ket{0_S i_M \oplus 1}$ gives us a systematic way to go through all energy levels of the system and machine to find the disordered populations which must be swapped. In the inequality Eq.~\eqref{eq:gen_ineq_app} which detects which energy levels should be exchanged, the right-hand side only features the machine energy levels $i_M$ and $j_M$. By making this choice in indexing and recalling that the entries of $\Gamma$, $\gamma_j \leq \gamma_{j+1}$ are in non-decreasing order, we are exposing ourselves to a choice in the energetic ordering of these machine energy levels. As visualised in Fig~\ref{fig:energetic_line} the bitstring $\ket{l_M}$ with energy $E(l_M) = l_M\cdot\Gamma = \sum^{n}_{j = 1} x_j \gamma_j$ where $x_j$ is the $j$th bit in $l_M$, is energetically $E(l_M)$ away from 0 while the energy of its Hamming weight conjugate $E(l_M \oplus 1)$ is also $E(l_M)$ from the maximum energy $E_\text{Max} = \sum^{n}_{j = 1} \gamma_j$ as 
\begin{gather}
    E(l_M \oplus 1) = E_\text{Max} - E(l_M).
\end{gather}
With this observation we note that if $E(l_M)$ is lower than the median energy value $E_\text{Max}/2$ then $E(l_M \oplus 1)$ must be above it that is $E(l_M) <  E_\text{Max}/2$ implies $E(l_M \oplus 1)  > E_\text{Max}/2$ since $E(l_M \oplus 1) = E_\text{Max} - E(l_M)$.

Perhaps most importantly, if one chooses to exchange $\ket{0_S i_M}$ with $\ket{1_S i_M \oplus 1}$ proceeding lexicographically through $i_M \in \{0,1\}^{\times n}$ pairing Hamming weight conjugate kets in the inequality \textit{one always pairs the energetically furthest eigenkets available}. We begin with $|1^n - 0^n|\cdot \Gamma = E_\text{Max}$ then $|1^{n-1}0 - 0^{n-1}1|\cdot \Gamma = E_\text{Max} - \gamma_n$ and so on moving in decreasing size of energy difference until $|10^{n-1} - 01^{n-1}|\cdot \Gamma = -E_\text{Max} + \gamma_1$ and $|0^{n} - 1^{n}|\cdot \Gamma = -E_\text{Max}.$\\
Thus if $\ket{0_S i_M}$ does not lead to a valid inequality when paired with $\ket{1_S i_M \oplus 1}$ there can be no other energy level $\ket{1_S j_M}$ of a lower energy in the excited state subspace worth swapping with as $|i_M - j_M|\cdot\Gamma >|i_M - i_M\oplus1|\cdot \Gamma$ since $j_M$ corresponds to a lower energy than $i_M\oplus1$ but $\frac{T_M}{T_S}\omega \geq |i_M - i_M\oplus1|\cdot \Gamma$ by assumption. Whilst machine eigenkets of higher energy would have already been paired up. In this way, pairing Hamming weight conjugate energy levels exhausts the set of machine levels and detects all disordered populations required to be exchanged for cooling. For a mathematically clearer and complete exposition to this strategy see below. 

Lastly, as we mention in Sec~\ref{sec:unitary} while this pairing is natural and useful to consider for detecting disordered population it might not lead to the pairing for obtaining a gate complexity optimal cooling unitary.

\subsubsection*{Heuristic Cooling Algorithm}

Given that the initial state of the system and machine is diagonal in the energy eigenbasis, all the populations can be collected into two vectors $\bm{q}_0$ and $\bm{q}_1$ such that 
\begin{equation}
    p_0 = \sum_{i=1}^{2^{n}} q_i, ~ ~  q_i \in \bm{q}_0, \hspace{1cm} p_1 = 1 - p_0 = \sum_{i=1}^{2^{n}} l_i, ~ ~  l_i \in \bm{q}_1,
\end{equation}
where $p_0$ and $p_1$ are the initial ground and excited state populations of the system qubit. A heuristic way to perform optimal cooling would be to first order $\bm{q}_0$ and $\bm{q}_1$ in non-decreasing order, giving $\bm{q}^{\downarrow}_0$ and $\bm{q}^{\downarrow}_1$ respectively. One can then compare the first component of $\bm{q}^{\downarrow}_1$ with the final component of $\bm{q}^{\downarrow}_0$ and perform a permutation between the two levels if the first component of $\bm{q}^{\downarrow}_1$ is greater than the final component of $\bm{q}^{\downarrow}_0$. If valid, this will perform the optimal exchange for increasing the ground state population of the system qubit in a single two level permutation. Next, one compares the second component of $\bm{q}^{\downarrow}_1$ with the second to last component of $\bm{q}^{\downarrow}_0$ and performs a permutation between the two levels if the second component of $\bm{q}^{\downarrow}_1$ is greater than the second to last component of $\bm{q}^{\downarrow}_0$, and so on. Each comparison of components will lead to a two-level permutation that represents the new optimal energy level exchange for increasing the ground state population (given that the previous permutations have been performed). If one reaches the $k$th component of $\bm{q}^{\downarrow}_1$ and it is not greater than the $(2^{n-1} - k)$th component of $\bm{q}^{\downarrow}_0$, one has successfully moved the populations such that the ground state population of the system qubit is maximised and, hence, they have performed optimal cooling via unitary dynamics. See Fig.~\ref{fig:heuristic_cooling_alg} for a pictorial depiction of the heuristic cooling algorithm

In total, one swaps the $k$ largest elements of $\bm{p}_1$ with the $k$ smallest elements of $\bm{p}_0$, where the size of $k$ is dependent on the system and machine being used. Note, once the $k$ elements that must be swapped are known, it does not matter for obtaining the largest attainable ground state population which of the $k$ elements of $\bm{q}_0$ are swapped for the $k$ elements of $\bm{q}_1$. All that is important is that all $k$ elements are swapped.  

\begin{figure}
    \centering
    \includegraphics[scale=0.8]{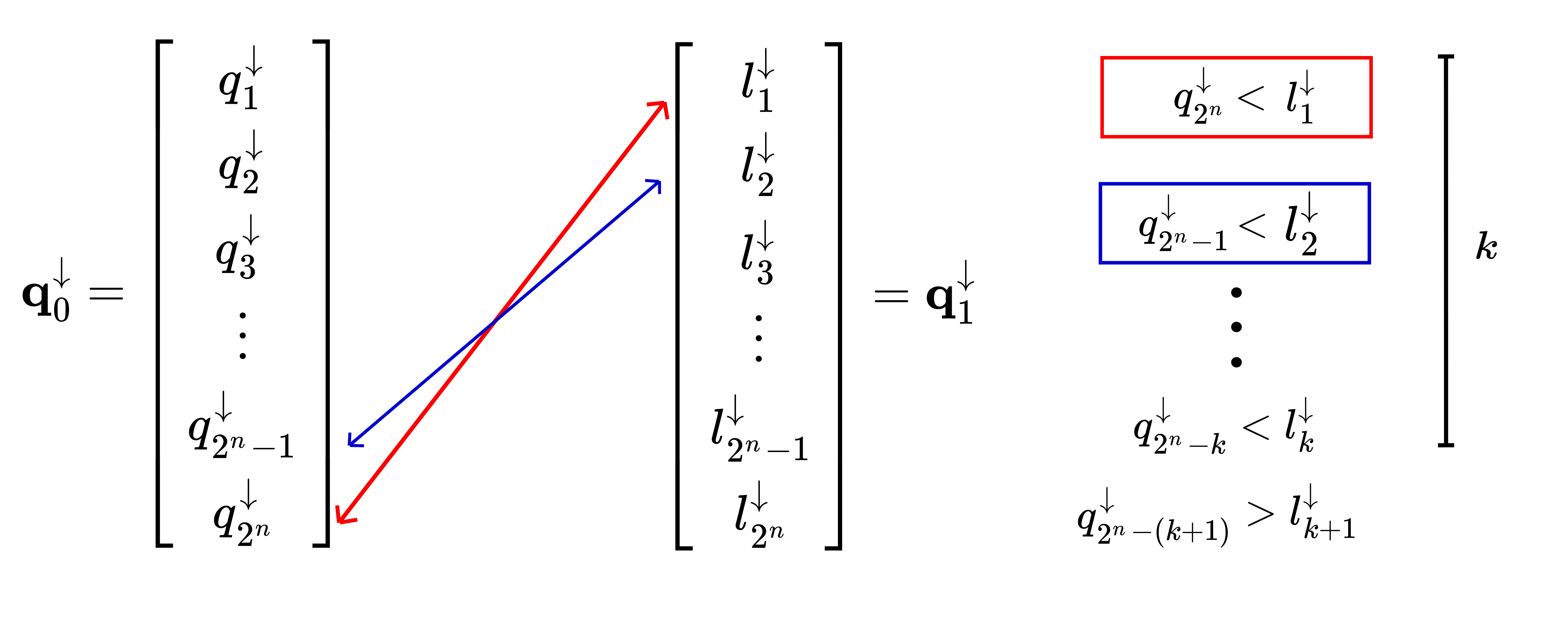}
    \caption{A diagram of the heuristic cooling algorithm. The vectors of populations are first placed in non-decreasing order. The final component of the ground state population vector is compared to the first component of the excited state population vector and a swap between the levels is perform if the former is larger. This process is repeated until a point is reached where the component from the ground state population vector is larger than the component from the excited state population vector. Given at each stage the largest remaining component from the ground state population vector is swapped with the smallest remaining component of the excited state population vector, it represents the optimal (in terms of increasing the system ground state) two-level permutation that can be performed.}
    \label{fig:heuristic_cooling_alg}
\end{figure}

\subsubsection*{Cooling Without Populations}  

\begin{figure}
    \centering
    \includegraphics[scale=0.6]{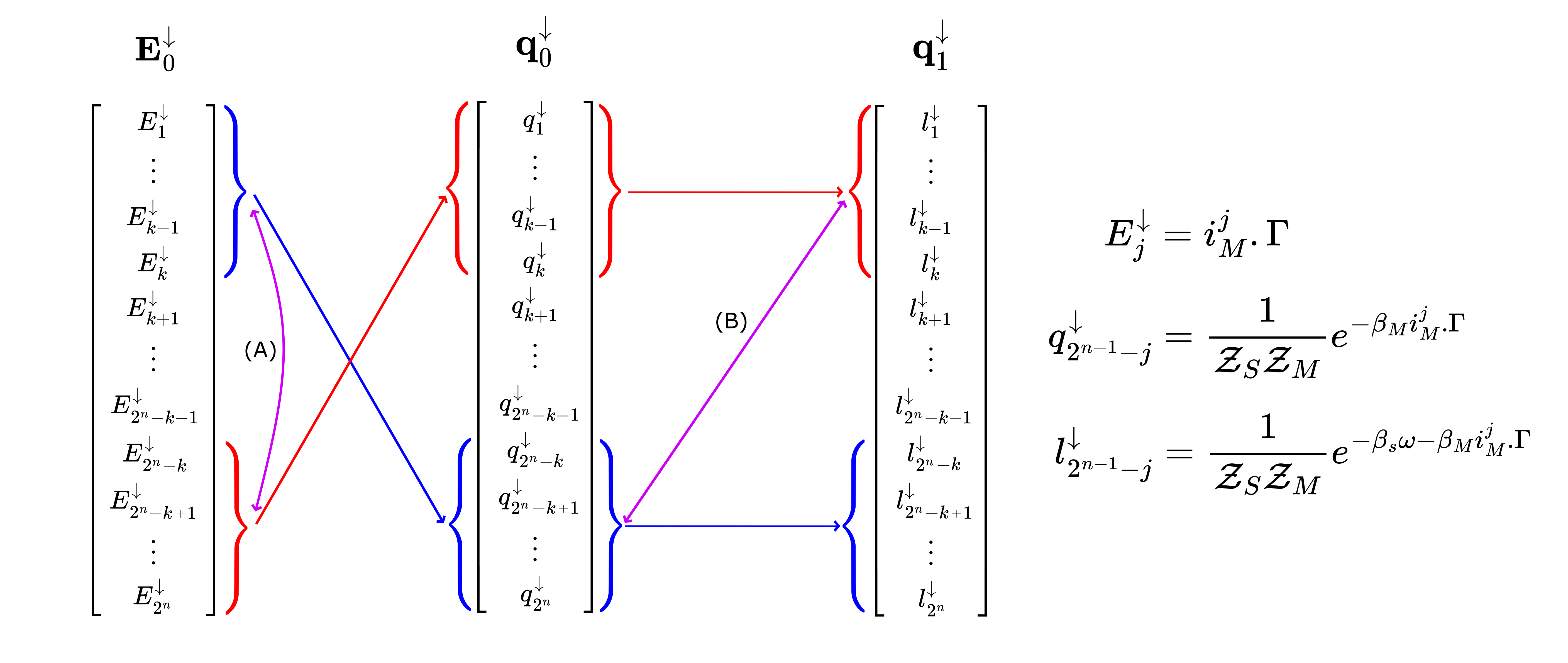}
    \caption{A figure detailing the proof strategy used in Lemma.~\ref{Lemma:coolingSets}. One first considers a vector of energies, $\bm{E}_0$, ordered in non-decreasing order and generated from the ordered population vectors. Firstly, it can be shown that if $i_M$ identifies the $k$th largest energy in $\bm{E}_0$, then $(i_M \oplus 1)$ identifies the $(2^{n}-k)$ largest energy in $\bm{E}_0$. Then, given that $\bm{E}_0$ and $\bm{p}_0$ have reverse orderings, if $i_M$ identifies the $k$th largest energy in $\bm{E}_0$, it identifies the $(2^{n}-k)$th largest population in $\bm{p}_0$. Next, as $\bm{p}_0$ and $\bm{p}_1$ have the same ordering, it can be concluded that if $i_M$ identifies the $(2^{n}-k)$th largest population in $\bm{p}_0$, it identifies the $(2^{n}-k)$th largest population in $\bm{p}_1$ as well. Hence, the relationships found between the $n$-bit-strings identifying the $k$ largest and smallest energies of $\bm{E}_0$ also apply to the $n$-bit-strings identifying the $k$ smallest populations of $\bm{p}_0$ and the $k$ largest populations of $\bm{p}_1$. Red and blue lines represent components of the respective vectors which are identified by the same $n$-bit-strings. The (A) purple line represents the relationship proved between the sets of $n$-bit strings identifying the $k$ largest and smallest energies of $\bm{E}_0$ that can then be used to infer the relationship between the sets of $n$-bit strings identifying the $k$ smallest populations of $\bm{p}_0$ and $k$ largest populations of $\bm{p}_1$, represented by the (B) purple line.}
    \label{fig:proof_stratergy_figure}
\end{figure}

Now, given some qubit at temperature $\beta_S$ with energy gap $\omega$, and access to some machine at temperature $\beta_M$ with energetic gap structure $\Gamma = (\gamma_1, \gamma_2, \ldots, \gamma_n)$, \textit{is it possible to find which two level swaps must be performed without first calculating all the populations and then ordering them}? 

To do this, we first note that each element of $\bm{q}_0$ and $\bm{q}_1$ can be identified with an $n$-bit-string, $i_M$, as
\begin{equation}
        \bm{q}_0 = \bigg\{ \frac{e^{-\beta_M i_{M} \cdot \Gamma}}{\mathcal{Z}_{S}\mathcal{Z}_M} : i_M \in \{0,1\}^{\times n} \bigg\}, \hspace{0.5cm} \bm{q}_1 = \bigg\{ \frac{e^{-\beta_S \omega-\beta_M i_{M} \cdot \Gamma}}{\mathcal{Z}_{S}\mathcal{Z}_M} : i_M \in \{0,1\}^{\times n} \bigg\}.
\end{equation}
Therefore, the $k$ elements of $\bm{q}_0$ that must be swapped can be identified with a set of $n$-bit-strings, as can the $k$ elements of $\bm{q}_1$ that they must be swapped with. Moreover, from the above definitions it can easily be seen that $\bm{q}_1 = e^{-\beta_S \omega} \bm{p}_0$, and hence that $\bm{q}^{\downarrow}_1 = e^{-\beta_S \omega} \bm{p}^{\downarrow}_0$ --- multiplying each element of an ordered list by a positive constant does not change the ordering. Therefore, finding the set of $n$-bit-strings that identify the $k$ largest elements of $\bm{q}_1$ is equivalent to finding the set of $n$-bit-strings that identify the $k$ largest elements of $\bm{q}_0$. One can therefore consider only $\bm{q}_0$ and aim to find the set of $n$-bit-strings that identify the smallest $k$ elements of $\bm{q}_0$, denoted by $\mathbb{S}$, and the set of $n$-bit-strings that identify the largest $k$ elements $\bm{q}_0$, denoted by $\overline{\mathbb{S}}$. From the above reasoning, the bit-strings that identify the $k$ largest elements of $\bm{q}_0$, $\overline{\mathbb{S}}$, will also then identify the $k$ largest elements of $\bm{q}_1$.  We can then consider the following Lemma, which states that if the bit-string $i_M$ is in the set of $n$-bit-strings that identify one of the $k$ smallest populations of $\bm{q}_0$, then its conjugate, $i_M \oplus 1$, is in the set of $n$-bit-strings that identify one of the $k$ largest populations of $\bm{q}_1$.

\begin{lemma} \label{Lemma:coolingSets}
    If $i_M \in \mathbb{S}$, then $i_M \oplus 1 \in \overline{\mathbb{S}}$ where $i_M \in \{0,1\}^n$.
\end{lemma}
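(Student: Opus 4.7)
The plan is to reduce the claim to a purely energetic statement and then exploit that the Hamming-conjugate map $i_M \mapsto i_M \oplus 1$ reflects the energy function $E(i_M) = i_M \cdot \Gamma$ about the midpoint $E_{\max}/2$. Three facts, all already in place, will carry the argument: (i) $\bm{q}_1 = e^{-\beta_S\omega}\bm{q}_0$, so the two population vectors share the same ordering and $\overline{\mathbb{S}}$ can be computed from $\bm{q}_0$ alone; (ii) the function $E \mapsto e^{-\beta_M E}/(\mathcal{Z}_S\mathcal{Z}_M)$ is strictly decreasing, so small populations correspond to large energies and vice versa; (iii) the identity $E(i_M \oplus 1) = E_{\max} - E(i_M)$, noted in the main text, says the conjugate map is an involution that flips the energy axis around $E_{\max}/2$.

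Given these, I would list the bit-strings $i_{(1)},\,i_{(2)},\ldots,i_{(2^n)}$ in non-decreasing order of $E$. By (i)--(ii), the $k$ smallest populations of $\bm{q}_0$ correspond to the $k$ largest energies, so $\mathbb{S} = \{i_{(2^n-k+1)},\ldots,i_{(2^n)}\}$, while $\overline{\mathbb{S}} = \{i_{(1)},\ldots,i_{(k)}\}$ comprises the $k$ smallest energies. Applying the conjugate map to $\mathbb{S}$ and using (iii), the image set has energies
\begin{equation}
    \{E_{\max} - E(i_{(2^n-k+1)}),\,\ldots,\,E_{\max} - E(i_{(2^n)})\},
\end{equation}
which are exactly the $k$ smallest values taken by $E$, i.e.\ the energies of $\overline{\mathbb{S}}$. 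Since $i \mapsto i \oplus 1$ is a bijection, the image has cardinality $k$ and must therefore coincide with $\overline{\mathbb{S}}$ as a set. In particular, any $i_M \in \mathbb{S}$ satisfies $i_M \oplus 1 \in \overline{\mathbb{S}}$, which is the claim. A clean way to package this is a short bijection/counting argument on the pairs $\{i_M,\, i_M \oplus 1\}$: every such pair contributes one element above $E_{\max}/2$ and one below it, so any selection of the $k$ highest-energy strings is matched under conjugation to a selection of the $k$ lowest-energy strings.

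The one obstacle I anticipate is the handling of energy degeneracies in $\Gamma$, which make the rankings $i_{(1)},\ldots,i_{(2^n)}$ (and hence $\mathbb{S}$, $\overline{\mathbb{S}}$) non-unique up to permutations within a degenerate block. I would dispose of this by noting that permutations inside a degenerate block do not affect the optimal ground-state population and that the conjugate map preserves energy in the sense of (iii); thus any tie-breaking rule used on $\mathbb{S}$ induces a compatible tie-breaking rule on $\overline{\mathbb{S}}$ via the involution. Equivalently, one can restate the lemma as "there exists a choice of $\overline{\mathbb{S}}$ such that $i_M \oplus 1 \in \overline{\mathbb{S}}$ for every $i_M \in \mathbb{S}$," which follows directly from the reflection argument and is sufficient for all downstream uses of the result.
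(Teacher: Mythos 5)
Your argument is correct and follows essentially the same route as the paper's proof: pass from populations to energies (using that $\bm{q}_1 = e^{-\beta_S\omega}\bm{q}_0$ shares the ordering of $\bm{q}_0$), exploit the reflection identity $E(i_M\oplus 1)=E_{\rm Max}-E(i_M)$ to map the $k$ largest-energy strings onto the $k$ smallest-energy strings, and translate back; your sorted-list/bijection packaging is just a mild rephrasing of the paper's $E_{\sup}$/$E_{\inf}$ comparison. Your explicit remark on tie-breaking under energy degeneracies is a small bonus the paper leaves implicit, but it does not change the substance of the argument.
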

\begin{proof}
    To begin, we first convert the vector of populations to a vector of energies 
    \begin{equation}
        \bm{E}_0 = - \frac{1}{\beta_S} \ln{\big(\mathcal{Z}_S \mathcal{Z_M} \bm{q}_0 \big)}  = \{ i_M \cdot \Gamma: i_M \in \{0,1\}^n \}. \label{equ:populationToEnergy}
        % \bm{E}_0 =  - \frac{\mathcal{Z}_S \mathcal{Z}_M}{\beta_M}  \ln{\big( \bm{q}_0 \big)} = \{ i_M \cdot \Gamma: i_M \in \{0,1\}^n \}. \label{equ:populationToEnergy}
    \end{equation}
    The constant $-1/ \beta_M \ln{(\mathcal{Z}_S \mathcal{Z_M})}$ does not effect the ordering. Then, given that $1 / \beta_M$ is a positive constant and $\ln({a}) \leq \ln{(b)}$ if $a \leq b$, $-\bm{E}_0$ has the same ordering as $\bm{q}_0^\downarrow$; the negative then flips the ordering. Therefore, the set of $n$-bit-strings that identify the $k$ smallest values of $\bm{q}_0$ identify the $k$-largest values of $\bm{E}_0$, and vice versa for the $k$ largest values of $\bm{q}_0$. Hence, $\mathbb{S}$ now identifies the largest $k$ values of $\bm{E}_0$, and $\overline{\mathbb{S}}$ the $k$ largest of $\bm{E}_1$ (defined as in (Eq.~\ref{equ:populationToEnergy}) but using $\bm{q}_1$). However, such labelling is irrelevant given one only cares about the relationship between the $n$-bit-strings in the sets. 

    Next, we note that 
    \begin{equation}
        i_M \cdot \Gamma + (i_M \oplus 1) \cdot \Gamma = E_{\rm Max}, ~ ~ ~ {\rm where} ~ ~ ~ E_{\rm Max} = \sum_{i=1}^{n} \gamma_i,
    \end{equation}
    meaning that the component of $\bm{E}_0$ identified by the bit-string $i_M$ is paired to the component of $\bm{E}_0$ identified by the bit-string $(i_M \oplus 1)$. Evidently, the larger $i_M \cdot \Gamma$, the smaller $(i_M \oplus 1) \cdot \Gamma$ and vice versa. From here, it can already be seen that if a $n$-bit-string leads to one of the $k$ largest values of $\bm{E}_0$, its conjugate will lead to one of the $k$ smallest. More formally, one can define 
    \begin{equation}
        \begin{split}
            E_{\sup} &= \min_{i_M \in \mathbb{S}} i_M \cdot \Gamma = i_{\sup} . \Gamma, \\
            E_{\inf} &= \max_{i_M \in \{0,1\}^n \setminus \mathbb{S}} i_M \cdot \Gamma = i_{\inf} \cdot \Gamma, 
        \end{split}
    \end{equation}
    such that $E_{\sup}$ is the smallest component of $\bm{E}_0$ of the $k$ largest components (acting as a kind of supremum), and $E_{\inf}$ is the largest component of $\bm{E}_0$ not in the $k$ largest (acting as a kind of infimum), giving $E_{\sup} \geq E_{\inf}$. Therefore, 
    \begin{equation}
        E_{\rm Max} - E_{\inf} \geq E_{\rm Max} - E_{\sup} ~ ~  \implies ~ ~ (i_{\inf} \oplus 1) \cdot \Gamma \geq (i_{\sup} \oplus 1) \cdot \Gamma. 
    \end{equation}
    All other $n$-bit-strings in $\mathbb{S}$ give components of $\bm{E}_0$ larger than $E_{\sup}$ and hence their conjugates lead to components smaller than $E_{\rm Max} - E_{\sup}$. Thus far, we have shown that if $i_M$ is an $n$-bit-string that represents one of the $k$ largest energies of $\bm{E}_0$, $(i_M \oplus 1)$ is an $n$-bit-string that represents one of the $k$ smallest energies of $\bm{E}_0$. 

    Now, given that the ordering of $\bm{E}_0$ is that of $\bm{p}_0$ reversed, if $i_M$ is an $n$-bit-string that represents one of the $k$ largest energies of $\bm{E}_0$, it is an $n$-bit-string that represents one of the $k$ smallest populations of $\bm{p}_0$. Hence, $(i_M \oplus 1)$ is an $n$-bit-string that represents one of the $k$ largest populations of $\bm{p}_0$. To complete the proof, one utilises the fact that the ordering of $\bm{p}_0$ in non-decreasing order is the same as that of $\bm{p}_1$. Hence, the $n$-bit-strings that identify the $k$ largest populations of $\bm{p}_0$ also identify the $k$ largest populations of $\bm{p}_1$. Finally, given they are the same set of bit-strings, any relationship between the $n$-bit strings identifying the $k$ largest and smallest energies of $\bm{E}_0$ must also hold for those $n$-bit-strings representing the $k$ largest and smallest populations of $\bm{p}_0$ and $\bm{p}_1$. They then must hold between the $n$-bit strings identifying the $k$ smallest populations of $\bm{p}_0$ and $k$ largest populations of $\bm{p}_1$ for the same reason. Using the relationship proved between the $n$-bit strings representing the $k$ largest and smallest energies of $\bm{E}_0$, it can therefore be concluded that if $i_M \in \mathbb{S}$ then $(i_M \oplus 1) \in \overline{\mathbb{S}}$. In other words, if $E(i_M)$ is larger than the energetic median, $E(i_M \oplus 1)$ is smaller than it as visualised in Fig.~\ref{fig:energetic_line}. Interestingly, one can go further and note that if the $n$-bit-string $i_M'$ identifies the $k'$ largest component of $\bm{p}_0$, then $(i_M' \oplus 1)$ identifies the $k'$ smallest component (the $2^{n-1}-k'$ largest) of $\bm{p}_1$. 
\end{proof}

Returning now to the heuristic cooling algorithm.
% where one compares the largest components of $\bm{q}_1$ with the smallest components of $\bm{q}_0$ to find the disordered populations. 
Lemma.~\ref{Lemma:coolingSets} states that if $\mathbb{S}$ represents the set of $n$-bit-strings that identify the smallest components of $\bm{q}_0$, then the set of $n$-bit-strings that identity the largest components of $\bm{q}_1$, $\overline{\mathbb{S}}$, are given by $\overline{\mathbb{S}} = \{ i_M \oplus 1 : i_M \in \mathbb{S}\}$. Given that 
\begin{equation}
    \min_{j_M \in \overline{\mathbb{S}}} j_M\cdot \Gamma > \max_{i_M \in \mathbb{S}} i_M\cdot \Gamma, 
\end{equation}
the smallest component of $\bm{q}_1$ to be swapped is still larger than the largest component of $\bm{q}_0$ to be swapped (where this ceases to be true defines the end of the heuristic cooling algorithm i.e. what is the value of $k$). Which of these $k$ components are paired with each other to be exchanged does not matter --- they all lead to positive changes in the ground state population of the system. Therefore, without loss of generality, one can compare the population of $\bm{p}_0$ identified by the $n$-bit-string $i_M$ with the population of $\bm{p}_1$ identified by the $n$-bit-string $(i_M \oplus 1)$ to see if the population identified by $i_M$ should be swapped. In fact, given that if the $n$-bit-string $i_M'\oplus1$ identifies the $k'$ largest component of $\bm{p}_1$, then $i_M'$ identifies the $k'$ smallest component of $\bm{p}_0$ (the $2^{n}-k'$ largest component), the strategy of swapping opposite Hamming weight pairs exactly outputs the same set of two level swaps as performing the heuristic cooling algorithm, in which at each stage the optimal two level swap to increase the ground state population of the system was performed. 

\begin{figure}[h]
\centering
\begin{tikzpicture}
    % Grid
    \draw[gray!20, step=0.5] (-0.9,-0.9) grid (10.9,6.9);
    % Axes
    \draw[-stealth, thick] (0,0) -- (10,0) node[right] {$i_M$};
    \node[] at (10,-0.75) {Machine Energy Levels};
    \draw[-stealth, thick] (0,0) -- (0,6) node[above] {Energies};
    % X-axis labels
    \node[below] at (1,0) {$10\cdots0$};
    \node[below] at (3,0) {$110\cdots0$};
    \node[below] at (6,0) {$i_l$};
    \node[below] at (9,0) {$11\cdots1$};

    % Data points and summation terms
    \draw[ thick] (1,1) node[left] {$\gamma_1$}  +(0,0.1) -- +(0,-0.1) +(-0.1,0) -- +(0.1,0);
    \draw[thick] (3,2) node[left] {} +(0,0.1) -- +(0,-0.1) +(-0.1,0) -- +(0.1,0);
    \node[left] at (2.8,2.35) {$\gamma_1 + \gamma_2$};
    \draw[ thick] (6,4) node[left] {$\sum_{\gamma_j \in E(i_l)} \gamma_j$}  +(0,0.1) -- +(0,-0.1) +(-0.1,0) -- +(0.1,0);
    \draw[thick] (9,5) node[left] {$\sum^{n}_{j = 1} \gamma_j$} +(0,0.1) -- +(0,-0.1) +(-0.1,0) -- +(0.1,0);
    
    % Horizontal dotted line
    \draw[red, thick, dotted] (0,1.5) -- (10,1.5);
    \node[red, right] at (10,1.5) {$\frac{T_{M}}{T_S} \omega$};
    \draw[blue, dashed] (1,1) -- (6,1);
    \node[blue, below] at (3,1) {$\tiny{\underbrace{|i_l - 10\dots0|}_{\Delta x}}$};
    \draw[blue, dashed] (1,1) -- (6,4);
    \draw[blue, dashed] (6,1) -- (6,4);
    \node[blue, left] at (3.5,2.75) {$\frac{\Delta E}{\Delta x}$};
    \draw[red, dashed] (1,1) -- (6,2.5);
    \node[red, above] at (3.75,1.75) {$\frac{T_M \omega}{T_S\Delta x}$};
    \draw[red, thick, dashed] (6,1) -- (6,2.5);
    \node[right, red] at (6,1.75) {$\frac{T_{M}}{T_S}\omega$};
    \node[blue,right] at (6,3) {$\tiny{ \underbrace{\sum_{\gamma_j \in E(i_M)} \gamma_j - \gamma_1}_{\Delta E}}$};
\end{tikzpicture}
\caption{The inequality Eq.~\eqref{eq:gen_ineq} can be visualised graphically on a graph whose $x$-axis represents the bitstrings labelling machine energy levels and the Hamming distance between these strings. Whereas the $y$-axis represents the energies of these levels and so their energy differences.}
\end{figure}
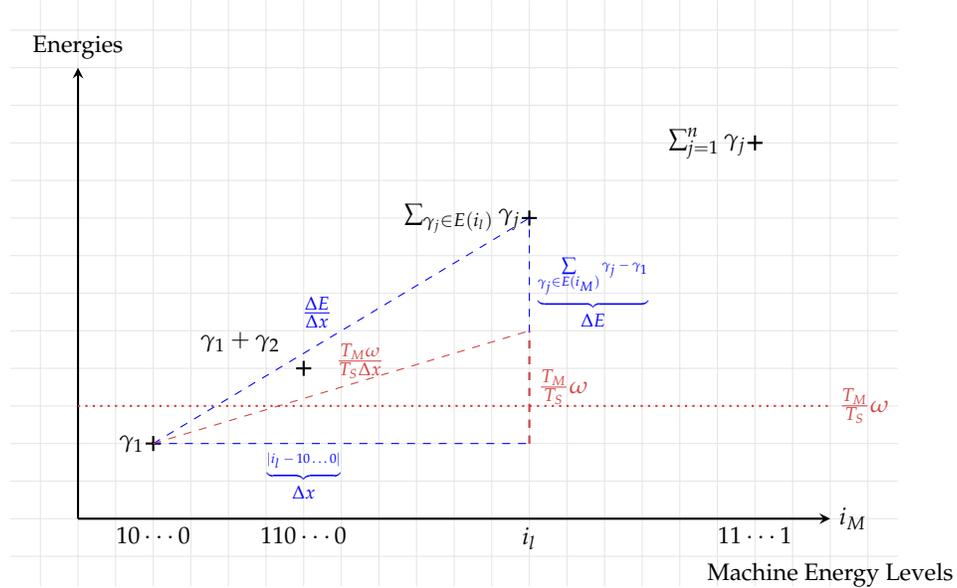

\subsection{A Graphical Interpretation}

The inequality Eq.~\eqref{eq:gen_ineq} also has a geometric interpretation when diving both sides by the absolute difference of Hamming distance between the bitstrings $i_M$ and $j_M$, $d_H(i_M, j_M) = |i_M - j_M| = \sum^{n}_{l =1} |i_l - j_l|$ where $i_l$ and $j_l$ are the $l$th bits in each string. This gives
\begin{gather}
    \frac{T_M \omega}{T_S |i_M - j_M|} < \frac{(i_{M} - j_{M})\cdot \Gamma}{|i_M - j_M|},
\end{gather} 
Here we see that the right hand side takes on the form of a slope $\Delta y / \Delta x$ where $\Delta x = |i_M - j_M|$ is the Hamming distance between machine energy levels $i_M$ and $j_M$ and $\Delta y = \Delta E = (i_{M} - j_{M})\cdot \Gamma$ is the energy difference between these machine energy levels. The slope the captures a sense of how far away the kets one exchanges have to be to obtain a given energy change.

If this slope is larger than the slope $T_M \omega /T_S |i_M - j_M|$ given by the cooling scenario constant and the same $\Delta x$ then the exchange of the levels $\ket{0_S i_M}$ and $\ket{1_S j_M}$ will cool $S$. Note that pairing inverse Hamming weight kets and setting $\ket{1_M j_M} = \ket{0_S i_M \oplus 1}$ set $\Delta x$ to a fixed maximal value of $n$ for any pairing. In this sense we again see the logic of this strategy that if the slope $\frac{(i_{M} - i_M \oplus 1)\cdot \Gamma}{n}$ is not greater than $ \frac{T_M \omega}{T_S n}$ there can be no other slope formed by pairing with a $j_M$ of energy lower than $i_M \oplus 1$ which results in a slope that exceeds $\frac{T_M \omega}{T_S n}$. 

\section{Example 2 -- Cooling using a 3-Qubit Machine}
\label{app:example_2}

To continue to understand the tools we have developed in this manuscript and the setting at hand let us consider an exemplary task of cooling a thermal qubit $S$ with gap $\omega$ with $\beta_S =1$ with access to three thermal qubits also with $\beta_M =1$ and gaps $\Gamma = (\gamma_1, \gamma_2, \gamma_3)$ such that  $\omega \leq \gamma_1 \leq \gamma_2 \leq \gamma_3$. We will use our tools to access which swaps should be performed for optimal cooling, which machines are irreducible or reducible, and evaluate a lower bound on optimal achievable ground state temperature using a $n=3$ qubit machine. 

\begin{figure}[h]
    \centering    \includegraphics[width=\linewidth]{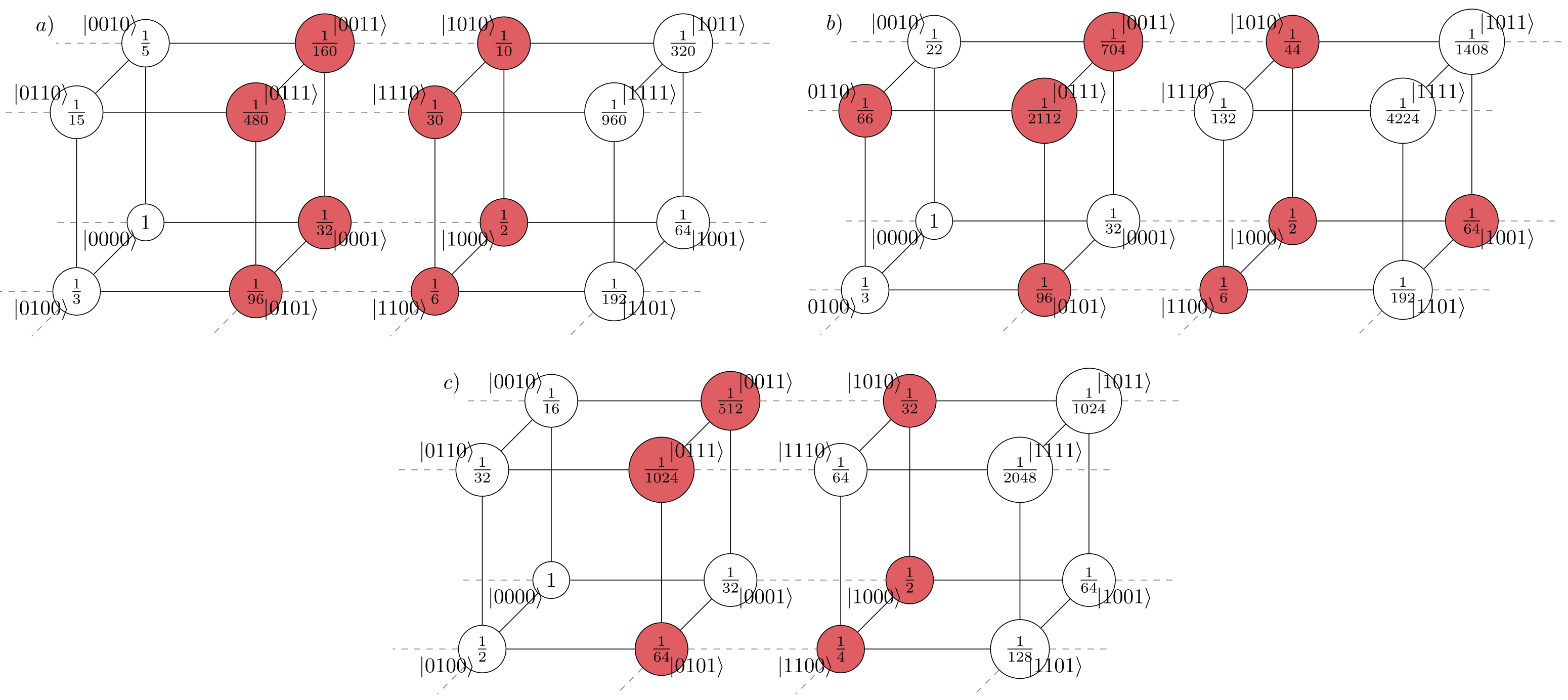}
    \caption{Here we illustrate the various cooling scenarios given a 3-qubit machine by giving a 3D projection of a hypercube. In a) we see the situation given by case 1 $\gamma_1 + \gamma_2 < \gamma_3$ and $\omega < \gamma_3 - \gamma_1 - \gamma_2$ for an example with $\omega = \log 2, \gamma_1 = \log 3, \gamma_2 = \log 5$ and $\gamma_3 = \log 32$, where the population vector has been factored by $\frac{40}{90}$. In b) we see case 2 for $\gamma_1 + \gamma_2 > \gamma_3$ and $\omega < \gamma_1 + \gamma_2 - \gamma_3$ for $\omega = \log 2, \gamma_1 = \log 3, \gamma_2 = \log 22 $ and $\gamma_3 = \log 32$ factored by $\frac{32}{69}$. Finally in c) we see the situation for case 3,4 and 5 with values satisfying case 5 explicitly $\omega = \log 2, \gamma_1 = \log 2, \gamma_2 = \log 16$ and $\gamma_3 = \log 32$ with population vector factored by $\frac{2048}{5049}$.}
    \label{fig:3_qubit_machine}
\end{figure}

The population vector for the ground and excited state of the system, $\bm{p}_0$ and $\bm{p}_1$ respectively, are 
\begin{align}
        \bm{p}_0 &= \frac{1}{\mathcal{Z}_S \mathcal{Z}_M}\begin{bmatrix}
        1, & e^{-\gamma_3}, & e^{-\gamma_2}, & e^{-(\gamma_2+\gamma_3)}, & e^{-\gamma_1}, & e^{-(\gamma_1+\gamma_3)}, & e^{-(\gamma_1+\gamma_2)}, & e^{-(\gamma_1+\gamma_2+\gamma_3)} \end{bmatrix}, \\
        \bm{p}_1 &= \frac{1}{\mathcal{Z}_S \mathcal{Z}_M}\begin{bmatrix}
        e^{-\omega}, & e^{-(\omega+\gamma_3)}, & e^{-(\omega+\gamma_2)}, & e^{-(\omega+\gamma_2+\gamma_3)}, & e^{-(\omega+\gamma_1)}, & e^{-(\omega+\gamma_1+\gamma_3)}, & e^{-(\omega+\gamma_1+\gamma_2)}, & e^{-(\omega+\gamma_1+\gamma_2+\gamma_3)} 
        \end{bmatrix},
\end{align}
in this scenario where $\mathcal{Z}_S = (1+e^{-\omega})$ and $\mathcal{Z}_M = (1+e^{-\gamma_1})(1+e^{-\gamma_2})(1+e^{-\gamma_3}).$ The heuristic algorithm given in the prior section can be used to determine a set of two level swaps which will lead to optimal cooling of the system by comparing elements of $\bm{p}_0$ and $\bm{p}_1$. Alternatively, it was also shown in the main text and Appendix~\ref{app:Hamming_weight_conj} that this set of two level swaps can be found by considering eq~\eqref{eq:ineq_simp} for all $i_M \in \{0,1\}^{\times 3}$. In this case, eq~\eqref{eq:ineq_simp} takes the form 
\begin{gather}
    \frac{1}{2}\left(\omega + \gamma_1 + \gamma_2 + \gamma_3\right ) < i_1\gamma_1 + i_2\gamma_2 + i_3\gamma_3,\label{eq:3_qubit_swap_condition}
\end{gather}
with the two level swap $\ket{0_s i_1 i_2 i_3} \leftrightarrow \ket{1_s(i_1 \oplus 1)(i_2 \oplus 1)(i_3 \oplus 1)}$ cooling the system if Eq.~\eqref{eq:3_qubit_swap_condition} is statised for a given $i_M = i_1i_2i_3$. Specifically, the following two level swaps can be performed to achieve optimal cooling if the related conditions are satisfied 
\begin{align}
        \ket{0_S000} \longleftrightarrow \ket{1_S111} ~ ~ &{\rm if} ~ ~ \omega < - \gamma_1 - \gamma_2 - \gamma_3, \label{cond_1}\\
        \ket{0_S100} \longleftrightarrow \ket{1_S011} ~ ~ &{\rm if} ~ ~ \omega <  \gamma_1 - \gamma_2 - \gamma_3, \label{cond_2} \\
        \ket{0_S010} \longleftrightarrow \ket{1_S010} ~ ~ &{\rm if} ~ ~ \omega <  - \gamma_1 + \gamma_2 - \gamma_3, \label{cond_3}\\
        \ket{0_S110} \longleftrightarrow \ket{1_S001} ~ ~ &{\rm if} ~ ~ \omega <  \gamma_1 + \gamma_2 - \gamma_3, \label{cond_4} \\
        \ket{0_S001} \longleftrightarrow \ket{1_S110} ~ ~ &{\rm if} ~ ~ \omega <  - \gamma_1 - \gamma_2 + \gamma_3, \label{cond_5} \\
        \ket{0_S101} \longleftrightarrow \ket{1_S010} ~ ~ &{\rm if} ~ ~ \omega <  \gamma_1 - \gamma_2 + \gamma_3, \label{cond_6} \\
        \ket{0_S011} \longleftrightarrow \ket{1_S100} ~ ~ &{\rm if} ~ ~ \omega <  - \gamma_1 + \gamma_2 + \gamma_3, \label{cond_7} \\
        \ket{0_S111} \longleftrightarrow \ket{1_S000} ~ ~ &{\rm if} ~ ~ \omega <  \gamma_1 + \gamma_2 + \gamma_3.  \label{cond_8}
\end{align}
Due to the assumed gap structure of $\omega \leq \gamma_1 \leq \gamma_2 \leq \gamma_3$, it can be seen that $\ref{cond_8}, \ref{cond_7}$ and $\ref{cond_6}$ are always satisfied. This therefore means that their associated two level swaps always cool the system. Note, from the perspective of the heuristic cooling algorithm, these conditions always being satisfied are equivalent to the fact that the following conditions on the populations from $\bm{p}_0$ and $\bm{p}_1$ are always satisfied 
\begin{alignat}{3}
         \omega &<  \gamma_1 - \gamma_2 + \gamma_3 &&\Longleftrightarrow  e^{-(\omega+\gamma_2)} > e^{-(\gamma_1+\gamma_3)}, \\
         \omega &<  - \gamma_1 + \gamma_2 + \gamma_3 &&\Longleftrightarrow  e^{-(\omega+\gamma_1)} > e^{-(\gamma_2+\gamma_3)}, \\
          \omega &<  \gamma_1 + \gamma_2 + \gamma_3 &&\Longleftrightarrow  e^{-\omega} > e^{-(\gamma_1+\gamma_2+\gamma_3)}.
\end{alignat}
Meanwhile, $\ref{cond_3}, \ref{cond_2}$ and $\ref{cond_1}$ are never satisfied, meaning their associated two level swaps never cool the system. As above, these conditions directly correspond to conditions on the populations from $\bm{p}_0$ and $\bm{p}_1$ that are never satisfied.  

This leaves two possible conditions, namely $\omega < \gamma_1 + \gamma_2 - \gamma_3$ (\ref{cond_4}) and $\omega < - \gamma_1 - \gamma_2 + \gamma_3$ (\ref{cond_5}) that one must consider. Given these, there are four possible cases to consider: 1) $\ref{cond_4}$ and $\ref{cond_5}$ are satisfied, 2) 
$\ref{cond_4}$ is satisfied and $\ref{cond_5}$ is not, 3) $\ref{cond_5}$ is satisfied and $\ref{cond_4}$ is not, and 4) $\ref{cond_4}$ and $\ref{cond_5}$ are not satisfied.
\\
\\
\textbf{\textit{Case 1:}}~$\ref{cond_4}$ and $\ref{cond_5}$ are satisfied.

It can easily be seen that this is never possible. This is due to $\omega \geq 0$ and the right hand sides of $\ref{cond_4}$ and $\ref{cond_5}$ being the negative of each other.
\\
\\
\textbf{\textit{Case 2:}}~$\ref{cond_4}$ is satisfied and $\ref{cond_5}$ is not.
\\
In addition to the three always performed swaps, the additional swap
\begin{equation}
    \ket{0_S110} \longleftrightarrow \ket{1_S001} ~ ~ {\rm as} ~ ~ \omega < \gamma_1 + \gamma_2 - \gamma_3 ~ ~ \Longleftrightarrow ~ ~  e^{-(\omega+\gamma_3)} > e^{-(\gamma_1+\gamma_2)} \\
\end{equation}
is needed for optimal cooling in this case. If all swaps are performed then the final ground state population in this case is
\begin{equation}
    p^{*}_{0}(2) = \frac{1 + e^{-\gamma_1} + e^{-\gamma_2} + e^{-\gamma_3}}{\mathcal{Z}_M},
\end{equation}
where 
\begin{equation}
     \omega < \gamma_1 + \gamma_2 - \gamma_3 ~ ~ \implies ~ ~ p_{0}^*(2) > \frac{1}{1+e^{-\gamma_3}}
\end{equation}
meaning one cools the system further than the coldest qubit.
\\
\\
\textbf{\textit{Case 3:}}~$\ref{cond_5}$ is satisfied and $\ref{cond_4}$ is not. 
\\
In addition to the three always performed swaps, the additional swap
\begin{equation}
     \ket{0_S001} \longleftrightarrow \ket{1_S110} ~ ~ {\rm as} ~ ~ \omega < - \gamma_1 - \gamma_2 + \gamma_3 ~ ~ \Longleftrightarrow ~ ~  e^{-(\omega+\gamma_1+\gamma_2)} > e^{-\gamma_2} \\
\end{equation}
is needed for optimal cooling in this case. If all swaps are performed then the final ground state population in this case is
\begin{equation}
    p^*_{0}(3) = \frac{1}{1+e^{-\gamma_3}},
\end{equation}
meaning the optimal one can achieve is the temperature of the coldest qubit. Hence, the optimal unitary for cooling in this case is simply the $\mathtt{SWAP}$ between the system and coldest qubit in the machine. 

In the main text and Appendix.~\ref{app:reduc} the notion of machine reducibility is discussed. Case 3 fits the definition of an $(n-1)$-reducible machine given that two of the three machine qubits are not involved in the cooling process --- the final ground state populations does not depend on $\gamma_1$ or $\gamma_2$.  In Eq.~\eqref{eq:n-1_red} the follow sufficient condition for $(n-1)$-reducibility is given 
\begin{equation}
    \frac{1}{2}\left(\frac{T_M}{T_S}\omega + E_\text{Max}\right) < \gamma_n.
\end{equation}
In this case, this takes the form 
\begin{equation}
    \frac{1}{2}(\omega + \gamma_1 + \gamma_2 + \gamma_3 ) < \gamma_3,
\end{equation}
which can equivalently be expressed as 
\begin{equation}
    \omega < - \gamma_1 - \gamma_2 + \gamma_3.
\end{equation}
Hence, $\ref{cond_5}$ is exactly the condition for $(n-1)$-reducibility when $n=3$. It being satisfied therefore immediately tells us that the optimal cooling that can be achieved is to the temperature of the coolest machine qubit, and that a full swap between the coldest machine qubit and system is an optimal cooling unitary.  
\\
\\
\textbf{\textit{Case 4:}}~$\ref{cond_4}$ and $\ref{cond_5}$ are not satisfied. 
\\
This leads to no additional swaps beyond the three always performed swaps. Performing these gives 
\begin{equation}
    p^{*}_{0}(4) = \frac{1+ e^{-\omega} + e^{-\gamma_1}  + e^{-\gamma_2}+e^{-\gamma_3} +e^{-(\omega+\gamma_2)} 
 +e^{-(\omega+\gamma_3)} + e^{-(\gamma_1+\gamma_2)}  }{\mathcal{Z}_S \mathcal{Z}_M}.
\end{equation}
From here, it can be seen that  
\begin{equation}
    \omega > - \gamma_1 - \gamma_2 + \gamma_3 ~ ~ \implies ~ ~ p_{0}^*(3) > \frac{1}{1+e^{-\gamma_3}},
\end{equation}
where the first condition comes from $\ref{cond_5}$ not being satisfied. Hence, one can cool past the temperature of the coldest qubit in this case. 
\\
\\
It was seen in case $3$ that the machine was $(n-1)$-reducible. We now consider whether the machines in case $2$ and $4$ are irreducible, meaning all qubits in the machine contribute to the optimal cooling. In the Appendix.~\ref{app:reduc}, it is shown that a necessary and sufficient condition for a machine to be irreducible is for there to exists an $l_{n-1} \in \{0,1\}^{ \times n-1}$ such that 
\begin{equation}
     0 \leq \frac{1}{2}\left(\frac{T_M}{T_S}\omega + E_\text{Max}\right) - E(l_{n-1}) < \gamma_1, \label{eq:irreducibility_appendix}
\end{equation}
where $E(l_{n-1})$ is an energy of the $(n-1)$ coldest qubits. In our case of $n=3$ machine qubits this becomes 
\begin{align}
0 &\leq \frac{1}{2}\left(\omega + \gamma_1 + \gamma_2 + \gamma_3 \right) - E(0_S0i_2i_3) < \gamma_1. \label{eq:3_qubit_irreducible}
\end{align}

We first consider case 2, where $\ref{cond_4}$ is satisfied and $\ref{cond_5}$ is not. For $i_2 = 1$ and $i_3 = 0$ Eq.~\eqref{eq:3_qubit_irreducible} becomes 
\begin{align}
    - \gamma_1 + \gamma_2 - \gamma_3 < \omega < \gamma_1 + \gamma_2 - \gamma_3. 
\end{align}
Given $\ref{cond_4}$ is satisfied the right hand side of this equation is satisfied. Given the non-decreasing energy structure of the machine, the left hand side is always statised. Hence, the machine in case 2 is irreducible. 

when then consider case 4, where $\ref{cond_4}$ is satisfied and $\ref{cond_5}$ is not. For $i_2 = 0$ and $i_3 = 1$ Eq.~\eqref{eq:3_qubit_irreducible} becomes 
\begin{align}
    - \gamma_1 - \gamma_2 + \gamma_3 < \omega < \gamma_1 - \gamma_2 + \gamma_3. 
\end{align}
Once again, given the non-decreasing energy structure of the machine the right hand side is always statised. Then, given $\ref{cond_5}$ is not satisfied, the left hand side is satisfied. Therefore, the machine in case 4 is also irreducible. Given that the optimal achievable ground state populations in case 2 and case 4, $p^*_0(2)$ and $p^*_0(4)$ respectively, depend on all elements in $\Gamma$, intuitively one would assume the machines are irreducible. However, the above calculations confirm this. 

Lastly, one can lower bound the optimal achievable change in population of the system, $\Delta p_0^*$, using Eq.~\eqref{eq:lower_bound_pop_change}, which, in this case gives
\begin{equation}
    \begin{split}
        \Delta p_0^* \geq \frac{1}{\mathcal{Z}_S (1+e^{-\gamma_3})} \big( e^{-\omega} - e^{-\gamma_3} \big) + \frac{1}{\mathcal{Z}_S \mathcal{Z}_M} \big( e^{-\gamma_3} - e^{-(\omega + \gamma_1 + \gamma_2)} ~ \big). 
    \end{split}
\end{equation}
where it is noted that the second term is negative if $\ref{cond_5}$ is satisfied. 

\section{How cold can we go?}
\label{app:how_cold}

Given a thermal qubit at temperature $T_S$ with gap $\omega$ and a machine $M$ of $n$-qubits with gaps $\Gamma = (\gamma_1, \gamma_2, \dots, \gamma_n)$ a natural question to ask is \textit{how far we can cool $S$ with a single unitary interaction across $S$ and $M$}? 

Addressing this question within the language of the hierarchy of inequalities we have developed, let's assume that an agent has permuted the energy levels within the joint system-machine Hilbert space such that system cannot be cooled further i.e. none of the inequalities Eq.~\eqref{eq:gen_ineq} remain valid.

In doing so, the agent obtains a ground state population for their system qubit after cooling that can be expressed as 

\begin{align}
    p'_0 = \frac{1}{\mathcal{Z}_S\mathcal{Z}_{M}}\underbrace{\left(\sum_{i_M \in \{0,1\}^{\times n} \, \setminus \, \mathbb{S}}  e^{-\beta_M i_M \cdot \Gamma}\right.}_{\text{unchanged}} + \underbrace{\left.\sum_{j_M \in \mathbb{S}} e^{-\beta_S \omega - \beta_M (j_M \oplus 1)\cdot \Gamma}\right)}_{\text{swapped}} , \label{eq:gs_pop}
\end{align}
where $\mathbb{S} = \left\{ j_M \in \{0,1\}^{\times n} | \frac{1}{2}(\frac{T_{M}}{T_S}\omega + E_{\text{Max}}) < E(j_M) \right\}$ is the set of machine energy levels corresponding to levels in the joint system-machine Hilbert space that will cool the system if swapped $\ket{0_S j_{M}} \longleftrightarrow \ket{1_S j_M \oplus 1}$ and $\mathcal{Z}_{S} = (1 + e^{-\beta_S \omega}), \, \mathcal{Z}_{M } =\prod^{n}_{j = 1} (1 + e^{-\beta \gamma_j})$ are the partition functions of the system and machine, respectively. In Eq.~\eqref{eq:gs_pop} we have represented the groundstate population after cooling as a sum of two contributions. The first term involving energy levels that did not satisfy Eq.~\eqref{eq:gen_ineq} and so would warm the system if swapped and the second term involving the populations obtained by swapping the levels that satisfied Eq.~\eqref{eq:gen_ineq}. Note that the choice of swapping Hamming weight conjugate levels is non-unique but is useful as it allows us to express changes in energy in terms of one parameter over the set of bit strings.

To investigate the performance of a machine in the task of optimally cooling $S$ we wish to calculate the increase in the ground state population of $S$ as a result of our protocol. The initial population was set to be $p_0 = \frac{1}{1+e^{-\beta_S \omega}}$ allowing us to express the difference in populations as 
\begin{align}
    \Delta p_0 &= p'_0 - p_0 = \frac{1}{\mathcal{Z}_S\mathcal{Z}_{M}}\left(\sum_{i_M \in \{0,1\}^{\times n} \, \setminus \, \mathbb{S}} e^{-\beta_M i_M \cdot \Gamma} + \sum_{j_M \in \mathbb{S}} e^{-\beta_S \omega - \beta_M (j_M \oplus 1)\cdot \Gamma}\right) - \frac{1}{\mathcal{Z}_{S}}\\
    &= \frac{1}{\mathcal{Z}_S\mathcal{Z}_{M}}\left(\sum_{i_M \in \{0,1\}^{\times n} \, \setminus \, \mathbb{S}} e^{-\beta_M i_M \cdot \Gamma} + \sum_{j_M \in \mathbb{S}} e^{-\beta_S \omega - \beta_M (j_M \oplus 1)\cdot \Gamma} - \mathcal{Z}_{M}\right).
\end{align}
But we can also express the machine's partition function using our bitwise notation as \begin{align}\mathcal{Z}_{M} &= \prod^{n}_{j=1} (1 + e^{-\beta_{M} \gamma_j}) = \sum_{l_{M} \in \{0,1\}^{\times n}}e^{-\beta_{M} l_{M} \cdot \Gamma}\\ \label{partitionFunctionBitStrings}
&= \sum_{l_{M} \in \{0,1\}^{\times n} \setminus \mathbb{S}}e^{-\beta_{M} l_{M} \cdot \Gamma} + \sum_{l_{M} \in \mathbb{S}}e^{-\beta_{M} l_{M} \cdot \Gamma},\end{align}
where we have split the partition function into a contribution from $S$ and $\{0,1\}^{\times n} \setminus \mathbb{S}$ since these set complements form natural disjoint subsets whose union is $\{0,1\}^{\times n}.$ This reorganisation of the partition function allows us to see that the populations of the groundstate subspace that were \textit{unchanged} during cooling cancel out with a contribution of the partition to give   
\begin{align}
    \Delta p_0 = \frac{1}{\mathcal{Z}_S\mathcal{Z}_{M}}\left(\sum_{j_M \in \mathbb{S}} \,e^{-\beta_S \omega - \beta_M (j_M \oplus 1)\cdot \Gamma} -  \sum_{l_{M} \in \mathbb{S}}e^{-\beta_{M} l_{M} \cdot \Gamma}\right).
\end{align}
We now have a difference of two sums over the same set meaning $l_{S}$ is a free index which we can set to $l_{S} = j_{M}$ and recalling that $E(j_M \oplus 1) = (j_M \oplus 1) \cdot \Gamma = E_\text{Max} - j_M \cdot \Gamma$ we obtain
\begin{align}
    \Delta p_0 &= \frac{1}{\mathcal{Z}_S\mathcal{Z}_{M}}\left(\sum_{j_M \in \mathbb{S}} \left[e^{-\beta_S \omega - \beta_M (E_\text{Max} - j_M \cdot \Gamma)} -  e^{-\beta_{M} j_{M} \cdot \Gamma}\right]\right) \label{changeInGroundPopulation}
\end{align}

\subsection{Change in Population Lower Bounds}
The most basic lower bound on this quantity is given by the contribution $j_M = 1^n$ corresponding to the $\ket{0_S 1 \dots 1}$ ket which is swapped in every cooling scenario and is the largest contributor to the change in population with $1^n \cdot \Gamma = \sum^n_{j=1} \gamma_j = E_\text{Max}$ giving the lower bound 
\begin{align}
    \frac{e^{-\beta_S\omega}-e^{-\beta_ME_{\text{Max}}}}{\mathcal{Z}_S\mathcal{Z}_{M}} \leq \Delta p^*_0,
\end{align}
where $\Delta p^*_0$ is the optimal change in ground state population of the system qubit. This contribution is due to what is known as the \textit{virtual qubit subspace swap} in the literature~\cite{brunner_12,clivaz_pre_2019,clivaz_prl_2019}.

As mentioned in the main text, if one considers the case that $(T_M/T_S) \omega \leq \gamma_1$, then a tighter lower bound can be found by noting that all bit-strings above and including $i_M = 10^{n-k-1}1^{k}$, where $k=n/2$ if $n$ is even and $k=(n-1)/2$ if $n$ is odd, are guaranteed to satisfy Eq.~(\ref{eq:ineq_simp}) for all $\Gamma$. To see this, inputting $i_M$ into Eq.~(\ref{eq:ineq_simp}) says that the population of $i_M$ should be swapped with $i_M \oplus 1$ if 
\begin{equation}
    \frac{T_M}{T_S}\omega < \sum_{j=\Tilde{k}}^n \gamma_j - \sum_{j=2}^{\Tilde{k}-1} \gamma_j + \gamma_1, ~ ~ ~  \Tilde{k} =\begin{cases}
 \frac{n}{2}+1, & n ~ ~ {\rm even} \\
  \frac{n}{2}+\frac{3}{2}, & n ~ ~ {\rm odd} \label{firstDisoreredBitString}
  \end{cases}
\end{equation}
Consider now the assumption that $\Gamma$ is ordered in non-decreasing order, so that $\gamma_i \leq \gamma_j$ if $i \leq j ~ \forall ~ i,j$. From this it can be seen that 
\begin{equation}
    \gamma_1 \leq \sum_{j=\Tilde{k}}^n \gamma_j - \sum_{j=2}^{\Tilde{k}-1} \gamma_j + \gamma_1.
\end{equation}
As we have also assumed that $(T_M/T_S) \omega \leq \gamma_1$, Eq.~(\ref{firstDisoreredBitString}) must be satisfied. All bit-strings above $i_M$ then lead to an Eq.~(\ref{eq:ineq_simp}) with a right hand side larger than Eq.~(\ref{firstDisoreredBitString}), as for all other bit-strings above $i_M$ parts of the negative sum in Eq.~(\ref{firstDisoreredBitString}) are turned positive and hence it can only be larger. Therefore, under the assumption that $(T_M/T_S) \omega \leq \gamma_1$, at least all the bit-strings above and including $i_M$ should be swapped. 

To find a lower bound using Eq.~\ref{changeInGroundPopulation} and this set of bit-strings, one can first consider performing all swaps from $\ket{0^{n-k}1^{k}} \rightarrow \ket{1^n}$, and then removing the contribution arising from the inclusion of $\ket{0^{n-k}1^{k}}$. Initially, one therefore has 
\begin{equation}
   \begin{split}
        \Delta p^{\rm all}_0 &= \frac{1}{\mathcal{Z}_S \mathcal{Z}_M} \bigg( \sum_{i_M \in \{0,1\}^{\times (n-k)}} e^{-\beta_S\omega -\beta_M(E_{\rm Max} - i_M^{n-k} 1^k \cdot \Gamma)} - e^{-\beta_M i_{M}^{n-k} 1^k \cdot \Gamma} \bigg), \\
        &= \frac{1}{\mathcal{Z}_S \mathcal{Z}_M} \bigg( e^{-\beta_S \omega} \sum_{i_M \in \{0,1\}^{\times (n-k)}} e^{-\beta_M(i_M^{n-k} 1^k \oplus 1) \cdot \Gamma} ~ - ~  \sum_{i_M \in \{0,1\}^{\times (n-k)}} e^{-\beta_M i_{M}^{n-k} 1^k \cdot \Gamma} \bigg), 
    \end{split}
\end{equation}
where we have used the fact that $E(i_M) + E(i_M \oplus 1) = E_{\rm Max}$. We now introduce the notation $\Gamma_{1:l} = (\gamma_1, \gamma_2, \ldots, \gamma_l)$ to select parts of $\Gamma$. Using this, the above expression becomes 
\begin{equation}
    \begin{split}        
        \Delta p^{\rm all}_0 &= \frac{1}{\mathcal{Z}_S \mathcal{Z}_M} \bigg( e^{-\beta_S \omega} \sum_{i_M \in \{0,1\}^{\times (n-k)}} e^{- \beta_M(i_M^{n-k} \oplus 1) \cdot \Gamma_{1:n-k}}  ~ - ~  e^{-\beta_M \sum_{l = \Tilde{k}}^n \gamma_l} \sum_{i_M \in \{0,1\}^{\times (n-k)}} e^{-\beta_M i_{M}^{n-k} \cdot \Gamma_{1:n-k}} \bigg), \\
          &= \frac{1}{\mathcal{Z}_S \mathcal{Z}_M} \big( e^{-\beta_S \omega}- e^{-\beta_M \sum_{l = \Tilde{k}}^n \gamma_l} \big) \sum_{i_M \in \{0,1\}^{\times (n-k)}} e^{-\beta_M (i_{M}^{n-k} \cdot \Gamma_{1:n-k})}, 
   \end{split}  
\end{equation}
where the constants from the fixed $1$s have been factorised out and, the fact that the sets $\{i_M \oplus 1 : i_M \in \{0,1\}^{\times n}\}$ and $\{0,1\}^{\times n}$ are equivalent means that 
\begin{equation}
    \sum_{i_M \in \{0,1\}^{\times (n-k)}} e^{-\beta_M(i_M^{n-k} \oplus 1) \cdot \Gamma_{1:n-k}} = \sum_{i_M \in \{0,1\}^{\times (n-k)}} e^{-\beta_M(i_M^{n-k} \cdot \Gamma_{1:n-k})}. 
\end{equation}
From here, it is then noted that this is the partition function over the $(n-k)$ hottest qubits and hence $\Delta p_0^{\rm all}$ can be written as 
\begin{equation}
    \begin{split}  
          \Delta p^{\rm all}_0 &= \frac{1}{\mathcal{Z}_S \mathcal{Z}_M} \big( e^{-\beta_S \omega} - e^{-\beta_M \sum_{l = \Tilde{k}}^n \gamma_l} \big) \prod_{j=1}^{\Tilde{k}-1} (1+e^{-\beta_M \gamma_j}),  \\
        &= \frac{1}{\mathcal{Z}_S\prod_{j=\Tilde{k}}^{n} (1 + e^{-\beta_M \gamma_j})} \big( e^{-\beta_S \omega} - e^{-\beta_M \sum_{l=\Tilde{k}}^n \gamma_l} \big), \\
        &= \frac{1}{\mathcal{Z}_S\mathcal{Z}^{\Tilde{k}:n}_M}\big( e^{-\beta_S \omega} - e^{-\beta_M \sum_{\Tilde{k}:n}^n \gamma_i} \big),  ~ ~ \mathcal{Z}^{\Tilde{k}:n}_M = \prod_{j=\Tilde{k}}^{n} (1 + e^{-\beta_M \gamma_j}) \label{allSwapLowerBound}
   \end{split}  
\end{equation}

One now needs to take into account the change in population included in Eq.~(\ref{allSwapLowerBound}) from swapping $\ket{0^{n-k}1^{k}}$. Performing this swap would give a change in population of 
\begin{equation}
    \Delta p^{\rm corr}_0 = \frac{1}{\mathcal{Z}_S \mathcal{Z}_M} \big( e^{-\beta_S\omega-\beta_M(E_{\rm max} - \sum_{\tilde{k}}^n \gamma_i)} - e^{-\beta_M \sum_{\tilde{k}}^n \gamma_i} \big).
\end{equation}
A lower bound on $\Delta p^*_0$ in this case can then be found as
\begin{equation}
    \begin{split}
        \Delta p^*_0 &\geq \Delta p_0^{\rm all} - \Delta p_0^{\rm corr}, \\
        &\geq \frac{1}{\mathcal{Z}_S\mathcal{Z}^{\Tilde{k}}_M}\big( e^{-\beta_S \omega} - e^{-\beta_M \sum_{\Tilde{k}}^n \gamma_i} \big) + \frac{1}{\mathcal{Z}_S \mathcal{Z}_M} \big( e^{-\beta_M \sum_{\Tilde{k}}^n \gamma_i} - e^{-\beta_S \omega -\beta_M(E_{\rm max} - \sum_{\Tilde{k}}^n \gamma_i)} \big).
    \end{split}
\end{equation}
In the main text we have then introduced the notation that $E(k) = \sum_{i=k}^{n} \gamma_i$ for brevity. 

\section{Reducibility}
\label{app:reduc}

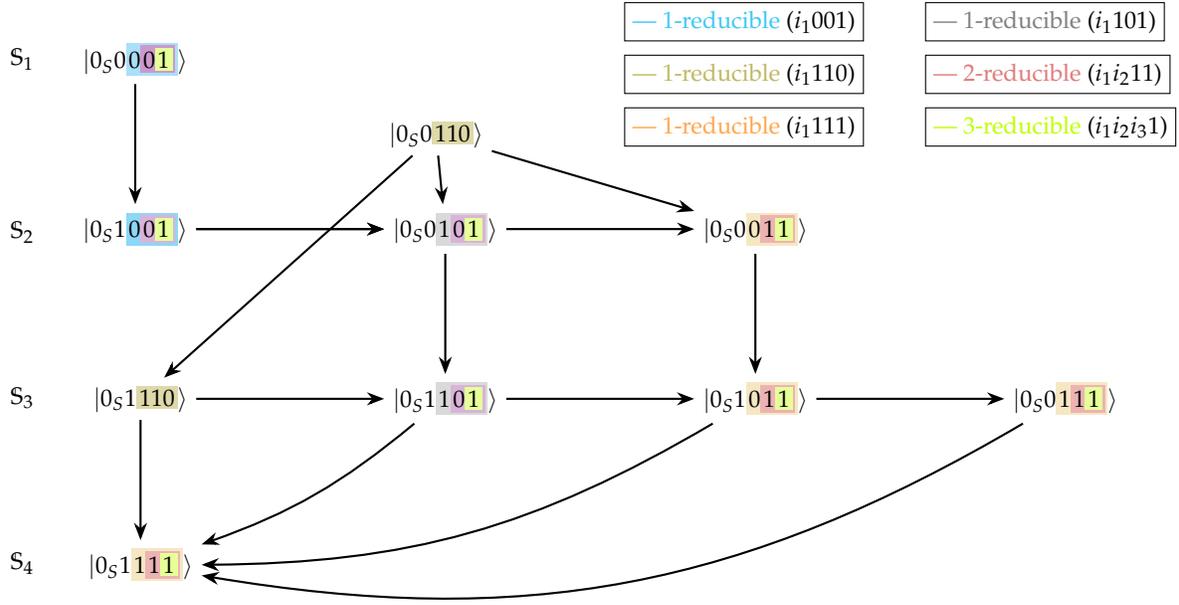
\begin{figure}[h]
    \centering
\scalebox{1}{\begin{tikzpicture}[node distance=1.6cm and 2.5cm]
\setlength{\fboxsep}{1pt} % default is 3pt
% Nodes
\node (s1) at (0,0) {\(|0_S0\colorbox{cyan!30}{0\colorbox{violet!40}{0\colorbox{lime!40}{1}}}\rangle\)};
\node (s2) [below=of s1] {\(|0_S1\colorbox{cyan!40}{0\colorbox{violet!30}{0\colorbox{lime!40}{1}}}\rangle\)};
\node (c1) at (4,-1) {\(|0_S0\colorbox{olive!30}{110}\rangle\)};
\node (a1) [right=of s2] {\(|0_S0\colorbox{gray!30}{1\colorbox{violet!30}{0\colorbox{lime!40}{1}}}\rangle\)};
\node (a2) [right=of a1] {\(|0_S0\colorbox{yellow!40}{0\colorbox{red!40}{1\colorbox{lime!40}{1}}}\rangle\)};
\node (a3) [right=of a2] {};
\node (a4) [below=of a1] {\(|0_S1\colorbox{gray!30}{1\colorbox{violet!30}{0\colorbox{lime!40}{1}}}\rangle\)};
\node (a5) [right=of a4] {\(|0_S1\colorbox{yellow!40}{0\colorbox{red!40}{1\colorbox{lime!40}{1}}}\rangle\)};
\node (a6) [right=of a5] {\(|0_S0\colorbox{yellow!40}{1\colorbox{red!40}{1\colorbox{lime!40}{1}}}\rangle\)};
\node (s3) [left=of a4] {\(|0_S1\colorbox{olive!30}{110}\rangle\)};
\node (s4) [below=of s3] {\(|0_S1\colorbox{yellow!40}{1\colorbox{red!40}{1\colorbox{lime!40}{1}}}\rangle\)};
\node(S1) at (-1.5,0) {$\mathbb{S}_1$};
\node(S2) at (-1.5,-2.3) {$\mathbb{S}_2$};
\node(S3) at (-1.5,-4.5) {$\mathbb{S}_3$};
\node(S4) at (-1.5,-6.7) {$\mathbb{S}_4$};

% Arrows
\draw[arrow]  (s1) -- (s2);
\draw[arrow] (s2) -- (a1);
\draw[arrow] (s2) -- (a1);
\draw[arrow] (a1) -- (a2);
\draw[arrow] (c1) to (a1);
\draw[arrow] (c1) to (a2);
\draw[arrow] (c1) to (s3);
\draw[arrow] (s3) -- (a4);
\draw[arrow] (a4) -- (a5);
\draw[arrow] (a5) -- (a6);
\draw[arrow] (a1) -- (a4);
\draw[arrow] (a2) -- (a5);
\draw[arrow] (s3) -- (s4);
\draw[arrow, bend left=10] (a4) to (s4);
\draw[arrow, bend left=15] (a5) to (s4);
\draw[arrow, bend left=20] (a6) to (s4);

% Legend
\node[draw, anchor=west] at (6.5, 0.5) {\textcolor{cyan!60}{--- 1-reducible} ($i_1$001)};
\node[draw, anchor=west] at (6.5, -0.2)  {\textcolor{olive!60}{--- 1-reducible} ($i_1$110)};
\node[draw, anchor=west] at (6.5, -0.9) {\textcolor{orange!70}{--- 1-reducible} ($i_1$111)};
\node[draw, anchor=west] at (10.5, 0.5) {\textcolor{gray}{--- 1-reducible} ($i_1$101)};
\node[draw, anchor=west] at (10.5, -0.2) {\textcolor{red!70}{--- 2-reducible} ($i_1i_2$11)};
\node[draw, anchor=west] at (10.5, -0.9) {\textcolor{lime}{--- 3-reducible} ($i_1i_2i_3$1)};

\end{tikzpicture}
}
\caption{The set $\mathbb{S}$ of ground state subspace energy levels worth exchanging with energy levels from the excited state subspace to cool $S$ under different conditions.}
\label{fig:4_qubit_reduc}
\end{figure}

The third law in quantum thermodynamics is often interpreted as a statement that cooling a quantum system to its ground state at zero temperature requires diverging resources. One such resource in our setting is the number of constituent qubits $n$ forming our machine, but as we have observed if the energy of these qubits grows too fast it could render energetically weaker parts of the machine useless. In this appendix we give insight into characterising this situation by introducing the notion of \textit{$k$-reducibility}, the scenario where the $k$ energetically weakest qubits of the $n$ qubit machine do not contribute to the cooling scenario. In particular, here we find that degenerate machines always behave irreducibly for some cooling scenario whereas machines whose energy structure grows exponentially by a factor larger than 2 are always $(n-1)$-reducible resulting in a \textit{wasteful} bipartite interaction. 

\begin{definition}
    A machine is said to be \textit{$k$-reducible} if (i) its first $k$ qubits do not contribute to the cooling interaction (ii) \textit{precisely $k$-reducible} if the machine cannot be further reduced i.e. it is not $(k+1)$-reducible. 
    \begin{align}
    &(i) \quad\quad \ket{0_S 1^k l^{n-k}} \in \mathbb{S} \implies \ket{0_S i^k l^{n-k}} \in \mathbb{S} \,\, \forall i^k \in \{0,1\}^{\times k} \nonumber \\
    &(ii) \quad\quad \exists \,\, \ket{0_S 1^{k+1} l^{n-k-1}} \in \mathbb{S} \,: \, \ket{0_S i^{k+1} l^{n-k-1}} \notin \mathbb{S} \, \, \forall \, i^{k+1} \in \{0,1\}^{\times k+1} \nonumber \end{align}
\end{definition}
Energetically, this definition implies that a $k$-reducible machine is one which results in a cooling scenario where if $\ket{0_s 1^k 1^{n-k}}$ corresponds to a population that should be exchanged so must every other energetically weaker machine energy level $\ket{0_S i^k 1^{n-k}}$. In particular $\ket{0_S 0^k 1^{n-k}} \in \mathbb{S}$ giving \begin{align}
    (i) \quad\quad \frac{1}{2} \left(\frac{T_M}{T_S}\omega + E_\text{Max}\right) < \sum^{n}_{j=k+1} \gamma_j,
\end{align}
which is a necessary condition on the energetic structure of the machine for $M$ to be $k$-reducible. On the other hand from (ii) we find that a precisely $k$-reducible machine would have an energetic structure satisfying 
\begin{align}
    (ii) \quad \quad i^{k+1}l^{n-k-1} \cdot \Gamma\leq \frac{1}{2} \left(\frac{T_M}{T_S}\omega + E_\text{Max}\right) < 1^{k+1}l^{n-k-1}\cdot \Gamma,
\end{align}
for some $i^{k+1} \in \{0,1\}^{\times k+1}$ and $l^{n-k-1}\in \{0,1\}^{\times n-k-1}.$

\subsection{Irreducible Machines} With this definition in mind an irreducible $n$ qubit machine, one whose energetic structure grows with increasing $n$ ensuring that all its $n$ qubits contribute to the cooling scenario, cannot be 1-reducible. This will be ensured if condition $(i)$ is never satisfied for $k=1$ that is an irreducible machine induces a swappable set $\mathbb{S}$ which features at least one $\ket{0_S1 l^{n-1}} \in \mathbb{S}$ and $\ket{0_S 0 l^{n-1}} \notin \mathbb{S}$. Or equivalently, if condition $(ii)$ is always satisfied for $k=0$. Energetically, (ii) being satisfied for $k=0$ implies that there is always at least one energy level $\ket{0_S1l^{n-1}}$ that must be exchanged for cooling whilst its bit conjugate $\ket{0_S0l^{n-1}}$ does not cool when exchanged giving the inequality
\begin{align}
    0l^{n-1} \cdot \Gamma\leq \frac{1}{2} \left(\frac{T_M}{T_S}\omega + E_\text{Max}\right) < 1l^{n-1}\cdot \Gamma.
\end{align}
This inequality can be restated in two ways which give different insight into when irreducibility occurs. A machine is irreducible if there is at least one energy level $l^{n-1} \in \{0,1\}^{\times n -1}$ of the $n-1$ coldest qubits of the machine which satisfies 
\begin{gather}
    0 \leq \frac{1}{2}\left(\frac{T_M}{T_S}\omega + E_\text{Max}\right) - E(l^{n-1}) < \gamma_1, \label{eq:irred_cond}
\end{gather}
where $\gamma_1$ is the energy of the warmest qubit in the machine and 
\begin{equation}
    \gamma_2 \leq E(l^{n-1}) \leq \sum_{i=2}^n \gamma_i.
\end{equation}
Alternatively by noting that $E_\text{Max} - 2 E(l^{n-1}) = E(l^{n-1} \oplus 1) -E(l^{n-1})$ one can restate this inequality as a set of intervals given by energy changes induced by swapping Hamming weight conjugate pairs,
\begin{align}
         E(l^{n-1}) - E(l^{n-1}\oplus 1) &\leq \frac{T_M}{T_S}\omega < E(l^{n-1}) - E(l^{n-1}\oplus 1) + 2\gamma_1. \label{eq:num_line}
\end{align}
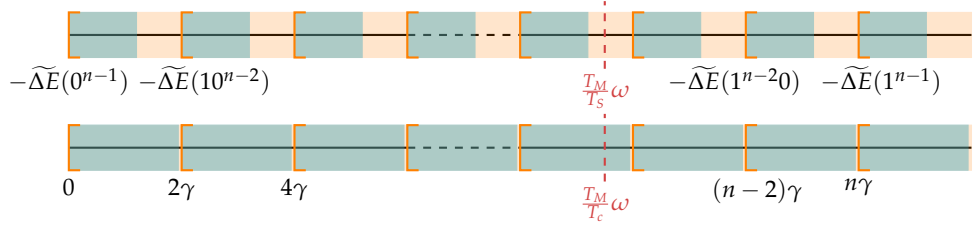
\begin{figure}[t]
\begin{tikzpicture}[scale=1.5]

  % === TOP PICTURE ===

  % base number line
  \draw[black,thick] (-1,0.5) -- (2,0.5);
  \draw[black,thick,dashed] (2,0.5) -- (3,0.5);
  \draw[black,thick] (3,0.5) -- (7,0.5);

  % labels and ticks
  \foreach \x in {-1,0,1,2,3,4,5,6} {
    \draw[cyan,opacity=0.4, fill opacity = 0.4, fill=cyan] (\x,0.3) -- (\x,0.7) -- ({\x + 0.6},0.7) -- ({\x + 0.6},0.3) -- cycle;
    \draw[orange, thick] ({\x + 0.1},0.7) -- ({\x},0.7) -- ({\x},0.3) -- ({\x + 0.1},0.3);
  }

  \node[below] at (-1,0.3) {$-\widetilde{\Delta E}(0^{n-1})$};
  \node[below] at (0.2,0.3) {$-\widetilde{\Delta E}(10^{n-2})$};
  \node[below] at (4.9,0.3) {$-\widetilde{\Delta E}(1^{n-2}0)$};
  \node[below] at (6.2,0.3) {$-\widetilde{\Delta E}(1^{n-1})$};

  % shaded box behind entire strip
  \draw[orange,opacity=0.2, fill opacity = 0.2, fill=orange] (-1,0.3) -- (-1,0.7) -- (7,0.7) -- (7,0.3) -- cycle;
  %\node[blue, above] at (1.8,0.7) {$-\widetilde{\Delta E}(l^{n-1}) + 2\gamma_1$};
  % Tm/Ts * omega marker
  \draw[red, thick, dashed] (3.75,0.2) -- (3.75,0.8);
  \node[red] at (3.75,0) {\(\frac{T_M}{T_S}\omega\)};

  % === BOTTOM PICTURE (shifted down by 2 units) ===

  % base number line
  \draw[black,thick] (-1,-0.5) -- (2,-0.5);
  \draw[black,thick,dashed] (2,-0.5) -- (3,-0.5);
  \draw[black,thick] (3,-0.5) -- (7,-0.5);

  % labels and ticks
  \foreach \x in {-1,0,1,2,3,4,5,6} {
    \draw[cyan,opacity=0.4, fill opacity = 0.4, fill=cyan] 
      (\x,-0.7) -- (\x,-0.3) -- ({\x + 0.97},-0.3) -- ({\x + 0.97},-0.7) -- cycle;
    \draw[orange, thick] ({\x + 0.1},-0.3) -- ({\x},-0.3) -- ({\x},-0.7) -- ({\x + 0.1},-0.7);
  }

  \node[below] at (-1,-0.7) {$0$};
  \node[below] at (0,-0.7) {$2\gamma$};
  \node[below] at (1,-0.7) {$4\gamma$};
  \node[below] at (5.1,-0.7) {$(n-2)\gamma$};
  \node[below] at (6,-0.7) {$n\gamma$};

  % shaded background
  \draw[orange,opacity=0.2, fill opacity = 0.2, fill=orange] (-1,-0.7) -- (-1,-0.3) -- (7,-0.3) -- (7,-0.7) -- cycle;

  % Tm/Ts * omega marker
  \draw[red, thick, dashed] (3.75,-0.8) -- (3.75,-0.2);
  \node[red] at (3.75,-1) {\(\frac{T_M}{T_c}\omega\)};

\end{tikzpicture}

\caption{Here we visualise the inequality Eq.~\eqref{eq:num_line} which turns irreducibility into a question of whether $\frac{T_M}{T_S}\omega$ falls within an interval $[-\widetilde{\Delta E}, -\widetilde{\Delta E} + 2\gamma_1)$ or not. Remarkably, for degenerate machines $\widetilde{\Delta E}(l^{n}+1) - \widetilde{\Delta E}(l^{n-1}) = 2\gamma$ so the blue region is equal to the distance between differences, implying that a degenerate machine will always behave irreducibly for some cooling scenario $\frac{T_M}{T_S}\omega,$ as there are no gaps between blue regions.}
\label{fig:num_line}
\end{figure}

This inequality can be visualised by taking the number line, marking the negative energy differences of Hamming weight conjugate energies $- \widetilde{\Delta E}(l^{n-1}) =  E(l^{n-1}) - E(l^{n-1}\oplus 1)$ and considering intervals of $[-\widetilde{\Delta E}, -\widetilde{\Delta E} + 2\gamma_1)$ and asking if $\frac{T_M}{T_S}\omega$ falls in these regions. If it does, we have an irreducible machine. The distance between neighbouring differences say $\widetilde{\Delta E}(l^{n-1})$ and $\widetilde{\Delta E}(l^{n-1}+1)$, where $l^{n-1}+1$ is an abuse of notation to denote the next string above $l^{n-1}$ in lexicographic order, could be much larger than $[-\widetilde{\Delta E}, -\widetilde{\Delta E} + 2\gamma_1)$ giving the possibility of reducibility. 

\subsubsection{When are Degenerate Machines reducible}
\label{app:degen_reduc}

Using the criteria we have derived we are now in a position to prove that degenerate machines, such as those used in algorithmic and dynamical cooling~\cite{schulman_limits,bassman_campisi_24} are irreducible and so increase there cooling potential as the number of constituents forming the machine increases. Firstly the simple condition Eq.~\eqref{eq:irred_cond} takes the form 
\begin{gather}
    0 \leq \frac{1}{2}\left(\frac{T_M}{T_S}\omega + n\gamma - 2t\gamma\right) < \gamma,
\end{gather}
where $E_\text{Max} = n \gamma$, $E(l^{n-1}) = t\gamma$ and $t \in [0,n-1]$. Simplifying, we arrive at
\begin{gather}
    -(n-2t) \gamma \leq \frac{T_M}{T_S}\omega \leq  2\gamma - (n-2t) \gamma.
\end{gather}
As $0<\omega$, it must be the case that the left hand side is always positive, which is true if $n/2 < t.$ Hence, one gets a set of conditions for each integer $n/2 \leq t \leq n-1$, where each $t$ gives a condition that is some interval on the real-numbers. In the degenerate case, these intervals capture the whole number-line $[0, n \gamma]$, meaning whenever $(T_M/T_S) \omega \leq n \gamma$, one can find a $t$ such that the above condition is satisfied. More specifically, under the credible assumptions that $T_M \leq T_S$ and $\omega \leq \gamma$, degenerate machines are therefore always irreducible.

As a second approach we consider energy changes between Hamming weight conjugate pairs and see that $\widetilde{\Delta E}(l^{n-1}) = E(l^{n-1}\oplus 1) -  E(l^{n-1}) = (n-t)\gamma - t\gamma$ where $t$ is the no. of 1s in $l^{n-1}$. Comparing with a neighbouring energy difference $\widetilde{\Delta E}(l^{n-1} + 1) = (n-t - 1)\gamma - (t+1)\gamma$ we find that the distance between two points on the number line
\begin{gather}
    |\widetilde{\Delta E}(l^{n-1} + 1) - \widetilde{\Delta E}(l^{n-1})| = |(n - 2t - 2)\gamma - (n-2\gamma)| = 2\gamma
\end{gather}
for all $l^{n-1}$ indexed by $t$. Thus in the case of the degenerate machine, we have have the remarkable situation that the distance between neighbouring differences is always $2\gamma$ that is $|[-\widetilde{\Delta E}(l^{n-1}), -\widetilde{\Delta E}(l^{n-1}+1))| = 2\gamma$ for any $l^{n-1} \in \{0,1\}^{\times n -1 }$ which grants degenerate machines the property of being always irreducible. This scenario is visualised in Fig.\ref{fig:num_line}.

\subsection{$(n-1)$-reducible machines}
\label{app:n-1_red}
The most wasteful kind of $n$ qubit machine constructed to cool a quantum system would be one where the optimal cooling interaction is merely bipartite, involving only the energetically strongest (coldest) qubit in the machine and rendering the other $n-1$ qubits in the machine useless. In such a machine, these $n-1$ qubits play no role in the cooling interaction and are in this sense only useful in the machines construction, to ramp up to the energy of the last qubit $\gamma_n$. But if one had access to a singular qubit with energy equal to that of the coldest qubit in an $(n-1)$-reducible machine, one could cool equivalently. Let's investigate under what energetic conditions this can happen. Firstly, condition $(i)$ for $(n-1)$-reducibility states that $\ket{0_S 1^n}$ being an energy level whose exchange is cooling implies that every energetically weaker $\ket{0_S i^{n-1}1}$ is also worth swapping $\forall \, i^{n-1} \in \{0,1\}^{n-1}$. In particular even the weakest level $\ket{0_S0^{n-1}1} \in\mathbb{S}$ giving the first condition for $(n-1)$-reducibility
\begin{gather}
    \frac{1}{2}\left(\frac{T_M}{T_S}\omega + E_\text{Max}\right) < \gamma_n.
\end{gather}
A precisely $(n-1)$-reducible machine must still be able to cool, which is to say it is not $n$-reducible and there is at least one energy $\gamma_j$ for which

\begin{gather}
        0 \leq \frac{1}{2}\left(\frac{T_M}{T_S}\omega + E_\text{Max}\right) - \gamma_j < \gamma_1,  \, : j \in [2,n].
\end{gather}
\subsubsection*{Exponential Machines}
We call an $n$-qubit machine with gap vector $\Gamma = (\gamma,\gamma^2, \gamma^3 \dots, \gamma^n)$ where the energetic gap of each qubit is exponentially larger then that of its predecessor, an exponential machine. Exponential    machines are susceptible to being reducible if the growth of their energy structure is so rapid that a single qubit is able to bring the system to its ground state on its own. To avoid this scenario and figure out if exponential machines can ever be irreducible let's investigate the $k$-reducibility of exponential machines.

To begin with, recall that finite progressions of the geometric series sum to give $\sum^n_{i = 0} ar^i = a\left(\frac{1 - r^{n+1}}{1-r}\right)$ for $r \neq 1$. We can use this to express the sum of gaps in the exponential machine as $E_\text{Max} = \frac{1 - \gamma^{n+1}}{1-\gamma} - 1 = \gamma\left(\frac{1-\gamma^n}{1-\gamma}\right)$. Now checking the $(n-1)$ reducibility we find 
\begin{align}
    &\frac{1}{2}\left(\frac{T_M}{T_S}\omega + \gamma\left(\frac{1-\gamma^n}{1-\gamma}\right)\right) < \gamma^n\\
    &\frac{T_M}{T_S}\omega < \frac{\gamma\left(2\gamma^n - 2\gamma^n - 1 \right)}{1 - \gamma } \label{eq:ineq_exp}
\end{align}
which, since $1 < \frac{\left(2\gamma^{n-1} - 2\gamma^n - 1 \right)}{1 - \gamma }$ for all $1 < \gamma$, means Eq.~\eqref{eq:ineq_exp} holds if $\frac{T_M}{T_S}\omega < \gamma$ and $1 <\gamma$. Hence, this is a sufficient condition for an exponential machines to be $(n-1)$-reducible.

\section{Towards Connections between Cooling \& Error Correcting Codes}
\label{Ap:quantum_error_correction}
\begin{figure}[h]
    \centering
\includegraphics[width=0.75\linewidth]{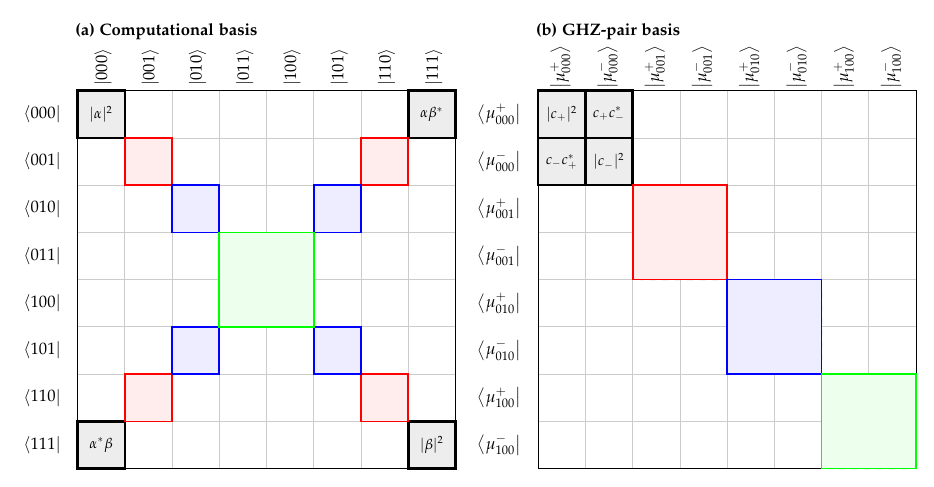}
    \caption{In this figure consider the three qubit repetition code that corrects for single qubit $\sigma_x$ errors.  (a) In the computational basis we have the codewords $\ket{000}, \ket{111}$ giving the encoding $\ket{\psi_L} = \alpha \ket{000} + \beta\ket{111}$. (b) In the GHZ-pair basis we have the codewords $\ket{\mu^+_{000}} = \frac{\ket{000} + \ket{111}}{\sqrt{2}}, \ket{\mu^-_{000}} = \frac{\ket{000} - \ket{111}}{\sqrt{2}}$ giving the encoding $\ket{\psi_L} = c_+ \ket{\mu^+_{000}} + c_- \ket{\mu^-_{000}}$ where $c_\pm = \sqrt{2}(\alpha \pm \beta).$}
    \label{fig:error_cor}
\end{figure} 
Several works in the community have pointed out the connection between error correction and cooling quantum systems. Despite this, explicit connections framing error correcting protocols as quantum cooling protocols, particularly within the framework of thermal machines, are still to be made. In this section we explore this connection using the tools we have developed in this work by examining the most basic quantum error correcting code, the 3 code repetition code, rephrasing it as a cooling problem.

\paragraph*{The Repetition Code} This code encodes a single qubit pure state $\ket{\psi} = \alpha \ket{0} + \beta \ket{1}$ into the 3 qubit subspace spanned by $\ket{000}, \ket{111}$ as $\ket{\psi_L} = \alpha \ket{000} + \beta \ket{111}$. Here, $\ket{000}, \ket{111}$ are known as the codewords of the code and the space they span, the logical subspace or codespace $\mathcal{C}$. The code can detect and correct for any single qubit $\sigma_X$ or bitflip error as any $\sigma^{(i)}_X$ error applied to one of the qubits rotates the encoded state to one of three orthogonal error subspaces as depicted in Fig.\ref{fig:error_cor} (a), $\mathcal{E}_3$ spanned by $\ket{001}, \ket{110}$ if $\sigma^{(3)}_X$ occurs, $\mathcal{E}_2$ given by $\ket{010}, \ket{101}$ if $\sigma^{(2)}_X$ occurs and $\mathcal{E}_1$ spanned by $\ket{100}, \ket{011}$ if $\sigma^{(1)}_X$ occurs. This code has stabilisers $ S_1 =\sigma^{(1)}_z\sigma^{(2)}_z$ and $S_2 = \sigma^{(2)}_z\sigma^{(3)}_z$ as both $S_1$ and $S_2$ leave the codewords invariant but together their expectation values can jointly determine which error has occurred via the syndrome table given below.
\paragraph*{A Change of Basis} In trying to reformulate this protocol into a thermodynamic one, two problems arise. The first, a general state $\ket{\psi}$ encoded for protection against error can have coherences in the code space. This means that the setting we have considered for cooling considering only permutations on the diagonal of a joint system-machine state may not \textit{a priori} suffice. Secondly, the codewords $\ket{000}, \ket{111}$ are very disparate in energy with respect to a local qubit Hamiltonian i.e. the codespace is spanned by the energetically lowest and highest states preventing a straight forward reformulation of the problem into a thermodynamic one. A change of basis may address both issues. Consider the rotation 
\begin{align*}
    \ket{000} \longrightarrow \frac{\ket{000} + \ket{111}}{\sqrt{2}} &&    \ket{111} \longrightarrow \frac{\ket{000} - \ket{111}}{\sqrt{2}},
\end{align*}
on the codewords. In the computational basis, this addresses both issues as is block-diagonalises the coherences of the encoded state and renders the codewords energy degenerate with respect to a local qubit Hamiltonian. For the general unitary, split the computational basis into two disjoint sets of Hamming weight conjugate pairs, in this case $x \in S=\{000,001,010,100\}$ and $\bar{x}_i = x_i \oplus 1 \in \bar{S}$ such that $\{0,1\}^{\times 3} = S \cup \bar{S}$ and let's define $\ket{\mu^\pm_{x}} = \frac{\ket{x} + \ket{\bar{x}}}{\sqrt{2}}$. Then the desired change of basis is obtained via the unitary 
\begin{gather}
    V = \sum_{x \in S} \ketbra{\mu^+_x}{x} + \ketbra{x}{\mu^+_x} + \ketbra{\mu^-_{\bar{x}}}{\bar{x}} + \ketbra{\bar{x}}{\mu^-_{\bar{x}}},
\end{gather}
which can be simply obtained by applying a Hadamard gate to the first qubit, followed by $\mathtt{CNOT}_{12}$ and $\mathtt{CNOT}_{13}$ for any 3 qubit string. This change of basis rotates the code giving the encoding $\ket{\psi_L} = c_+ \ket{\mu^+_{000}} + c_- \ket{\mu^-_{000}}$ where $c_\pm = \sqrt{2}(\alpha \pm \beta),$ whilst preserving the stabilisers as one can see in the below syndrome table.
\begin{table}[h]
\begin{tabular}[t]{|c|c|c|c|}
\hline
Subspace & Span                 & $\langle \sigma^{(1)}_z \sigma_z^{(2)}\rangle$ & $\langle \sigma_z^{(2)}\sigma_z^{(3)} \rangle$ \\ \hline
$\mathcal{C}$& $\ket{000},\ket{111}$ & 1                                              & 1                                              \\ \hline
$\mathcal{E}_1$ & $\ket{100}, \ket{011}$           & -1                                             & 1                                              \\ \hline
$\mathcal{E}_2$&$\ket{010}, \ket{101}$           & -1                                             & -1                                             \\ \hline
$\mathcal{E}_3$&$\ket{001}, \ket{110}$           & 1                                              & -1                                             \\ \hline
\end{tabular}
\quad
\begin{tabular}[t]{|c|c|c|c|}
\hline
Subspace & Span                 & $\langle \sigma^{(1)}_z \sigma_z^{(2)}\rangle$ & $\langle \sigma_z^{(2)}\sigma_z^{(3)} \rangle$ \\ \hline
$\mathcal{C}$& $\frac{\ket{000}\pm\ket{111}}{\sqrt{2}}$ & 1                                              & 1                                              \\ \hline
$\mathcal{E}_1$ & $\frac{\ket{100}\pm\ket{011}}{\sqrt{2}}$           & -1                                             & 1                                              \\ \hline
$\mathcal{E}_2$&$\frac{\ket{010}\pm\ket{101}}{\sqrt{2}}$           & -1                                             & -1                                             \\ \hline
$\mathcal{E}_3$&$\frac{\ket{001}\pm\ket{110}}{\sqrt{2}}$           & 1                                              & -1                                             \\ \hline
\end{tabular}
\end{table}
\paragraph*{Error Correction as a Cooling Problem}
In this basis, an error or correction merely permutes block diagonal subspaces bringing us to a setting closer that considered in the cooling protocols presented in this work. To fully reformulate this problem as a thermodynamic one, we must now introduce a Hamiltonian which energetically discriminates between each subspace. Consider the Hamiltonian 
\begin{gather}
    H_{\textrm{Code}} = \sum_{x \in S} E_x \ketbra{\mu^\pm_x}{\mu^\pm_x},
\end{gather}
where $E_x < E_{x'}$ if $x < x'$ in lexicographic order.  This Hamiltonian energetically sees the states spanning a given subspace as energy degenerate but can energetically discriminate between the codespace and individual error subspaces. In this way, the error free logical state $\ket{\psi_L} = c_+ \ket{\mu^+_{000}} + c_- \ket{\mu^-_{000}}$ is a zero temperature thermal state of this Hamiltonian. The application of an amplitude damping channel i.e. probabilistic bit flip errors results in a thermal distribution over the Hamiltonian $ H_{\textrm{Code}}$ giving a state $\rho_\beta = e^{-\beta H_{\textrm{Code}}}/\mathcal{Z}$. With access to an auxilliary quantum system such as a thermal machine this three qubit system can be unitarily cooled using the methods developed in this work giving 
\begin{gather}
    \rho_{\beta'}=\text{tr}_M\left\{U_{\textrm{Cool}}(\rho_\beta \otimes \rho_M)U^\dagger_{\textrm{Cool}}\right\}
\end{gather}
where $\beta' > \beta$ bringing one closer to the logically encoded state.

\section{Cooling With Symmetries}
\label{Ap:cooling_w_symmetries}

In the main text, we have assumed access to any unitary when cooling the system (although ultimately we have argued that only permutations in the energy eigenbasis are necessary). Here, we consider restricting the allowed dynamics due to the existence of some symmetry, and assess its effect on our ability to optimally cool the system.

Given some observable $Q$, we will now consider unitaries $U$ such that $[U, Q]=0$. Under such dynamics $Q$ is said to be symmetric as the commutator implies that $Q = U Q U^\dagger$. Physically, this means that under evolution via $U$ the probabilities of getting a given measurement outcome of $Q$ will remain a constant. As a consequence, the expectation value of $Q$ will also remain constant. Hence, this symmetry in the dynamics is said to have led to the conservation of $Q$, with this being an example of Noether Theorem.

\subsection{Energy Conservation}

Firstly, we will consider $H_T=H_S + H_M$, where
\begin{equation}
	H_T = \sum_{j_S \in \{0,1\}} \sum_{i_M \in \{0,1\}^n} \big( j_S \cdot \omega + i_M \cdot \Gamma) \ketbra{j_S i_M}{j_S i_M},
\end{equation}
such that $H_S, H_M$ are the Hamiltonians of the system and machine respectively, so that $H_T$ is the total Hamiltonian. The symmetry in the dynamics $U$, i.e., $[U, H_T]=0$, then leads to the total energy of the system and machine being conserved. A unitary $U$ commutes with $H_T$ if it acts an independent unitary within each degenerate subspace of $H_T$,
\begin{equation}
	U = \bigoplus_{c} U_{c}, ~ ~U_{c}U_{c}^\dagger=\mathbb{I}
\end{equation}
such that
\begin{equation}
	U_{c} = \sum_{j_S \in \{0,1\}} \sum_{i_M \in \{0,1\}^n} u_{i_S,i_M}\ketbra{j_S i_M}{j_S i_M} ~ ~ : ~ ~ j_S \cdot \omega + i_M \cdot \Gamma = E_c,
\end{equation}
where $E_c$ is the energy of the $c$th degenerate subspace and $u_{i_S,i_M}$ is some matrix coefficient of $U_c$ in the eigenbasis of $H_T$. Now applying the same logic as in the main text to each degenerate subspace, it can be seen that optimal cooling can be achieved by applying permutations of the energy eigenbasis within each degenerate subspace of $H_T$. As before, a permutation can only change the ground state population of the system qubit if it is of the form
\begin{equation}
	\ket{0_Si_M} \leftrightarrow \ket{1_Sk_M} : i_M, k_M \in \{0,1\}^{\times n}. \label{eq: necessary perms}
\end{equation}
Hence, to achieve any cooling there must exists at least one $i_M, k_M \in \{0,1\}^{\times n}$ such $i_M \cdot \Gamma = \omega + k_M \cdot \Gamma$, as this means there will exists a degenerate subspace of $H_T$ that can facilitate such a TLP. As before, this TLP will be cooling if 
\begin{gather}
	\frac{T_M}{T_S}\omega < (i_M - k_M)\cdot \Gamma. \label{eq: is_TLP_cooling}
\end{gather}
Although, due to the above equality, this will be satisfied if $T_S > T_M$. This result is expected, as the thermal state of $H_T$ is
\begin{equation}
	\tau_{SM} = \frac{1}{\mathcal{Z}_{SM}}e^{-\beta H_T} = \frac{e^{-\beta H_S}}{\mathcal{Z}_S} \otimes \frac{e^{-\beta H_M}}{\mathcal{Z}_M},
\end{equation}
such that the system and machine are locally thermal, each at the some temperature $\beta$. As energy conserving dynamics act invariantly on the thermal state i.e., $U \tau_{SM} U^\dagger = \tau_{SM}$ if $[U,H_T]=0$, no cooling of the system can be achieved if $T_M=T_S$ (as no populations can be moved). However, if the system and machine are at different temperatures, they are not initially in a thermal state of $H_T$. Hence, the initial state is not invariant under energy conserving unitaries and there exists the possibility to cool. Therefore, under energy conserving dynamics, the system can only be cooled using a machine if (a) $H_T$ has degeneracies facilitating TLP of the form eq.~\ref{eq: necessary perms}, and (b) the initial temperature of the machine is less than the initial temperature of the system, $T_S < T_M$.

The change in ground state temperature that can be achieved under energy conserving unitaries will then depend on both the dimension of the degenerate subspaces that can facilitate permutations of the form of eq.~\ref{eq: necessary perms} --- as this dictates the number of TLPs that can be performed within the given degenerate subspace --- and the temperature difference between the system and machine. To see this, it is noted that the change in the ground state population from performing the TLP $\ket{0_Si_M} \leftrightarrow \ket{1_Sk_M}$, within the $E_c$th degenerate subspace, is given by 
\begin{equation}
	\begin{split}
		\Delta p_c &= \frac{e^{-\beta_s \omega-\beta_M k_M \cdot \Gamma} - e^{-\beta_M i_M \cdot \Gamma}}{\mathcal{Z}_S \mathcal{Z}_M} \\
		&= \frac{e^{-\beta_M E_c} \big(e^{\omega(\beta_M - \beta_S)}-1\big)}{\mathcal{Z}_S \mathcal{Z}_M}, \label{eq: TLP in deg space}
	\end{split}
\end{equation}
where we have used the fact that $i_M \cdot \Gamma = \omega + k_M \cdot \Gamma = E_c$. Hence, it can be seen that the larger the disparity between $T_S$ and $T_M$, the larger $\beta_M-\beta_S$, and the larger the possible change in ground state energy.

To solidify the notions introduced here, we will now consider energy conserving unitaries in the motivating example of a two qubit machine considered in the main text. Given the assumption of the gap vector being ordered in non-increasing order, $\omega \leq \gamma_1 \leq \gamma_2$, there are two possible ways in which $H_T$ can have degeneracies that facilitate TLPs of the form of eq.~\ref{eq: necessary perms}: (1) $\omega+\gamma_1 = \gamma_2$, and (2) $\omega = \gamma_1 = \gamma_2 = \gamma$.

In case (1), there is a single degenerate subspace: ${\rm Span}\big\{\ket{0_S01}, \ket{1_S10}\big\}$. Hence, the TLP $\ket{0_S01} \leftrightarrow \ket{1_S10}$ is all that can be performed, leading to an optimal change in ground state population of the system of 
\begin{equation}
	\Delta p^*(1) = \frac{e^{-\beta_M \gamma_2} \big(e^{\omega(\beta_M - \beta_S)}-1\big)}{\mathcal{Z}_S \mathcal{Z}_M}.
\end{equation}
In case (2), there are two degenerate subspaces:
\begin{equation}
	\begin{split}
		{\rm Span}\big\{\ket{0_S01}, \ket{0_S10}, \ket{1_S00}\big\}, \\
		{\rm Span}\big\{\ket{0_S11}, \ket{1_S01}, \ket{1_S10}\big\}. 
	\end{split}
\end{equation}
In the first subspace, either the TLP $\ket{0_S01} \leftrightarrow \ket{1_S00}$ or $\ket{0_S10} \leftrightarrow \ket{1_S00}$ must be performed. An equivalent choice of TLPs then exists for the other degenerate subspace. Taken together, they lead to an optimal change in ground state population of 
\begin{equation}
	\Delta p^*(2) = \frac{\big(e^{-\beta_M \omega} + e^{-2\beta_M \omega} \big)  \big(e^{\omega(\beta_M - \beta_S)}-1\big)}{\mathcal{Z}_S \mathcal{Z}_M}.
\end{equation}

\subsection{Additional Conserved Quantities}

In addition to energy, one can also consider other observables that must be conserved under the cooling dynamics. For example, one can consider symmetry with respect to
\begin{equation}
	J_Z = \sigma_Z^S \otimes \mathbb{I}_M + \mathbb{I}_S \otimes \sum_{i=1}^n \sigma_Z^i, ~ ~ \sigma_Z^i = \mathbb{I}_1 \otimes \mathbb{I}_2 \hdots \mathbb{I}_{i-1} \otimes \sigma_Z \otimes \mathbb{I}_{i+2} \hdots \mathbb{I}_n,
\end{equation}
which is the total spin observable with constants set $1$. As before, the unitaries that conserve total spin are those that commute with $J_Z$, $[U, J_Z]=0$, and hence they act invariantly within the degenerate subspaces of $J_Z$. If both $H$ and $J_Z$ are conserved, one therefore needs to consider unitaries $U$ such that \hbox{$[U,H]=[U,J_Z]=0$}. Such a condition can be very restrictive, allowing arbitrary unitaries to act only within the joint degenerate eigenspaces of $H$ and $J_Z$. Although, if one assumes that the energy eigenvectors and spin eigenvectors align, such that $[H, J_Z]=0$, the problem becomes more tractable. For example, in the case of the two-qubit motivating example, the total spin operator in the joint eigenbasis is under this assumption is
\begin{equation}
	J_Z = {\rm diag}(3,1,1,-1,1,-1,-1,-3).
\end{equation}
Hence, it has the same degenerate subspaces as case (2) detailed above. Therefore, in case (2) both $H_T$ and $J_Z$ can be conserved whilst performing cooling. However, in case (1) there is no overlapping degenerate subspaces between $J_Z$ and $H_T$, meaning it is not possible to perform any cooling whilst conserving both quantities.
\\
\\
An interesting future extension in this direction would be to consider cooling with multiple conserved quantities, as was briefly explored in~\cite{silva2024optimalunitarytrajectoriescommuting}. Whilst having two conserved quantities in the above example of case(1) proved restrictive enough to remove any ability to cool, in larger systems it could still be possible to perform cooling despite added restriction. In addition, it would be interesting to characterise in what situations symmetric dynamics become so restrictive that cooling cannot occur. Moreover, it would be timely and interesting to also consider situations in which the conserved quantities do not commute e.g., $[H_T, J_Z]\neq 0$. Considering non-commuting conserved quantities in the process of thermalisation has already proved interesting \cite{Majidy2023}, it would now pertinent to consider their role in opposite task: cooling.

\section{Cooling a Qubit with $n$ \textit{interacting} Other Qubits -- General Purification}\label{appendix: Generalisation to Arbitrary States}

Throughout this work we have focused on the task of cooling a qubit, which was shown to be equivalent to increasing the probability of being in the ground state of the system Hamiltonian. However, given that we allow the application of arbitrary unitaries (when not considering symmetries), once we have increased the probability of being in the ground state we could arbitrarily change the basis of the state. As cooling is only applicable when considering the energy eigenbasis, and increasing the purity of a qubit just means increase its largest eigenvalue, it would be more appropriate to view the protocol detailed here as a {\em purity enhancement} protocol. If our restriction to unitaries that are permutations of the energy eigenbasis was a physical restriction, rather than just a result of them being sufficient for cooling, our protocol would be strictly cooling.

In the same vein, our protocol can easily be adapted to perform purity enhancement of an arbitrary system qubit given access to a machine of arbitrary qubits i.e., it can be applied more generally than just to thermal states. This can be achieved given that all qubit states can be considered to be a thermal state with respect to some Hamiltonian. Specifically, given some state $\rho$, with spectral decomposition
\begin{equation}
    \rho = \lambda_0 \ketbra{\lambda_0}{\lambda_0} + (1-\lambda_0) \ketbra{\lambda_1}{\lambda_1}, ~ ~ \lambda_0 \geq 2^{-1},
\end{equation}
an effective Hamiltonian can be defined as
\begin{equation}
    H^{\mathrm{eff}} = \gamma \ketbra{\lambda_1}{\lambda_1}, ~ ~ \gamma = -\frac{1}{\beta} \ln{\bigg(\frac{1}{\lambda_0} - 1\bigg)},
\end{equation}
where $\beta$ is some effective temperature that can be chosen to tune the gap. Hence, given some arbitrary set of states 
\begin{equation}
    \rho_S \otimes \rho_M = \rho_S \otimes \rho_{M_1} \otimes \rho_{M_2} \otimes \rho_{M_3} \otimes \ldots \otimes  \rho_{M_n}, 
\end{equation}
one can define an effective Hamiltonian for the system and each machine qubit such that 
\begin{equation}
    \rho_{S} = \frac{e^{-\beta_S H^{\mathrm{eff}}_S}}{Z_S} ~ ~{\mathrm{and}} ~ ~  \rho_{M_i} = \frac{e^{-\beta_M H^{\mathrm{eff}}_i}}{Z_i} ~ \forall~i,
\end{equation}
with total effective Hamiltonian 
\begin{equation}
    H_T^{\mathrm{eff}} = H_{S}^{\mathrm{eff}} + \sum_{i=1}^n H_i^{\mathrm{eff}}.
\end{equation}
From this, an effective system gap, $\omega^{\mathrm{eff}}$, and effect gap vector, $\Gamma^{\mathrm{eff}}$, can be defined. The gaps in the machine can be ordered such that the gap vector is in non-decreasing order. Then, if the chosen system qubit has $\lambda_0 \neq 2^{-1}$, such that the state is not maximally mixed, a $\beta_S$ can always be chosen for any $\rho$ such that $\omega^{\mathrm{eff}} \leq \gamma^{\mathrm{eff}}_1 : \gamma^{\mathrm{eff}}_1 \in \Gamma^{\mathrm{eff}}$. The framework is therefore the same as presented above, meaning that the same protocol can be performed. Although, TLPs of the joint eigenbasis of the system and machine are now considered, rather than TLPs of the energy eigenbasis e.g., TLPs of the form 
\begin{equation}
\ket{\lambda_0^S\lambda_{i_1}^{M_1}\lambda_{i_2}^{M_2} \hdots \lambda_{i_n}^{M_n}} \leftrightarrow \ket{\lambda_1^S\lambda_{j_1}^{M_1}\lambda_{j_2}^{M_2} \hdots \lambda_{j_n}^{M_n}},    
\end{equation}
where $\{ \vert \lambda_0^{S} \rangle \langle \lambda_0^{S} \vert, \vert \lambda_1^{S} \rangle \langle \lambda_1^{S} \vert \}$ and $\{ \vert \lambda_0^{M_k} \rangle \langle \lambda_0^{M_k} \vert, \vert \lambda_1^{M_k} \rangle \langle \lambda_1^{M_k} \vert \}$ are the eigenbasis of the system (the chosen qubit being purified) and the $k$th machine qubit respectively. Therefore, $i_M = i_1i_2 \hdots i_n$ and $j = j_1j_2 \hdots j_n$ are bit-strings labelling the eigenstates of $H_T$ which are tensor products of the local eigenstates. These bit-string are then used in determining which inequalities are satisfied i.e., which TLPs should be performed to purify (cool) the system. The purifying unitary generated by our protocol will then be in this joint eigenbasis, but can of course be decomposed into any basis once it has been found. Finally, we emphasis that the local eigenbasis of each qubit can be different, with the eigenbasis of different qubits not needing to be orthogonal.

In this alternative setup, all of the above results still apply, with the change in ground state population now being reinterpreted as a change in purity. Hence, given access to any number of qubits, our protocol can be used to find the unitary that maximally increases the purity of some chosen target qubit.

\section{Constructing cooling unitaries using Hypercube Graphs}
\label{app:hypercube_graph}
\subsection{Hypercube Graphs}
The graph $G = (V,E)$ with the set of vertices $V$ corresponding to the set of $n$-bit strings $\{0,1\}^{\times n}$ and edges $E$ connecting vertices $v_j$, $v_k$ if the bit strings $j,k$ differ only at one bit i.e. they are Hamming distance 1 from each other-- is known as $Q_n$ the $n$-bit hypercube graph~\cite{hypercube_survey,godsil01}. These graphs have $2^n$ vertices each having $n$ edges giving a total of $n2^{n-1}$ edges on the graph. These graphs are well studied in the field of combinatorics due to their simple structure and ubiquity in classical coding theory. Note that any Hamiltonian path on $Q_n$ corresponds to a Gray code between two bit strings.

\subsection*{Representing Cooling Problems on Hypercube Graphs}
%\begin{figure}[h]
%    \centering
%    \includegraphics[width=0.5\linewidth]{graph_inequality.png}
%    \caption{Maybe, place holder fig}
%    \label{fig:graph_fig}
%\end{figure}
The state of system and machine \begin{gather}
    \tau_S \otimes \tau_M = \hspace{-0.5cm}\sum_{\substack{i_S \in \{0,1\}\\ i_M \in \{0,1\}^{\times n }}} \frac{e^{-(\beta_S i_S \omega + \beta_M i_M \cdot \Gamma)}}{\mathcal{Z}_{SM}} \ketbra{i_S i_M}{i_S i_M},
\end{gather}
exists in a Hilbert space spanned by eigenkets which are indexed by the set of $n+1$ bit strings $i_Si_M \in \{0,1\}^{\times n}$. As such, the state $\tau_S \otimes \tau_M$ can be represented on an $n+1$ Hypercube graph since each eigenket $\ket{i_S i_M}$ corresponds to a vertex $v_j \in V = \{0,1\}^{\times n +1}$. To represent the state we add weights $w(v_j)$ to each vertex to give the vertex set $W$ where $w(v_j) = \frac{e^{i_S\omega + i_M\cdot \Gamma}}{\mathcal{Z}_S \mathcal{Z}_M}$ for $ v_j = i_Si_M \in V$ mapping each bitstring to its corresponding population in to give $\tau_S \otimes \tau_M$, where $Q^{W}_{n+1} = (V,E, W)$ the weighted Hypercube graph of $n+1$ bitstrings.

The disordered ground state subspace energy levels $\ket{0_S k}$ which need be exchanged for cooling correspond to the subgraph $K \subseteq Q_{n+1}$ with vertices $v_{0k}$ with $n$ bitstrings $k\in \mathbb{S}$ satisfying the condition 
\begin{gather}
    \frac{1}{2}\left(\frac{T_M}{T_S}\omega + E_\text{Max}\right) < k\cdot \Gamma.
\end{gather}
This in turn defines the subgraph $K \oplus 1 \subseteq Q_{n+1}$ with vertices $v_{1k\oplus1}$ of disordered excited state energy levels $\ket{1_S k \oplus 1}$ satisifying the property 
\begin{gather}
   (k\oplus 1) \cdot  \Gamma \leq \frac{1}{2}\left(\frac{T_M}{T_S}\omega + E_\text{Max}\right)
\end{gather}
ensuring that whenever the vertices in $Q_{n+1}$ are permuted such that a $v_l \in K$ and $v_m \in K \oplus 1$ are exchanged $S$ is cooled. Note that since the bit flip operation is an automorphism of $Q_{n+1}$, $K \oplus 1$ is a reflection of $K$ through an axis in $Q_{n+1}$ i.e. $K \simeq K \oplus 1$ the subgraphs of disordered populations are isomorphic to each other. 

\subsection{Cooling Unitaries as Minimum Weight Perfect Matchings}

Any \textit{optimal} cooling unitary $U_\text{cool}$ that cools $S$ with access to $M$ to a state with the largest obtainable ground state population \begin{gather} \Delta p^* = \max_{U_{SM} \in \mathcal{SU}(2^{n+1})} \bra{0_S} \text{tr}_M \left\{ U_{SM}(\tau_S \otimes \tau_M) U_{SM}^{\dagger} \right\}\ket{0_S}\end{gather} under unitary interaction must necessarily take the form 
\begin{gather}
    U_\text{cool} = \ketbra{A}{B} + \ketbra{B}{A} + \mathbb{1}_\text{Rest}
\end{gather}
which exchanges the set of disordered ground state subspace energy levels $\ket{A} = \{\ket{0_S i_M} \, | \, i_M \in \mathbb{S}\}$ with the disordered excited subspace energy levels $\ket{B} = \{\ket{1_S i_M \oplus 1} \, | \, i_M \in \mathbb{S}\}$.

As described in Sec.~\ref{sec:unitary} a central problem for constructing cooling unitaries is that there are $|\mathbb{S}|!$ such unitaries that achieve $\Delta p^*$ with potentially different resource costs as quantified by a chosen cost function e.g. gate complexity, dissipation or energetic cost. In this section we also gave a proof that the \textit{the Hamming weight difference between strings $j$ and $k$ lower bounds the gate complexity of a circuit $S_{j,k}$ that permutes their corresponding eigenkets} for the gate set $\mathcal{G} = \{\mathtt{X, CX, CCX, \dots , C^{n}X}\}.$ Minimising the sum of Hamming weight differences $D_H(a,b)$ between all energy levels to be exchanged for cooling would give a cooling unitary $U^*_\text{cool}$ which not only obtains the optimal ground state population $\Delta p^*$ but does so with minimal gate complexity, \begin{gather}
U^*_\text{cool} = \sum_{\substack{\ket{a} \in \ket{A} \\ \ket{b} \in \ket{B}}} \ketbra{a}{b} + \ketbra{b}{a} +\mathbb{1}_\text{Rest} \, : \, \min_{\substack{\ket{a} \in \ket{A} \\ \ket{b} \in \ket{B}}} \sum_{\substack{\ket{a} \in \ket{A} \\ \ket{b} \in \ket{B}}} d_H(a,b).
\end{gather}
But, \textit{how does one pair all the $\ket{a} \in \ket{A}$ and $\ket{b} \in \ket{B}$ to achieve this}?

\paragraph*{Perfect Matching} In graph theory~\cite{godsil01} a \textit{matching} $M \subseteq E$ is a subset of the edge set of a graph where each vertex is incident to only one edge in $M$, that is no two edges in $M$ share a common vertex. A matching is said to be \textit{perfect} if every vertex of the graph is incident once to a single edge in the matching. In other words, a perfect matching $M$ is a collection of edges pairing each vertex to exactly one other vertex in the graph.

If a graph $G = (V, E, W_E)$ also has an edge weight set $W_E$ given by some cost function  $w(E)$ then a \textit{minimum weight perfect matching} is a perfect matching $M^*$ on the graph $G$ where the sum of the weights of the edges in the matching is the minimum out af all possible perfect matchings on $G$ i.e.
\begin{gather}
 M^{*} = \arg  \min_{\substack{M \subseteq E \\ M \text{ perfect}}} \sum_{e \in M} w(e).
\end{gather}

\begin{lemma}
    There is a one-to-one correspondence between edges of the bipartite graph $G(K, K\oplus 1, E)$ and energy level exchanging unitaries $S_{j,k}$, where $K = \{k \in \{0,1\}^{\times n} | \, 0k \in \mathbb{S}\}$ is the subgraph related to disordered ground state subspace energy levels and $K \oplus 1 = \{k' \in \{0,1\}^{\times n} |k' = 1k\oplus1 \, \forall \, k \in K\}$ disordered excited state subspace energy levels.
\label{lemma:lem_graph}
\end{lemma}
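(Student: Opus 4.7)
The plan is to exhibit an explicit bijection between the edge set $E$ of the complete bipartite graph $G(K, K\oplus 1, E)$ and the family of two-level permutation unitaries that swap a disordered ground-state-subspace level with a disordered excited-state-subspace level. Concretely, I would define the map $\Phi : E \to \{S_{j,k}\}$ by sending each edge $e = (v_a, v_b)$, with $v_a \in K$ labelled by the $n$-bit string $a$ and $v_b \in K\oplus 1$ labelled by the $n$-bit string $b$, to the unitary $\Phi(e) = S_{0_S a,\, 1_S b}$ acting on the joint $SM$ Hilbert space.

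First I would check that $\Phi$ is well-defined: because $a \in \mathbb{S}$ by the definition of $K$, and $b$ labels a vertex in $K\oplus 1$, the $(n{+}1)$-bit strings $0_S a$ and $1_S b$ are distinct, so $S_{0_S a,\,1_S b} = \ketbra{0_S a}{1_S b} + \ketbra{1_S b}{0_S a} + \mathds{1}_{\rm Rest}$ is a bona fide TLP unitary of the form used throughout the paper. Next I would establish injectivity: if $(v_a, v_b) \neq (v_{a'}, v_{b'})$ then either $a \neq a'$ or $b \neq b'$, which forces the pair of exchanged basis kets to differ, and hence $\Phi(e) \neq \Phi(e')$.

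For surjectivity, I would fix any TLP $S_{0_S i_M,\,1_S j_M}$ with $i_M \in \mathbb{S}$ and $j_M \in \mathbb{S}\oplus 1$ (these are exactly the TLPs that exchange a level of $\ket{A}$ with a level of $\ket{B}$, as identified in Sec.~\ref{sec:unitary}). By construction of $K$ and $K\oplus 1$ the vertex $v_{i_M}$ lies in $K$ and $v_{j_M}$ lies in $K\oplus 1$, and since $G(K, K\oplus 1, E)$ is the complete bipartite graph on these two vertex sets, the edge $(v_{i_M}, v_{j_M})$ exists in $E$ and is mapped by $\Phi$ to the given unitary. Combining the three steps yields the claimed one-to-one correspondence.

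The only subtlety I anticipate is fixing the target set of unitaries precisely: $\Phi$ is not a bijection with \emph{all} TLPs on the $(n{+}1)$-qubit Hilbert space, but with the subclass of TLPs that pair a disordered ground-state level with a disordered excited-state level, i.e.\ those that can appear as constituents of an optimal $U_{\rm cool}$. Once this target is pinned down, the proof reduces to parsing the definitions of $K$, $K\oplus 1$, and the completeness of the bipartite graph, so no substantial technical obstacle remains; the statement is primarily a clean reformulation that sets up the subsequent identification of $U^*_{\rm cool}$ with a minimum weight perfect matching.
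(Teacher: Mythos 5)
Your proposal is correct and follows essentially the same route as the paper: both exhibit the explicit map between edges $\{v_j,v_k\}$ of the complete bipartite graph and two-level permutations $S_{j,k}$ and verify injectivity and surjectivity, the only cosmetic difference being that you send edges to unitaries while the paper sends unitaries to edges. Your closing remark about pinning the target set down to TLPs pairing a level of $\ket{A}$ with a level of $\ket{B}$ is in fact slightly more careful than the paper's loosely stated $\mathfrak{S}$, but it does not change the argument.
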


\begin{proof}Let $\mathfrak S$ denote the set of unitary basis ket permutations exchanging exactly two levels 
\begin{gather}
\mathfrak{S} = \{S_{j,k} \in \mathcal{SU}(2^n) \, |  S_{j,k} = \ketbra{j}{k} + \ketbra{k}{j} + \mathbb{1}_\text{Rest},\, \, j < k \},\end{gather}
and consider the function $f : \mathfrak{S} \rightarrow E$ which maps the unitary $S(j,k) \in \mathfrak{S}$ to the edge $\{v_j , v_k\} \in E \subseteq \{0,1\}^{\times n}\times\{0,1\}^{\times n}.$ To show that this function is bijective we must show that it is injective and surjective. For injectivity consider two edges $\{v_j,v_k\} = \{v_l,v_m\}$, since $v_j, v_k, v_l, v_m \in K \times K\oplus 1$ then $j,k,l,m \in \{0,1\}^{\times n}$ and so the assumption implies these $n$ bitstrings are equal $j =l$,  $k =m$. The $n$ bit strings index the $2^n$ bit strings of the machine index the eigenkets of the machine so $\ket{j} = \ket{l}, \ket{k} = \ket{m}$ and we find $S_{j,k} = \ketbra{j}{k} + \ketbra{k}{j} + \mathbb{1}_\text{Rest} =\ketbra{l}{m} + \ketbra{m}{l} + \mathbb{1}_\text{Rest} = S_{l,m}$ giving injectivity i.e. $f(x) = f(y) \implies x = y$. For surjectivity we wish to show that $\forall \, e \in E, \exists \,  S_{l,m} \in \mathfrak{S} \, : \, f(S_{l,m}) = e.$ To see this consider an arbitrary $e = \{v_j,v_k\}$ corresponding to $j,k \in \{0,1\}^{\times n}$ then clearly one can construct $S_{j,k} = \ketbra{j}{k} + \ketbra{k}{j} + \mathbb{1}_\text{Rest}$ such that $f(S_{j,k}) = \{v_j, v_k\} = e$ as desired. Therefore $f$ is a bijection. \end{proof}

As a direction consequence of this lemma we have the main statement of this section, that the problem of finding $U^*_\text{cool}$ is equivalent to a minimum weight perfect matching problem.

\begin{proposition}
    The problem of finding an optimal cooling unitary $U^*_\text{cool}$ which minimises the gate complexity relative to $\mathcal{G} = \{\mathtt{X,CX, C^2X, \dots, C^n X}\}$ to cool $S$ with gap $\omega$ at temperature $T_S$ given $n$ qubits with gaps $\Gamma = \{\gamma_1, \gamma_2, \dots, \gamma_n\}$ at temperature $T_M$ is equivalent to finding a minimum weight perfect matching on the bipartite graph $G(K, K\oplus 1, E, d_H)$.
\end{proposition}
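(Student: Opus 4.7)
The plan is to reduce the proposition to two separate correspondences: (i) a bijection between the set of optimal cooling unitaries and the perfect matchings of $G(K, K\oplus 1, E)$, and (ii) an identification of gate complexity with the sum of Hamming distances of the matched pairs. Combining the two gives that minimizing over cooling unitaries becomes minimizing over perfect matchings weighted by $d_H$, which is exactly MWPM on $G(K, K \oplus 1, E, d_H)$.

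For (i), I would start from the established form $U_\text{cool} = \ketbra{A}{B} + \ketbra{B}{A} + \mathbb{1}_\text{Rest}$ and observe that the restriction of $U_\text{cool}$ to $\ket{A}\cup\ket{B}$ is a bijection that maps $\ket{A}$ to $\ket{B}$, hence it factors into a product of $|\mathbb{S}|$ pairwise-disjoint TLPs $\prod_{(a,b)\in P} S_{a,b}$, where $P\subseteq \ket{A}\times\ket{B}$ is a pairing using every element of $\ket{A}$ and $\ket{B}$ exactly once. By the previous Lemma~\ref{lemma:lem_graph}, each $S_{a,b}$ is the image of an edge $\{v_a,v_b\}\in E$, so $P$ is in one-to-one correspondence with a perfect matching $M\subseteq E$ on $G(K,K\oplus 1, E)$. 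Conversely, every perfect matching $M$ yields a valid $U_\text{cool}$ with the same ground-state population $p'_0$, because $\Delta p_0$ depends only on the sets $\ket{A},\ket{B}$ being exchanged, not on how they are paired. This establishes the bijection between optimal cooling unitaries and perfect matchings.

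For (ii), I would combine the lower bound $\mathcal{C}_\mathcal{G}(S_{j,k})\geq D_H(j,k)$ proved earlier with a constructive upper bound $\mathcal{C}_\mathcal{G}(S_{j,k}) = D_H(j,k) + O(1)$ built explicitly from $\mathcal{G}$: a sequence of $\mathtt{X}$ and $\mathtt{C^{\ell}X}$ gates that first aligns the differing bits, flips them, and uncomputes the alignment (a standard construction for the transposition of two computational basis states). Since all TLPs in the decomposition act on disjoint pairs of basis states, their implementations can be composed independently, giving $\mathcal{C}_\mathcal{G}(U_\text{cool}) = \sum_{(a,b)\in P} \mathcal{C}_\mathcal{G}(S_{a,b})$ up to a constant independent of the pairing. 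Minimizing over $P$ therefore is equivalent to minimizing $\sum_{(a,b)\in P} D_H(a,b)$, which is the weight of the corresponding perfect matching when edges are weighted by $d_H$. This is exactly the MWPM problem on $G(K,K\oplus 1, E, d_H)$.

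The main obstacle will be step (ii): proving that the lower bound $\mathcal{C}_\mathcal{G}(S_{j,k})\geq D_H(j,k)$ is essentially tight and that the complexities of the disjoint TLPs add without cancellation. The additivity is subtle because one could imagine that implementing several TLPs jointly allows gate savings via shared control structure; ruling this out requires arguing that any decomposition of $U_\text{cool}$ into gates from $\mathcal{G}$ can be partitioned (perhaps after commutation) into subcircuits each realizing a single TLP, so that the Hamming-weight lower bound applies blockwise. The cleanest way I see to do this is via a potential-function argument: define a potential $\Phi$ on states that counts, over all pairs $(a,b)\in P$, the current Hamming distance of the image of $\ket{a}$ to $\ket{b}$, and show that each gate in $\mathcal{G}$ decreases $\Phi$ by at most $1$. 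Since $U_\text{cool}$ must achieve total decrease $\sum_{(a,b)\in P} D_H(a,b)$, the number of gates is lower-bounded by this sum, matching the constructive upper bound and completing the equivalence with MWPM.
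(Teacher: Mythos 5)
Your step (i) is essentially the paper's own argument: decompose $U_\text{cool}$ into disjoint TLPs, invoke the edge--TLP bijection of Lemma~\ref{lemma:lem_graph}, and note that a pairing exhausting $\ket{A}$ and $\ket{B}$ is precisely a perfect matching on $G(K,K\oplus 1,E)$, with the cooling power independent of the pairing. The genuine gap is in step (ii). The claim that the potential $\Phi$ (the sum over matched pairs of the current Hamming distance to target) decreases by at most $1$ per gate is false: a gate from $\mathcal{G}$ is a global permutation of the computational basis, so a single $\mathtt{X}$ or $\mathtt{CX}$ flips the target bit on exponentially many basis states simultaneously and can advance every matched pair at once, decreasing $\Phi$ by up to $|\mathbb{S}|$ in one step. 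Consequently the blockwise lower bound $\mathcal{C}_\mathcal{G}(U_\text{cool})\geq\sum_{(a,b)\in P}D_H(a,b)$ and the additivity $\mathcal{C}_\mathcal{G}(U_\text{cool})=\sum_{(a,b)\in P}\mathcal{C}_\mathcal{G}(S_{a,b})$ both fail. A concrete counterexample is already in the paper: in an $(n-1)$-reducible scenario the optimal pairing consists of the $2^{n-1}$ TLPs $\ket{0_S i^{n-1}1}\leftrightarrow\ket{1_S i^{n-1}0}$, with Hamming-distance sum $2^{n}$, yet the resulting unitary is $\mathtt{SWAP}_{S,M_n}$, implementable with three $\mathtt{CX}$ gates. (Your per-TLP upper bound $D_H(j,k)+O(1)$ is also off; a Gray-code construction needs roughly $2D_H(j,k)-1$ multi-controlled gates, i.e.\ only $\Theta(D_H)$.)

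Note, however, that you are trying to prove more than the paper does. In the paper, $U^*_\text{cool}$ is \emph{defined} as the optimal-cooling unitary whose pairing minimises $\sum_{(a,b)}d_H(a,b)$, with the Hamming sum adopted as a cost heuristically motivated by the per-TLP lower bound $\mathcal{C}_\mathcal{G}(S_{j,k})\geq D_H(j,k)$; the main text accordingly says such a unitary only ``potentially'' minimises gate complexity, and even observes that distinct MWPMs yield circuits of different complexity. The paper's proof of the proposition is therefore just your step (i) plus the definitional observation that minimising the pairwise Hamming sum over pairings is exactly the MWPM objective on $G(K,K\oplus 1,E,d_H)$. If you restrict your claim to that (cost $=$ total Hamming distance), your argument goes through; a true equivalence with gate complexity would require ruling out gate savings from jointly implementing TLPs, which, as the $\mathtt{SWAP}$ example shows, is not possible in general.
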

\begin{proof}
    Firstly, $U^*_\text{cool}$ can be expressed as a product $U^*_\text{cool} = S_{a_{|\mathbb{S}|}, b_{|\mathbb{S}|}}\dots S_{{a_2, b_2}}S_{a_1, b_1}$ of $S_{j,k}$ since each $\ket{a} \in \ket{A}$ is paired to exactly one $\ket{b} \in \ket{B}$ so none of the pairs intersect. Then, by Lemma~\ref{lemma:lem_graph} each $S_{j,k}$ can be represented as an edge $e = \{v_j, v_k\}$ in the bipartite graph $G(K, K\oplus 1, E)$, giving a collection of edges $M$. From the definition of $U^*_\text{cool}$ we can infer that $M$ is a matching on $G(K, K\oplus 1, E)$ since each $\ket{a} \in \ket{A}$ corresponds to $v_a \in K$ and each $\ket{b} \in \ket{B}$ corresponds to $v_b \in K\oplus 1$ exhausting $K$ and $K \oplus 1$ such that no vertex is incident on more than one edge. Lastly considering $G(K, K\oplus 1, E,d_H)$ where the Hamming distance $D_H(j,k)$ between two strings $j,k$ labelling vertices $v_j,v_k$ is added as a weight $w(e) = D_H(v_j,v_k)$ and the definition of $U^*_\text{cool}$ where the pairing of eigenkets being exchanged satisfies
    \begin{align}
        \min_{\substack{\ket{a} \in \ket{A} \\ \ket{b} \in \ket{B}}} \sum_{\substack{\ket{a} \in \ket{A} \\ \ket{b} \in \ket{B}}} D_H(a,b)
    \end{align}
    which implies by construction $M$ is minimum weight perfect matching on $G$ since it is a matching satisfying $\min_{\substack{M' \subseteq E \\ M \text{ perfect}}} \sum_{e \in M'} w(e)$ since $w(e) = D_H(v_j,v_k)$ and $U^*_\text{cool}$ minimises the Hamming weight across pairs of exchanged energy levels. Therefore any $U^*_\text{cool}$ corresponds to a minimum weight perfect matching.
\end{proof}

\subsection{The Hungarian Algorithm for Perfect Matching \& Complexity of Cooling}

\begin{figure}[h] 
    \centering
    % Right figure: Bipartite matching
    \begin{minipage}[c]{0.48\textwidth}
        \centering
        \begin{tikzpicture}[scale=2, baseline=(current bounding box.center)]
\def\radius{0.85}

% First cube
%\node[anchor=south east] at (115:\radius + 0.25) {$b)$};
\node[draw, circle] (C1) at (0, 0, 0) {$1$}; % (0,0,0)
\node[anchor=north east] at (0, 0, 0) {$\ket{0000}$};
\node[draw, circle] (D1) at (1, 0, 0) {$\frac{1}{32}$}; % (1,0,0)
\node[anchor=north west] at (1, 0, 0) {$\ket{0001}$};
\node[draw, circle] (E1) at (0, 1, 0) {$\frac{1}{22}$}; % (0,1,0)
\node[anchor=south east] at (0, 1, 0) {$\ket{0010}$};
\node[draw, circle, fill = lred] (F1) at (1, 1, 0) {$\frac{1}{704}$}; % (1,1,0)
\node[anchor=south west] at (1, 1, 0) {$\ket{0011}$};
\node[draw, circle] (G1) at (0, 0, 1) {$\frac{1}{3}$}; % (0,0,1)
\node[anchor=north east] at (0, 0, 1) {$\ket{0100}$};
\node[draw, circle, fill = lred] (H1) at (1, 0, 1) {$\frac{1}{96}$}; % (1,0,1)
\node[anchor=north west] at (1, 0, 1) {$\ket{0101}$};
\node[draw, circle, fill = lred] (A1) at (0, 1, 1) {$\frac{1}{66}$}; % (0,1,1)
\node[anchor=south east] at (0, 1, 1) {$\ket{0110}$};
\node[draw, circle, fill = lred] (B1) at (1, 1, 1) {$\frac{1}{2112}$}; % (1,1,1)
\node[anchor=south west] at (1, 1, 1) {$\ket{0111}$};

% Second cube shifted horizontally by 2 units
\node[draw, circle, fill = lred] (C2) at (2, 0, 0) {$\frac{1}{2}$}; % (2,0,0)
\node[anchor=north east] at (2, 0, 0) {$\ket{1000}$};
\node[draw, circle, fill = lred] (D2) at (3, 0, 0) {$\frac{1}{64}$}; % (3,0,0)
\node[anchor=north west] at (3, 0, 0) {$\ket{1001}$};
\node[draw, circle, fill = lred] (E2) at (2, 1, 0) {$\frac{1}{44}$}; % (2,1,0)
\node[anchor=south east] at (2, 1, 0) {$\ket{1010}$};
\node[draw, circle] (F2) at (3, 1, 0) {$\frac{1}{1408}$}; % (3,1,0)
\node[anchor=south west] at (3, 1, 0) {$\ket{1011}$};
\node[draw, circle, fill = lred] (G2) at (2, 0, 1) {$\frac{1}{6}$}; % (2,0,1)
\node[anchor=north east] at (2, 0, 1) {$\ket{1100}$};
\node[draw, circle] (H2) at (3, 0, 1) {$\frac{1}{192}$}; % (3,0,1)
\node[anchor=north west] at (3, 0, 1) {$\ket{1101}$};
\node[draw, circle] (A2) at (2, 1, 1) {$\frac{1}{132}$}; % (2,1,1)
\node[anchor=south east] at (2, 1, 1) {$\ket{1110}$};
\node[draw, circle] (B2) at (3, 1, 1) {$\frac{1}{4224}$}; % (3,1,1)
\node[anchor=south west] at (3, 1, 1) {$\ket{1111}$};

% Draw edges of the first cube
\draw (C1) -- (D1); 
\draw (C1) -- (E1); 
\draw (C1) -- (G1);
\draw (D1) -- (F1); 
\draw (D1) -- (H1);
\draw (E1) -- (F1); 
\draw (E1) -- (A1);
\draw (F1) -- (B1); 
\draw (G1) -- (H1);
\draw (G1) -- (A1); 
\draw (H1) -- (B1);
\draw (A1) -- (B1);

% Draw edges of the second cube
\draw (C2) -- (D2); 
\draw (C2) -- (E2); 
\draw (C2) -- (G2);
\draw (D2) -- (F2); 
\draw (D2) -- (H2);
\draw (E2) -- (F2); 
\draw (E2) -- (A2);
\draw (F2) -- (B2); 
\draw (G2) -- (H2);
\draw (G2) -- (A2); 
\draw (H2) -- (B2);
\draw (A2) -- (B2);

% Draw connections between the two cubes
\draw[dashed, gray, thin] (D1) -- ++(0.4,0); \draw[dashed, gray, thin] (C2) -- ++(-0.4,0); 
\draw[dashed, gray, thin] (H1) -- ++(0.4,0);
\draw[dashed, gray, thin] (G2) -- ++(-0.4,0);
\draw[dashed, gray, thin] (B1) -- ++(0.4,0);
\draw[dashed, gray, thin] (A2) -- ++(-0.4,0); 
\draw[dashed, gray, thin] (F1) -- ++(0.4,0);\draw[dashed, gray, thin] (E2) -- ++(-0.4,0);

% Additional dashed gray lines extending horizontally
\draw[dashed, gray, thin] (A1) -- ++(-0.5, 0, 0);
\draw[dashed, gray, thin] (G1) -- ++(-0.5, 0, 0);
\draw[dashed, gray, thin] (E1) -- ++(-0.5, 0, 0);
\draw[dashed, gray, thin] (C1) -- ++(-0.5, 0, 0);
\draw[dashed, gray, thin] (H1) -- ++(-0.25, -0.25, 0);
\draw[dashed, gray, thin] (G1) -- ++(-0.25, -0.25, 0);

\draw[dashed, gray, thin] (B2) -- ++(0.5, 0, 0);
\draw[dashed, gray, thin] (D2) -- ++(0.5, 0, 0);
\draw[dashed, gray, thin] (F2) -- ++(0.5, 0, 0);
\draw[dashed, gray, thin] (H2) -- ++(0.5, 0, 0);
\draw[dashed, gray, thin] (H2) -- ++(-0.25, -0.25, 0);
\draw[dashed, gray, thin] (G2) -- ++(-0.25, -0.25, 0);
\end{tikzpicture}
    \end{minipage}
    \begin{minipage}[c]{0.48\textwidth}
        \centering
        \begin{tikzpicture}[scale=1, baseline=(current bounding box.center)]
    % Define positions for left (0xxx) and right (1xxx) nodes
    \def\xleft{0}
    \def\xright{3}

    % LEFT SIDE NODES (0xxx)
    \node[draw, circle, fill=lred] (A) at (\xleft, 3) {$\frac{1}{66}$};
    \node[anchor=east] at (\xleft, 3) {$\ket{0110}$};

    \node[draw, circle, fill=lred] (B) at (\xleft, 2) {$\frac{1}{2112}$};
    \node[anchor=east] at (\xleft, 2) {$\ket{0111}$};

    \node[draw, circle, fill=lred] (C) at (\xleft, 1) {$\frac{1}{704}$};
    \node[anchor=east] at (\xleft, 1) {$\ket{0011}$};

    \node[draw, circle, fill=lred] (D) at (\xleft, 0) {$\frac{1}{96}$};
    \node[anchor=east] at (\xleft, 0) {$\ket{0101}$};

    % RIGHT SIDE NODES (1xxx)
    \node[draw, circle, fill=lred] (E) at (\xright, 3) {$\frac{1}{2}$};
    \node[anchor=west] at (\xright, 3) {$\ket{1000}$};

    \node[draw, circle, fill=lred] (F) at (\xright, 2) {$\frac{1}{64}$};
    \node[anchor=west] at (\xright, 2) {$\ket{1001}$};

    \node[draw, circle, fill=lred] (G) at (\xright, 1) {$\frac{1}{44}$};
    \node[anchor=west] at (\xright, 1) {$\ket{1010}$};

    \node[draw, circle, fill=lred] (H) at (\xright, 0) {$\frac{1}{6}$};
    \node[anchor=west] at (\xright, 0) {$\ket{1100}$};

    % DRAW EDGES: Complete bipartite connections from left to right
    \foreach \leftnode in {A,B,C,D} {
        \foreach \rightnode in {E,F,G,H} {
            \draw[thick] (\leftnode) -- (\rightnode);
        }
    };
    \draw[orange,thick] (A) -- (E);
    \draw[orange,thick] (B) -- (F);
    \draw[orange,thick] (C) -- (G);
    \draw[orange,thick] (D) -- (H);
    \end{tikzpicture}
    \end{minipage}
\caption{On the left we see the hypercube graph visualisation $Q_4$ for one of the scenarios given in Example 2 App-.\ref{app:example_2} and on the right a visualisation of the complete bipartite graph $G(K, K\oplus 1, E)$ in this scenario with a minimum weight perfect matching in orange.}
\label{fig:matchings}
\end{figure}
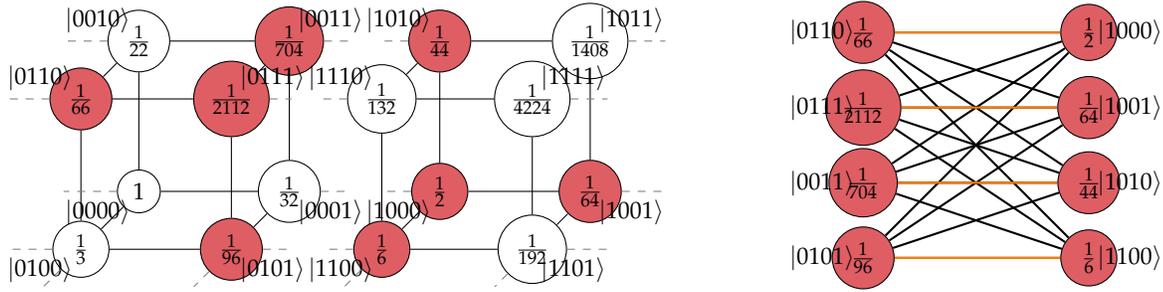

Having now shown that the problem of finding a unitary to cool $S$ given $M$ that minimises the gate complexity given the gate set $\mathcal G$ can be recast as a minimum weight perfect matching problem we now describe how MWPMs can be found algorithmically and discuss complexity theoretic implications on quantum cooling in light of this connection. 

Minimum weight perfect matching is a well studied and seminal problem in the field of combinatorial optimisation with applications in network theory. For a more in-depth introduction see~\cite{comb_opt}. For complete bipartite graphs, like $G(K,K\oplus1, E, d_H)$ the bipartite graph connecting the subgraphs of disordered populations $K, K\oplus 1 \subset Q_{n+1}$ (as visualised in Fig~\ref{fig:matchings}), minimum weight perfect matchings can be found using the Hungarian Algorithm~\cite{comb_opt}.

The matrix formulation of this algorithm for a complete weighted bipartite graph $B(V_1,V_2, E, C)$ with $2m$ vertices is as follows. Consider the $m \times m$ cost matrix $C$ with entries $c_{ij} = c(v^{(1)}_i,v^{(2)}_j) \, \forall \, v^{(1)}_i \in V_1, v^{(2)}_j \in V_2$ where $c(\cdot)$ is cost function determining the weight of an edge $\{v^{(1)}_i,v^{(2)}_j\} \in E$. A perfect matching corresponds to picking $m$ numbers from this matrix picking once from each row and column of the matrix. Without loss of generality one can fix the columns of $C$ and permute its $m$ rows using the permutation matrix $P_{\pi} \, : \, \pi \in \mathcal{S}_m$. This defines a new matrix $\tilde{C} = P_\pi C$ whose diagonal corresponds to the cost of a given perfect matching and so the trace of $\tilde{C} = P_\pi C$, $\text{tr}\{\tilde{C}\} = \sum^m_{i = 1} \tilde{c}_{ii} = \sum_{e \in M} c(e)$ corresponds to the weight of this perfect matching $M$. Minimising this trace over all permutations 
\begin{gather}
 \min_{\pi \in \mathcal{S}_m} \text{tr}\{P_\pi C\} = \text{tr}\{P_{\pi^*}C\}
\end{gather}
one finds the minimum weight perfect matching $M^*$ given by the set of edges $E^* = \{\{ \{v^{(1)}_i, v^{(2)}_{\pi^*(i)} \, : \, i \in (1,2,\dots,m) \}\}$. The time complexity of this algorithm is $\mathcal{O}(m^3)$~\cite{comb_opt}.

\paragraph*{Example} Let's make the connection between cooling unitaries and minimum weight perfect matching more tangible by returning to examples from Appendix~\ref{app:example_2}. In b) we observed case 2 i.e. $\omega < \gamma_1 + \gamma_2 - \gamma_3$ for $T_M = T_S$ which we visualise in Fig.~\ref{fig:matchings} for $\omega = \log 2, \gamma_1 = \log 3, \gamma_2 = \log 22 $ and $\gamma_3 = \log 32$ factored by $\frac{32}{69}$. Here we have the disoredered energy levels $\ket{A} = \{\ket{0_S 011}, \ket{0_S 101}, \ket{0_S110}, \ket{0_S111}\}$ and $\ket{B} = \{\ket{1_S 100}, \ket{1_S 010}, \ket{1_S001}, \ket{1_S000}\}$. Thinking through the Hungarian Algorithm for our example we have the cost matrix $C$ with entries given by the Hamming distance between the bitstrings labelling energy levels 
\begin{gather}
C =
\begin{pNiceMatrix}[first-row,first-col]
    & \ket{0011} & \ket{0101} & \ket{0110} & \ket{0111}       \\
\ket{1100} & 4   & \textcolor{lred}{2}   & 2   & \textcolor{blue}{3}    \\
\ket{1010} & \textcolor{lred}{2}   & 4   & \textcolor{blue}{2}   & 3   \\
\ket{1001} & 2   & \textcolor{blue}{2}   & 4   & \textcolor{lred}{3}    \\
\ket{1000} & \textcolor{blue}{3}   & 3   & \textcolor{lred}{3}   & 4   \\
\end{pNiceMatrix},
\end{gather}
a minimum perfect weight matching now corresponds to picking 4 entries from this matrix using every row and column once such that the sum of these 4 entries is as low as possible. A valid MWPM (highlighted in red) gives the exchanges $\ket{0_S 011}\leftrightarrow\ket{1_S 010} = 2, \ket{0_S 101}\leftrightarrow\ket{1_S 100} = 2, \ket{0_S 111}\leftrightarrow\ket{1_S 001} = 3$ and $\ket{0_S 110}\leftrightarrow\ket{1_S 000} = 3$ and $2 + 2 + 3 + 3 =10$ minimising the sum of interest. This corresponds to the unitary
\begin{gather}
    U^*_\text{cool} = (S_{0\textcolor{red}{01}1, 1\textcolor{red}{01}0} \oplus S_{0\textcolor{blue}{10}1,1\textcolor{blue}{10}0}) \oplus (S_{011\textcolor{red}{1}, 100\textcolor{red}{1}} \oplus S_{011\textcolor{blue}{0},100\textcolor{blue}{0}}) = \mathrm{SWAP}_{S,M_3} \oplus S_{{011,100}_{SM_1M_2}}
\end{gather}
where we see a complexity reduction has been obtained since the first pair of two level swaps features two free indices leading to a bipartite $\rm SWAP$ and the second pair features one free index corresponding to an irreducible 4-partite operation. 

An issue is that there can be multiple minimum weight perfect matchings obtained through the minimisation over $C$ each of which correspond to cooling unitaries with equal capability to potentially minimise gate complexity but do not have equal circuit complexity as some might lead to unitaries with lower order of interaction. This is the case comparing with the other valid solution to this example highlighted in blue given by the exchange $\ket{0_S111} \leftrightarrow \ket{1_S100}, \ket{0_S110} \leftrightarrow \ket{1_S010}, \ket{0_S101} \leftrightarrow \ket{1_S001}, \ket{0_S011} \leftrightarrow \ket{1_S000}$ giving the unitary 

\begin{gather}
U'_\text{cool} = S_{01\textcolor{red}{1}0,10\textcolor{red}{1}0} \oplus S_{01\textcolor{blue}{0}1,10\textcolor{blue}{0}1} \oplus S_{0111,1100} \oplus S_{0011,1000} = S_{{010,100}_{SM_1M_3}} \oplus S_{0111,1100} \oplus S_{0011,1000}
\end{gather}
which is more complex than $U^*_\text{cool}.$ 

\paragraph*{Optimising Energetic Cost} Adding an energetic constraint to reduce the change in energy in the machine to the cost matrix by considering the cost function $c(i,j) = D_H(i,j)(|i-j|\cdot \Gamma)$ one can optimise the \textit{work cost} of the cooling unitary which also turns out to make the \textit{optimal} MWPM unique in this example for $\omega =1, \Gamma= (1,2,4)$ unique giving 

\begin{gather}
C =
\begin{pNiceMatrix}[first-row,first-col]
    & \ket{0011} & \ket{0101} & \ket{0110} & \ket{0111}       \\
\ket{1100} & 16   & \textcolor{lred}{6}   & 2   & \textcolor{blue}{15}    \\
\ket{1010} & \textcolor{lred}{6}   & 8   & \textcolor{blue}{0}   & 12   \\
\ket{1001} & 2   & \textcolor{blue}{0}   & -8   & \textcolor{lred}{6}    \\
\ket{1000} & \textcolor{blue}{15}   & 12   & \textcolor{lred}{6}   & 24   \\
\end{pNiceMatrix},
\end{gather}
 where we see that the cost of $U^*_\text{cool}$ is 24 which is lower than the cost of $U'_\text{cool}$ which is 30. Of course the reduction in the number of viable MWPMs one gets by choosing this additional constraint depends on the structure of $\Gamma$, it is left to future work to determine the fascinating interplay which we allude to here between unitary complexity, energetic cost and minimum weight perfect matchings for cooling unitaries.

\subsubsection*{Complexity of finding an Optimal Cooling Unitary}

The Hungarian method for obtaining a minimum weight perfect matching on a complete bipartite graph of $2m$ nodes require $\mathcal{O}(m^3)$ operations~\cite{comb_opt}. In our setting the complete bipartite graph formed by the disordered energy levels $G(K,K\oplus 1, E, d_H)$ has $2|\mathbb{S|}$ vertices and so finding $U^*_\text{cool}$ has an algorithmic complexity of $\mathcal{O}(|\mathbb{S}|^3)$ and since $0 \leq |\mathbb{S}| \leq 2^{n}$ depending on the structure of $\Gamma$, where $n$ is the number of machine qubits, the complexity of this task can be anywhere from polynomial to exponential. 

Note that before obtaining $U^*_\text{cool}$ via minimum weight perfect matching an agent need necessarily acquire knowledge of $\mathbb{S}$ by examining at most $2^{n}$ inequalities. Thus the complexity of cooling is more accurately of order $\mathcal{O}(|\mathbb{S}|) + \mathcal{O}(|\mathbb{S}|^3)$ where the first contribution stems from determining the disordered populations by verifying the inequalities described in prior sections. %It is interesting to compare with naively sorting the whole vector of $m=2^{n+1}$ entries which in the worst case scales $\mathcal{O}(m^2)$ exponentially always. So whilst our approach \textit{can} reduce the complexity of finding the optimal cooling unitary since $|\mathbb{S}| + |\mathbb{S}|^3$ is mostly smaller than $2^{2(n+1)}$ for $|\mathbb{S}|<<2^{n+1}$ when $|\mathbb{S}|$ is sufficiently large even our strategy that is mindful of the energetic structure of the problem approaches that of a naive sort. 
%In particular, we define $\mathcal{G}_S := \text{span}\{\ket{0_Sj_M}|~j_M\in \mathbb{S}\}$, $\mathcal{E}_S := \text{span}\{\ket{1(j_M\oplus1)}|~j_M\in \mathbb{S}\}$, so that we can divide the total Hilbert space of the system and machine as a direct sum $\mathcal H_{tot} = \mathcal G_S \oplus \mathcal E_S \oplus \mathcal H_{\not S}$. Then any maximally cooling unitary $U$ must satisfy
%\begin{align}
%    U\ket{\phi} \in \mathcal E_S \quad &\forall \ket{\phi}\in \mathcal G_S~, \hspace{1cm}U\ket{\phi} \in \mathcal G_S \quad \forall \ket{\phi}\in \mathcal E_S~, \hspace{1cm}
%    U\ket{\phi} \in \mathcal H_{\not S} \quad \forall \ket{\phi}\in \mathcal H_{\not S}~.
%\end{align}

\subsection{Automorphisms of the Hypercube Graph}

A graph automorphism is an operation on the graph that preserves the vertex-edge connectivity structure in the graph. The automorphism group of a graph $\mathrm{Aut}(G)$ is the set of all such operations. Whilst $U_\text{cool}$ is generally not an element of the automorphism group studying the structure of this group can give us insights into which gates are needed for which cooling scenarios. 

The automorphism group of the hypercube graph $\mathrm{Aut}(Q_n)$ is the hyperoctohedral group~\cite{godsil01} which can be expressed as $\mathbb{Z}^n_2 \rtimes \mathcal{S}_n$ where $\mathbb{Z}^n_2$ is the group of operations that add a bit string $x \in \{0,1\}^n$ modulo 2 to the bit strings corresponding to the vertices of the graph effectively given by the possible bit flips and $\mathcal{S}_n$ is the group of permutations of $n$ objects. 

\paragraph*{The Two Qubit Machine Example - Aut$(Q_3)$}
In the given example, we have $z_x \in \mathbb{Z}^3_2$ such that $z_x(v_j) = j_0j_1j_2 \oplus x_0x_1x_2$ for all bit strings $j$ and since there are $2^3$ such bit strings $x$ we can add then $|\mathbb{Z}^3_2|= 2^3$. And $S_3 = \{(1)(2)(3),(12)(3),(13)(2),(123),(1)(23),(132)\}$ in cycle notation such that $\pi_{123} \ket{abc} = \ket{cab}$. Therefore $\mathrm{Aut}(Q_3)$ has $48 = 2^3 \times 3!$ elements given by the product of these group which in this case can easily be visualised as the possible rotations on the cube (including inversions). In this way $\mathbb{Z}^n_2 \rtimes S_n$ in general has order $2^n \times n!$ . Of course this does not give one the full set of $2^n!$ permutations of the diagonal elements of the density matrix but this because we are restricting ourselves to Hamming weight 1 increasing operations in a given step and as such it is only in very many steps that we can approach the full set of permutation unitaries.

\begin{figure}
    \centering
    \includegraphics[width = 0.8\linewidth]{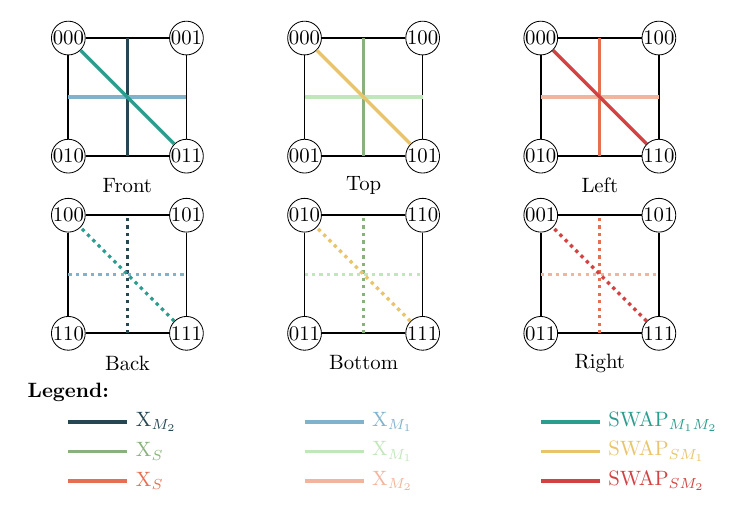}
    \caption{Here we visually represent the elements of the automorphism group of the cube graph $Q_3$ in terms of quantum gates. A $\sigma_x$ operation applied to one of these three qubits generates the subgroup $\mathbb{Z}^3_2$ by reflecting two opposite faces of the cube together through a vertical or horizontal line. We use $\mathrm{X}_i$ to denote the action of $\sigma_x$ on the $i$th qubit. The $\mathrm{SWAP}$ gate generates the permutation subgroup $S_3$ whose action reflects two opposite faces of the cube through a diagonal plane. Controlled Operations act on one face only and so do not preserve the symmetry of $\mathrm{Aut}(Q_3)$. We denote the action of controlled operations with dotted lines and anti-controlled (act conditioned on 0) using solid lines.}
    \label{fig:cube_ops}
\end{figure}

\subsubsection*{Can you cool a qubit with access to two qubits optimally without a Toffoli?}
\label{sec:toff}
Now that we know that we have these group theoretic tools at our disposal - can we use them to show that $G_0 =\{\mathrm X, \mathrm CX\}$ are not enough to cool a qubit to a temperature lower than that of a machine qubit? In the case that $\ket{0_S01} \leftrightarrow \ket{1_S10}$ and $\ket{0_S11} \leftrightarrow \ket{1_S 00}$ are both cooling energy exchanges we know that $G_0$ is enough since $U_\text{cool} = \mathtt{SWAP}_{S,M_{2}}$ which can be generated by $\mathtt{CX}$. But in the case that $U_\text{cool} = \ketbra{0_S11}{1_S00} + \ketbra{1_S00}{0_S11} + \mathbb{1}_\text{Rest} = S_{011,100}$ we can show that this cannot be generated by $G_0$ as no $S_{j,k}$ can be generated by $G_0$. 

\begin{proposition}
    It is not possible to cool a qubit $S$ with gap $\omega$ and temperature $T_S$ given access to two qubits with temperature $T_M$ and gaps $\gamma_1, \gamma_2$ without a $\mathtt{Toffoli}$ gate if 
    $$\gamma_2\leq \frac{T_M}{T_S}\omega + \gamma_1.$$
\end{proposition}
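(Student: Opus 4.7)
The plan is to combine the paper's own characterisation of the swappable set with a group-theoretic obstruction. First, I would rewrite the hypothesis $\gamma_2 \leq (T_M/T_S)\omega + \gamma_1$ as $(T_M/T_S)\omega \geq \gamma_2 - \gamma_1$ and apply the analysis of scenario (a) in Sec.~\ref{sec:example} (equivalently, Eq.~\eqref{eq:ineq_simp} for $n=2$): the swappable set is $\mathbb{S} = \{11\}$, so any optimal cooling unitary, viewed as a permutation of the eight computational basis states of $SM$, must map the four largest-population bit-strings $A = \{000, 001, 010, 100\}$ bijectively onto the ground-state subspace $B = \{0 i_M : i_M \in \{0,1\}^{\times 2}\} = \{000, 001, 010, 011\}$.

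Next, I would invoke the standard fact that the permutations of $\mathbb{F}_2^3$ generated by $\{\mathtt{X}, \mathtt{CNOT}\}$ are exactly the affine group $\mathrm{Aff}(\mathbb{F}_2^3) = \{x \mapsto Lx + c \,:\, L \in \mathrm{GL}(3,\mathbb{F}_2),\, c \in \mathbb{F}_2^3\}$: each $\mathtt{CNOT}_{ij}$ realises the elementary row operation $x \mapsto x + x_i e_j$, these generate $\mathrm{GL}(3,\mathbb{F}_2)$, and the $\mathtt{X}$ gates supply the translations. The key structural property I would then use is that affine permutations send affine subspaces to affine subspaces.

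Finally, I would check that $B = \{x \in \mathbb{F}_2^3 : x_1 = 0\}$ is a two-dimensional linear subspace of $\mathbb{F}_2^3$, while $A$ is not an affine subspace: $A$ contains the origin $000$, so being affine would force closure under addition, yet $001 + 010 = 011 \notin A$. Hence no affine bijection can send $A$ to $B$, and no circuit built solely from $\{\mathtt{X}, \mathtt{CNOT}\}$ realises optimal cooling; at least one $\mathtt{Toffoli}$ (the only remaining gate in $\mathcal{G}$ for $n=2$) is therefore required.

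The main subtlety to nail down is the reduction in step one, namely that every optimal cooling unitary, not merely the minimal TLP $S_{011,100}$, must permute $A$ onto $B$ as a set. For unitaries in $\langle \mathtt{X}, \mathtt{CNOT}\rangle$ this is automatic because they are basis permutations; in general one runs a short rearrangement argument noting that $\sum_{i_M} \bra{0_S i_M} U(\tau_S\otimes\tau_M) U^\dagger \ket{0_S i_M}$ is maximised exactly when the four ``ground columns'' of $U$ concentrate their weight on $A$. The same template plausibly generalises: exchanging $\ket{0_S 1^n}\leftrightarrow\ket{1_S 0^n}$ cannot be realised without a $\mathtt{C}^{n}\mathtt{X}$-type gate, since the fixed-point set of such a transposition has size $2^{n+1}-2$, which for $n\geq 2$ is not a power of two and hence cannot be the fixed-point set of any affine map.
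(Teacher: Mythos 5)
Your proof is correct, and it takes a genuinely different route from the paper's. The paper reduces, exactly as you do, to the fact that under $\gamma_2\leq (T_M/T_S)\omega+\gamma_1$ the swappable set is $\{11\}$, but its obstruction is then aimed at the single transposition $S_{011,100}$: it argues via the coset structure of $\mathrm{Aut}(Q_3)\cong\mathbb{Z}_2^3\rtimes S_3$ (and, alternatively, via permutation parity — $\mathtt{X}$ and $\mathtt{CX}$ on three qubits induce even permutations of the eight basis states, while a single transposition is odd) that no nontrivial two-level swap lies in $\langle\mathtt{X},\mathtt{CX}\rangle$. You instead characterise $\langle\mathtt{X},\mathtt{CNOT}\rangle$ as the affine group $\mathrm{AGL}(3,\mathbb{F}_2)$ and note that any optimal-cooling basis permutation must carry $A=\{000,001,010,100\}$ onto the plane $B=\{x_1=0\}$, which is impossible since $A$ is not an affine subspace while $B$ is. This buys something real: optimal cooling is achieved by any of the $4!\,4!$ permutations mapping $A$ onto $B$, and many of these are even — e.g. $(100\;011)(000\;001)$ — so the parity argument (and the coset argument, which only treats two-level swaps) does not by itself exclude all of them, whereas your affine obstruction rules them all out in one stroke; the paper's argument is shorter and more elementary but strictly speaking only excludes implementing $S_{011,100}$ itself. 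Two small caveats: identifying $A$ as the unique set of four largest populations requires the strict inequality $\gamma_2<(T_M/T_S)\omega+\gamma_1$ (at equality $p_{001}=p_{110}$ and $\mathtt{SWAP}_{S,M_2}$, which is affine, is already optimal — a degeneracy the paper also glosses over), and your closing aside about general $n$ only shows that the transposition $\ket{0_S 1^n}\leftrightarrow\ket{1_S 0^n}$ is not affine, i.e. not in $\langle\mathtt{X},\mathtt{CNOT}\rangle$; it does not yet show that a $\mathtt{C}^{n}\mathtt{X}$ is required once Toffolis are allowed, nor that every optimal permutation is excluded.
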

\begin{proof}
Firstly, if $\gamma_2\leq \frac{T_M}{T_S}\omega + \gamma_1$ then the only energy level exchange which cools is $\ket{0_S11} \leftrightarrow \ket{1_S00}$ that is $U_\text{cool} = S_{011,100}$. But no non-trivial $S_{j,k}$ can be produced by $\mathtt{X}$ and $\mathtt{CX}$. To see this,
note that the application of $\mathtt X$ and $\mathtt{SWAP}$ give proper subgroups of $\mathrm{Aut}(Q_3)$ and so give a coset group structure which partitions $\mathrm{Aut}(Q_3)$. We can represent the action of the subgroups whose semiproduct gives the automorphism group on the cube using the quantum gates $\mathtt X$ and $\mathtt{SWAP}$ as \begin{align*}\mathbb{Z}^3_2&=\{\mathtt{I, X_1, X_2, X_3, X_1X_2, X_1 X_3,X_2 X_3, X_1X_2X_3}\}, \\
S_3&=\{\mathtt{I, SWAP_{1,2}, SWAP_{2,3}, SWAP_{1,3}, SWAP_{1,2}SWAP_{2,3}, SWAP_{1,3}SWAP_{2,3}}\}. \end{align*}
As such $xS_3$ for every $x \in \mathbb{Z}^3_2$ partitions the action of the group into 8 orbits each of size 6 and similarly we have the partitioning $s\mathbb{Z}^3_2$ for every $s \in S_3$ which gives 6 orbits each of size 8. A basis ket permutation $S_{j,k}$ flips a single edge of the cube breaking the edge connectivity of the graph and so is not an automorphism of $Q_3$ and does not respect the symmetry of $\mathrm{Aut}(Q_3)$. As such it cannot be carried out using products of $\mathtt X$ and $\mathtt{SWAP}$. $\mathtt{CX}$ and $\mathtt{CSWAP}$ are similarly not automorphisms of $Q_3$ as they act on a single face of the cube and break edge-connectivity so perhaps they could generate an $S_{j,k}$. But notice that the actions $\mathtt{CX}$ and $\mathtt{CSWAP}$ act within the orbits of the partitions $xS_3$ and $s\mathbb{Z}^3_2$ respectively. Focusing on $\mathtt{CX}$ this operation acts on a single face of the cube flipping two edges at once. Say one desires one of the edge flips and not the other to give one $S_{j,k}$. Then, one could carry out $S_{j,k}$ if the action of the unwanted flip could be reversed under the action of another $\mathtt{X}$ or $\mathtt{CX}$ but this is not possible since both $\mathtt{X}$ and $\mathtt{CX}$ act partitions $xS_3$ and so reversing the action of an unwanted flip of a given $\mathtt{CX}_i$ is not possible using $\mathtt{X}_j$ or $\mathtt{CX}_j$ from another orbit since they form a partition. Therefore, it is not possible to generate $S_{j,k}$ using $\{\mathtt{X, CX, SWAP}\}$, as $\mathtt{SWAP}$ can be generate by $\mathtt{CX}$. As a result in the scenario where $U_\text{cool} = S_{011,100}$ one cannot optimally cool using $\{\mathrm{X, CX, SWAP}\}$. The necessity of $\mathtt{CCX}$ is then apparent by noting any Gray code from $011$ to $100$ e.g. $011,111, 110,100$ each step of which can be done with $\mathtt{X}$ and $\mathtt{CCX}$. In summary, we have shown that $\mathtt{X,CX}$ are not sufficient to generate $S_{011,100}$ and that $\mathtt{X, CCX}$ are sufficient. 

An alternative way to see this is that to optimally cool the system in this case an odd permutation is needed, as only a single pair of elements is permuted. As $\mathtt{X}$ and $\mathtt{CX}$ apply even permutations, and combinations of even permutations are also even permutations, no combination these gates can implement the necessary odd permutation to optimally cool. However, when restricting two $3$ qubits $\mathtt{CCX}$ is able to implement this odd permutation. Hence, given access to $\mathtt{CCX}$ one can optimally cool in this case. 
\end{proof}
Examining how the general automorphism group of the $n$-bit hypercube graph partitions under $G=\{\mathtt{X,CX,CCX, \dots C^{n-1}X}\}$ and deriving general no-go theorems in different $k$-reducibility scenarios could be an interesting direction for future research. 

\section{The Swappable Set}
\label{app:swappable_set}
                      
The problem of cooling a qubit with access to $n$ other qubits reveals a mathematical structure on the set of $n$ bit strings in terms of two partial orders. The goal of this section is to introduce these partial orders and in doing so to characterize the \emph{swappable set} $\mathbb{S} := \left\{ j_M\in \{0,1\}^{\times n} | \frac{1}{2}(\frac{T_{M}}{T_S}\omega + E_{\text{Max}}) < E(j_M) \right\}$, where $E(j_M) := j_M\cdot\Gamma$,  which contains the bit-strings denoting energy levels with populations that need to be swapped in order to maximally cool the system. 

One can combine these partial orders to reveal a forced structure on the energetics of the system and machine. This mathematical structure can be used to obtain some results that stand at the root of problem of how to identify which states need to be swapped and which should not. Fascinatingly, these results hold no matter the energy structure of the system and machine, as they arise directly from the Hilbert space structure of multiple qubits. We can therefore identify two parts of the general problem of identifying which are the elements of the swappable set: 1. the mathematical structure imposed by the Hilbert space structure of qubits 2. the specification of the energetic structure of the system and machine.

\subsection{Lexicographic Partial Order}
By definition of the swappable set, we can note that if for two bit-stings $j_M$ and $k_M$ we have $E(j_M) \geq E(k_M)$ then $k_M\in \mathbb{S} \implies j_M\in \mathbb{S}$. That is, if $k_M$ corresponds to an energy level that should be exchanged then $j_M$ should also be exchanged to cool $S$ -- in this $\mathbb{S}$ is ordered.

With knowledge of the refined energy structure of $\Gamma$, energy imposes a total order on the bitstrings. That is, that the map $E : \{0,1\}^{\times n}\rightarrow\mathbb{R}$ defines a total order on the bit-strings of the swappable set. However, we can weaken the total order defined by $E$ to find a complete partial order that is independent of the detailed structure of $\Gamma$ and only its global properties  i.e. that it is formed of non-decreasing energetic gaps. In what follows, we will describe and examing these partial orders on $\mathbb{S}$. To begin, we will denote the partial order relation between bit-strings (in text) by the symbol $\succeq$ and between kets (in figures) by $\rightarrow$.\\

\begin{definition}
    \emph{Hamming weight partial order.} $j_M,k_M\in\{0,1\}^{\times n}$ are $j_M\succeq_\# k_M$ if and only if $j_i \geq k_i$ for all $i=1,\dots,n$.
\end{definition}
This partial order encodes the action of flipping $0$s of a bit-string into $1$s: if we can generate $j_M$ by changing $0$s of $k_M$ into $1$s then we have $j_M\succeq_\# k_M$ and $\ket{k_M}\rightarrow_\# \ket{j_M}$. The three defining properties of a weak partial order are: 1. \emph{reflexivity} $a\succeq a$, \emph{antisymmetry} if $a\succeq b$ and $b\succeq a$ then $a = b$, and \emph{transitivity} if $a\succeq b$ and $b\succeq c$ then $a\succeq c$. These three properties are satisfied naturally by $\succeq_\#$. Furthermore the condition, if $j_M\succeq_\# k_M$ and $\exists~i$ such that $j_i > k_i$ renders this partial order a strict or strong partial order : $j_M\succ_\# k_M$. Naturally, this relation carries over to the energy of the corresponding ket
$$E(j_M) = \sum_{i=1}^n j_i\gamma_i \geq \sum_{i=1}^n k_i\gamma_i = E(k_M)~,$$
where we used the positivity of the gaps. We can therefore conclude that the total order induced by $E(\cdot)$ follows from the Hamming weight partial order with additional ordering constraints. 

Graphically, for $n=3$ we can map this order on the kets to obtain the following directed graph.
\begin{equation*}
    \begin{tikzpicture}[node distance=0.5cm and 0.7cm]
\setlength{\fboxsep}{1pt} % default is 3pt
% Nodes
\node (a) at (0,0) {$\ket{000}$};
\node (b2) [right=of a] {$\ket{010}$};
\node (b1) [above=of b2] {$\ket{100}$};
\node (b3) [below=of b2] {$\ket{001}$};
\node (c1) [right=of b1] {$\ket{110}$};
\node (c2) [right=of b2] {$\ket{101}$};
\node (c3) [right=of b3] {$\ket{011}$};
\node (d) [right=of c2] {$\ket{111}$};

% Arrows
\draw[arrow]  (a) -- node[pos=0.9,below=-1pt] {$_\#$} (b1);
\draw[arrow] (a) -- node[pos=0.95,below=-2pt] {$_\#$} (b2);
\draw[arrow] (a) -- node[pos=0.93,below=-2pt] {$_\#$} (b3);

\draw[arrow] (b1) -- node[pos=0.95,below=-2pt] {$_\#$} (c1);
\draw[arrow] (b1) -- node[pos=0.93,below=-2pt] {$_\#$} (c2);
\draw[arrow] (b2) -- node[pos=0.93,below=-1pt] {$_\#$} (c1);
\draw[arrow] (b2) -- node[pos=0.93,below=-2pt] {$_\#$} (c3);
\draw[arrow] (b3) -- node[pos=0.95,below=-2pt] {$_\#$} (c3);
\draw[arrow] (b3) -- node[pos=0.93,below=-1pt] {$_\#$} (c2);

\draw[arrow] (c1) -- node[pos=0.93,below=-2pt] {$_\#$} (d);
\draw[arrow] (c2) -- node[pos=0.95,below=-2pt] {$_\#$} (d);
\draw[arrow] (c3) -- node[pos=0.93,below=-1pt] {$_\#$} (d);

\end{tikzpicture}
\end{equation*}
This graph is simply a directed version of $Q_3$. At this point we can see that this graph seems to suggest to group the bit-strings by column, i.e. $\{0,1\}^{\times n} = \bigcup_{k=0}^n \mathbb{B}_k \, : \, \bigcap^n_{k=0} \mathbb{B}_k = \emptyset$ where $\mathbb{B}_k$ contains all the bit-strings with fixed Hamming weight $k$--i.e $j_M\in \{0,1\}^{\times n}$ is an element of $\mathbb{B}_k$ if and only if its number of ones $\#(j_M)$ is equal $k$. The Hamming weight partial order cannot compare two elements with the same Hamming weight. We can therefore use the fact that we can always re-order the machine gaps so that they are increasing $\gamma_1\leq \gamma_2\leq \dots\leq \gamma_n$ to obtain a partial order within each $\mathbb{B}_k$.\\

For $j_M \in \mathbb{B}_\ell$ let us label the indices where $j_M$ has a $1$ as $j_M(i)$ denoting the $i$th bit of $j_M$ is a 1.
\begin{definition}
    \emph{Positional partial order.} $j_M, k_M \in \mathbb{B}_\ell$ are $j_M\succeq_\text{p} k_M$ if and only if $j_M(i) \geq k_M(i)$ for all $i\in [1,\ell]$.
\end{definition}
This partial order encodes the action of swapping a $10$ for a $01$ in the bit-string. Therefore if we can construct a $j_M$ from a $k_M$ via this action, then we have $j_M\succeq_{\text p} k_M$ and $\ket{k_M}\rightarrow_{\text p} \ket{j_M}$. Similarly to $\succeq_\#$, $\succeq_{\text p}$ satisfies the defining properties of a weak partial order, can be made into the strong partial order $\succ_{\text p}$ by requiring $\exists~i$ such that $j_M(i) > k_M(i)$, and the order carries to the energies of the kets
$$E(j_M) = \sum_{i=1}^\ell \gamma_{j_M(i)} \geq \sum_{i=1}^\ell \gamma_{k_M(i)} = E(k_M)~,$$
where we used that we can rewrite the energy as $E(k_M) = \sum_{j=1}^n k_j\gamma_j= \sum_{i=1}^\ell \gamma_{k_M(i)}$ since $k_{j} = 1$ for $j\in\{k_M(i)\}_{i=1}^\ell$ and $k_{j} = 0$ if $j\notin\{k_M(i)\}_{i=1}^\ell$. This allows us to conclude that the total order induced by $E$ is follows from the positional partial order with additional conditions. Conceptually, this partial order compares which out of two bit strings $i_M,j_M$  \textit{has more 1s on the right} and so features higher energetic gaps contributing to their energies $E(i_M)$ and $E(j_M)$.

If we map this order onto a directed graph of kets we obtain a directed graph whose structure is analogous to that of an antisymmetric $\ell$-tensor over an $n$-dimensional space. For $n=5$ and $\ell=2$ we obtain
\begin{equation*}
    \begin{tikzpicture}[node distance=0.5cm and 0.75cm]
\setlength{\fboxsep}{1pt} % default is 3pt
% Nodes
\node (a1) at (0,0) {$\ket{11000}$};
\node (b1) [below=of a1] {$\ket{10100}$};
\node (b2) [right=of b1] {$\ket{01100}$};
\node (c1) [below=of b1] {$\ket{10010}$};
\node (c2) [below=of b2] {$\ket{01010}$};
\node (c3) [right=of c2] {$\ket{00110}$};
\node (d1) [below=of c1] {$\ket{10001}$};
\node (d2) [below=of c2] {$\ket{01001}$};
\node (d3) [below=of c3] {$\ket{00101}$};
\node (d4) [right=of d3] {$\ket{00011}~.$};

% Arrows
\draw[arrow]  (a1) -- node[pos=0.65,left=-1pt] {$_\text{p}$} (b1);
\draw[arrow]  (b1) -- node[pos=0.65,left=-1pt] {$_\text{p}$} (c1);
\draw[arrow]  (c1) -- node[pos=0.65,left=-1pt] {$_\text{p}$} (d1);
\draw[arrow]  (b2) -- node[pos=0.65,left=-1pt] {$_\text{p}$} (c2);
\draw[arrow]  (c2) -- node[pos=0.65,left=-1pt] {$_\text{p}$} (d2);
\draw[arrow]  (c3) -- node[pos=0.65,left=-1pt] {$_\text{p}$} (d3);

\draw[arrow] (b1) -- node[pos=0.93,below=-1pt] {$_\text{p}$} (b2);
\draw[arrow] (c1) -- node[pos=0.93,below=-1pt] {$_\text{p}$} (c2);
\draw[arrow] (c2) -- node[pos=0.93,below=-1pt] {$_\text{p}$} (c3);
\draw[arrow] (d1) -- node[pos=0.93,below=-1pt] {$_\text{p}$} (d2);
\draw[arrow] (d2) -- node[pos=0.93,below=-1pt] {$_\text{p}$} (d3);
\draw[arrow] (d3) -- node[pos=0.93,below=-1pt] {$_\text{p}$} (d4);
\end{tikzpicture}
\end{equation*}
This graph can be obtained by assigning a \textit{direction} or degree of freedom for each $1$ and moving along the designated direction when the corresponding $1$ is shifted, resulting in a graph that expands in $\ell$ directions--e.g. for $\ell=3$ vertically for the third $1$, horizontally ($x$) for the second, and horizontally ($y$) for the first.\\

Since the Hamming weight partial order only compares elements across two different $\mathbb{B}_\ell$ and the positional partial order only compares within a single $\mathbb{B}_\ell$, it is very natural to try to combine the two by finding a single partial order that does not depend on the \textit{fine details of the} energy gaps and extends both of these.

\begin{definition}
    \emph{Lexicographic partial order.} $j_M, k_M \in \{0,1\}^{\times n}$ with $J=\#(j_M)$ and $K=\#(k_M)$. Then $j_M \succeq k_M$ if and only if $J\geq K$ and $j_M(i+J-K) \geq k_M(i)$ for all $i=1,\dots,K$.
\end{definition}
Reflexivity and antisymmetry follow from $\succeq_{\text{p}}$ being a partial order. For transitivity we only need to note that if $j_M\succeq k_M$ and $k_M\succeq l_M$, then we can shift $i+J-K$ by $K-L$ to obtain $j_M(i+J-L) \geq k_M(i+K-L) \geq l_M(i)$ for all $i=1,\dots,L=\#(l_M)$.

Similarly to the previous two orders, we can make the lexicographic order strict: $j_M\succ k_M$ if and only if $j_M\succeq k_M$ and either $J > K$ or $j_M\succ_\text p k_M$. This order is truly an extension of both the positional order and the Hamming weight order: we can see this because we can define $l_M\in \mathbb{B}_{K}$ such that $l_M(i)=j_M(i+J-K)$ for $i=0,\dots,K$, then by construction we have $l_M\succeq_\text p k_M$ and $j_M\succeq_\#l_M$. This intermediary element also allows us to conclude that the total order induced by $E$ is an extension of the lexicographic order with further ordering constraints. By mapping it onto a graph for $n=3$ and $n=4$ we obtain
\begin{equation*}
    \begin{tikzpicture}[node distance=0.5cm and 0.7cm]
\setlength{\fboxsep}{1pt} % default is 3pt
% Nodes
\node (a) at (0,0) {$\ket{000}$};
\node (b2) [right=of a] {$\ket{010}$};
\node (b1) [above=of b2] {$\ket{100}$};
\node (b3) [below=of b2] {$\ket{001}$};
\node (c1) [right=of b1] {$\ket{110}$};
\node (c2) [right=of b2] {$\ket{101}$};
\node (c3) [right=of b3] {$\ket{011}$};
\node (d) [right=of c2] {$\ket{111}$~,};

% Arrows
\draw[arrow] (a) --(b1);
\draw[arrow] (a) -- (b2);
\draw[arrow] (a) -- (b3);

\draw[arrow] (b1) -- (c1);
\draw[arrow] (b1) -- (c2);
\draw[arrow] (b2) -- (c1);
\draw[arrow] (b2) -- (c3);
\draw[arrow] (b3) -- (c3);
\draw[arrow] (b3) -- (c2);

\draw[arrow] (c1) -- (d);
\draw[arrow] (c2) -- (d);
\draw[arrow] (c3) -- (d);

\draw[arrow] (b1) -- (b2);
\draw[arrow] (b2) -- (b3);
\draw[arrow] (c1) -- (c2);
\draw[arrow] (c2) -- (c3);

\end{tikzpicture}
\end{equation*}

\begin{equation*}
    \begin{tikzpicture}[node distance=0.4cm and 1.2cm]
\setlength{\fboxsep}{1pt} % default is 3pt
% Nodes
\node (a) at (0,0) {$\ket{0000}$};

\node (b2) [right=of a, yshift=0.4cm] {$\ket{0100}$};
\node (b1) [above=of b2] {$\ket{1000}$};
\node (b3) [below=of b2] {$\ket{0010}$};
\node (b4) [below=of b3] {$\ket{0001}$};

\node (c2) [right=of b1] {$\ket{1010}$};
\node (c1) [above=of c2] {$\ket{1100}$};
\node (c3) [right=of b2] {$\ket{1001}$};
\node (c4) [right=of b3] {$\ket{0110}$};
\node (c5) [right=of b4] {$\ket{0101}$};
\node (c6) [below=of c5] {$\ket{0011}$};

\node (d1) [right=of c2] {$\ket{1110}$};
\node (d2) [right=of c3] {$\ket{1101}$};
\node (d3) [right=of c4] {$\ket{1011}$};
\node (d4) [right=of c5] {$\ket{0111}$};

\node (e) [right=of d2, yshift=-0.4cm] {$\ket{1111}$~.};

% Arrows
\draw[arrow] (a) -- (b1);
\draw[arrow] (a) -- (b2);
\draw[arrow] (a) -- (b3);
\draw[arrow] (a) -- (b4);

\draw[arrow] (b1) -- (c1);
\draw[arrow] (b1) -- (c2);
\draw[arrow] (b1) -- (c3);
\draw[arrow] (b2) -- (c1);
\draw[arrow] (b2) -- (c4);
\draw[arrow] (b2) -- (c5);
\draw[arrow] (b3) -- (c2);
\draw[arrow] (b3) -- (c4);
\draw[arrow] (b3) -- (c6);
\draw[arrow] (b4) -- (c3);
\draw[arrow] (b4) -- (c5);
\draw[arrow] (b4) -- (c6);

\draw[arrow] (c1) -- (d1);
\draw[arrow] (c1) -- (d2);
\draw[arrow] (c2) -- (d1);
\draw[arrow] (c2) -- (d3);
\draw[arrow] (c3) -- (d2);
\draw[arrow] (c3) -- (d3);
\draw[arrow] (c4) -- (d1);
\draw[arrow] (c4) -- (d4);
\draw[arrow] (c5) -- (d2);
\draw[arrow] (c5) -- (d4);
\draw[arrow] (c6) -- (d3);
\draw[arrow] (c6) -- (d4);

\draw[arrow] (d1) -- (e);
\draw[arrow] (d2) -- (e);
\draw[arrow] (d3) -- (e);
\draw[arrow] (d4) -- (e);

\draw[arrow] (b1) -- (b2);
\draw[arrow] (b2) -- (b3);
\draw[arrow] (b3) -- (b4);
\draw[arrow] (d1) -- (d2);
\draw[arrow] (d2) -- (d3);
\draw[arrow] (d3) -- (d4);

\draw[arrow] (c1) -- (c2);
\draw[arrow] (c2) -- (c3);
\draw[arrow] (c4) -- (c5);
\draw[arrow] (c5) -- (c6);

\draw[arrow] (c2) to[bend right=40] (c4);
\draw[arrow] (c3) to[bend left=40] (c5);

\end{tikzpicture}
\end{equation*}

We underline here that the lexicographic order is a partial order and not a total order as non-comparable elements still remain, for example $\ket{0001}$ and $\ket{1100}$, unlike in an ordering of these bitstrings with complete knowledge of the energetic structure which induces a total order. We can therefore apply this lexicographic order to the original observation to obtain the following lemma.
\begin{lemma}\label{lem:partial_order}
    Let $j_M,k_M\in\{0,1\}^{\times n}$ such that $j_M \succeq k_M$ and $k_M\in \mathbb{S}$. Then $j_M \in \mathbb{S}$.\\
\end{lemma}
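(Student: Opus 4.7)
The plan is to reduce the claim to the statement that the lexicographic partial order is refined by the energy functional $E(\cdot) = (\cdot)\cdot\Gamma$. If we can show that $j_M \succeq k_M$ implies $E(j_M)\geq E(k_M)$, then since $k_M\in\mathbb{S}$ gives $E(k_M) > \tfrac{1}{2}(\tfrac{T_M}{T_S}\omega + E_{\rm Max})$, we immediately obtain $E(j_M) > \tfrac{1}{2}(\tfrac{T_M}{T_S}\omega + E_{\rm Max})$ and hence $j_M\in\mathbb{S}$, which is the entire content of the lemma.

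To establish the energy-monotonicity of $\succeq$, I would reuse the intermediate-element decomposition already highlighted in the discussion preceding the lemma. Concretely, given $j_M\succeq k_M$ with $J=\#(j_M)$ and $K=\#(k_M)$, I define $l_M\in\mathbb{B}_K$ by $l_M(i) = j_M(i+J-K)$ for $i=1,\dots,K$. By construction $l_M$ is obtained from $j_M$ by flipping $J-K$ of its $1$s to $0$s, so $j_M\succeq_{\#} l_M$; and the defining inequality $j_M(i+J-K)\geq k_M(i)$ of the lexicographic order becomes exactly $l_M(i)\geq k_M(i)$, so $l_M\succeq_{\text p} k_M$. This splits the single $\succeq$-comparison into a Hamming-weight step followed by a positional step.

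Each of these two steps is already known to preserve energy: for the Hamming-weight step we have the identity $E(j_M) - E(l_M) = \sum_{i:\, j_i > l_i}\gamma_i \geq 0$ since all gaps are nonnegative, and for the positional step the ordering $\gamma_1\leq \gamma_2\leq\dots\leq\gamma_n$ gives $E(l_M) - E(k_M) = \sum_{i=1}^{K}\bigl(\gamma_{l_M(i)}-\gamma_{k_M(i)}\bigr)\geq 0$ because each $l_M(i)\geq k_M(i)$ indexes a gap that is at least as large as $\gamma_{k_M(i)}$. Chaining the two inequalities yields $E(j_M)\geq E(l_M)\geq E(k_M)$, closing the argument.

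There is no real obstacle beyond verifying the intermediate construction is well-defined (which requires $J\geq K$, guaranteed by $j_M\succeq k_M$) and that each elementary partial order preserves energy, both of which are essentially bookkeeping on the non-decreasing gap structure of $\Gamma$. The lemma is then immediate from the contrapositive-free deduction $E(k_M) > \tfrac{1}{2}(\tfrac{T_M}{T_S}\omega + E_{\rm Max}) \Rightarrow E(j_M) > \tfrac{1}{2}(\tfrac{T_M}{T_S}\omega + E_{\rm Max})$. I would also remark that the same proof shows the stronger statement that $\succ$ implies $E(j_M) > E(k_M)$ whenever at least one of the two refining inequalities is strict, which is the reason the lexicographic order captures precisely the $\Gamma$-independent part of the full energetic total order on $\mathbb{S}$.
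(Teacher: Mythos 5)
Your proposal is correct and follows essentially the same route as the paper: the paper's proof simply invokes the previously established fact that the total order induced by $E(\cdot)$ extends the lexicographic order, and that fact is obtained there via exactly your intermediate element $l_M$ with $l_M(i)=j_M(i+J-K)$, split into a Hamming-weight step and a positional step. You have merely written out the energy inequalities $E(j_M)\geq E(l_M)\geq E(k_M)$ explicitly, which the paper leaves implicit.
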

\emph{Proof.} Since the total order induced by $E(\cdot)$ is an extension of $\succeq$, we also have $E(j_M)\geq E(k_M)$. Therefore, by the definition of $S$, we have $j_M\in \mathbb{S}$. \qed \\

From this lemma we can conclude that the characterization of $S$ without knowledge of each individual $\gamma_j \in \Gamma$ comes down to finding a small collection of bit-strings that are incomparable to each other from which we can generate all the other elements of $S$.

\subsection{Always and Never Swappable Sets}
The structure induced by the partial order allows us to note some interesting aspects about the characterization of the swappable set that do not depend on the energetic structure of the machine. In particular we seek to analyse if there are some elements of $\{0,1\}^{\times n}$ that are always or never part of $\mathbb{S}$. Which is to say, ground state subspace energy level which when exchanged for excited subspace levels seem to always cool or heat $S$ regardless of the structure of  $M$. In order to do so, it is worth noticing that the definition of $\mathbb S$ is dependent on the system gap $\omega$: as we make $\omega$ larger then we can reduce the size of $\mathbb S$ to the point that it becomes the empty set for $\omega \geq E_{\text{Max}}$. Therefore, we will consider for a moment the set $\mathbb{S}_0 := \left\{ j_M\in \{0,1\}^{\times n} | \frac{1}{2}E_{\text{Max}} < E(j_M) \right\}$. Note that, by definition, $\mathbb{S} \subseteq \mathbb{S}_0$.

To characterize $\mathbb{S}_0$ we are trying to find which energy levels are always larger or smaller than the median $\frac{1}{2}E_{\text{Max}}$. Since $|\{0,1\}^{\times n}|= 2^n$, which is even, the median $\frac{1}{2}E_{\text{Max}}$ is not defined by a specific energy level $E(j_M)$, but from the average of two energy levels $E(j_M)$ and $E(j_M\oplus 1)$. Therefore, almost everywhere for $\Gamma \in (\mathbb R^+)^n$, there is no $j_M$ such that $E(j_M) = \frac{1}{2}E_{\text{Max}}$. In turn this implies that, $|\mathbb{S}_0| = 2^{n-1}$ by definition of the median, for almost every energy structure. It is important to note that counter-examples can easily be found to this statement: it is sufficient to construct an energy structure such that for some $j_M$ we have $E(j_M) = E(j_M\oplus 1)$, however it is easy to see that these examples are not robust to arbitrarily small noise on the energy gaps. Furthermore, it is also worth noting that $\mathbb{S}_0$ corresponds to the swaps that \emph{need} to be performed in order to fully cool the system. However, swaps the on states corresponding to the bit-strings such that $E(j_M) = E(j_M\oplus 1) = \frac{1}{2}E_{\text{Max}}$ neither cool nor heat the system. Therefore, when $\omega = 0$, one can always bring the cardinality of the set of states that are being swapped to $2^{n-1}$.\\

Two main observations are at play for the characterization of which states are guaranteed to be in (or out of) $\mathbb{S}_0$, first is \lemref{lem:partial_order}, and the second is the existence of \emph{incompatible bit-strings}. We say that two elements of $\{0,1\}^{\times n}$ are incompatible when the belonging of one in $\mathbb{S}_0$ is equivalent to the exclusion of the other in $\mathbb{S}_0$. A natural example of such two elements can always be constructed with any $j_M\in \{0,1\}^{\times n}$ by picking its conjugate $\tilde j_M:=j_M\oplus1$: by definition of $E(\cdot)$ we have $E(\tilde j_M) = E_{\text{Max}}-E(j_M)$, which allows us to conclude that  $j_M \in \mathbb{S}_0 \Rightarrow \tilde j_M\notin \mathbb{S}_0$. If we further assume that $E(j_M) \neq \frac{1}{2}E_{\text{Max}}$ then we obtain the other direction: $j_M\notin \mathbb{S}_0 \Rightarrow \tilde j_M\in \mathbb{S}_0$. Therefore there are three scenarios: (i) $j_M\in \mathbb{S}_0$ and $\tilde j_M\notin \mathbb{S}_0$ (ii) $j_M\notin \mathbb{S}_0$ and $\tilde j_M\in \mathbb{S}_0$ (iii) $j_M\notin \mathbb{S}_0$ and $\tilde j_M\notin \mathbb{S}_0$, with (iii) happening if and only if $E(j_M) = \frac{1}{2}E_{\text{Max}}$. Then it is worth noting that if for a bit-string $k_M$ there exists $j_M$ such that $j_M \succeq k_M$ and $\tilde j_M \succeq k_M$ then $k_M \notin \mathbb{S}_0$ regardless of the energy structure. And, almost always, we also will have $\tilde k_M\in \mathbb{S}_0$. This is particularly interesting for cases where $\tilde k_M \succeq k_M$, where we can pick $j_M = k_M$ and reach the same conclusion. It turns out this is the defining and only characteristic of the bit-strings that are either never or (almost) always in $\mathbb{S}_0$ and that it generalizes to the entire set $\mathbb{S}$.

With these observations, we outline with the following theorem what are the bit-strings that correspond to the states that always need to be swapped and those that never need to be swapped.

\begin{theorem}[Always and never swappable levels]\label{thm:S0} We define the following sets
    \begin{enumerate}[label=(\alph*)]
        \item the $\ell$-minimal swappable set $\mathbb{S}_{\text{min}}^\ell =\{j_M \in \{0,1\}^{\times n} |~j_M \succeq j_M\oplus 1 \text{ and }\#(j_M) > \frac{n}{2}+\ell\}$, with $\ell=-1,0,1,\dots,\lfloor\frac{n-1}{2}\rfloor$,
        \item the never swappable set $\widetilde{\mathbb{S}}_{\text{n}} := \{j_M\in \{0,1\}^{\times n}|~j_M\oplus1\succeq j_M\}$.
    \end{enumerate}
    Then the following statements hold true
    \begin{enumerate}[label=(\roman*)]
        \item $\widetilde{\mathbb{S}}_n$ is the largest set such that $\widetilde{\mathbb{S}}_n \cap \mathbb{S} = \varnothing$ for all $0<\gamma_1\leq \gamma_2\leq\dots\leq \gamma_n$ and $0\leq \frac{T_M}{T_S}\omega$,
        \item $\mathbb{S}_{\text{min}}^\ell$ is the largest set such that $\mathbb{S}_{\text{min}}^\ell\subseteq \mathbb  S$ for all $0<\gamma_1\leq \gamma_2\leq\dots\leq \gamma_n$ and $0\leq \frac{T_M}{T_S}\omega<\sum_{i=1}^{L}\gamma_i$ for $\ell\geq0$,
        \item $\mathbb{S}_{\text{min}}^{-1}$ is the largest set such that $\mathbb{S}_{\text{min}}^{-1}\subseteq \mathbb S$ for all $0<\gamma_1\leq\dots\leq \gamma_n$ and $\frac{T_M}{T_S}\omega<\min(\gamma_1+\gamma_2,\gamma_n-\gamma_1)$ for $n$ even,
    \end{enumerate}
    where $L=2(\ell+1)$ for $n$ even and $L=2\ell+1$ for $n$ odd.
\end{theorem}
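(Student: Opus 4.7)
The plan is to reduce the whole theorem to bounding the signed energy difference $E(j_M) - E(j_M \oplus 1)$, since the identity $E(j_M) + E(j_M \oplus 1) = E_{\max}$ turns the condition $j_M \in \mathbb{S}$ into the equivalent inequality $E(j_M) - E(j_M \oplus 1) > (T_M/T_S)\omega$. Each of (i)--(iii) then consists of a containment direction (derive a bound on this difference from the structural hypothesis on $j_M$) and a maximality direction (produce a counterexample $(\Gamma,\omega)$ for any $j_M$ outside the claimed set). Throughout I use that the total order induced by $E(\cdot)$ extends the lexicographic partial order, as established in the preceding lemma.

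For containment in (i), the hypothesis $j_M \oplus 1 \succeq j_M$ immediately yields $E(j_M \oplus 1) \geq E(j_M)$, so the signed difference is nonpositive and cannot exceed $(T_M/T_S)\omega \geq 0$. For containment in (ii) the key quantitative step I plan to establish is
\begin{equation*}
    j_M \succeq j_M \oplus 1 \quad \Longrightarrow \quad E(j_M) - E(j_M \oplus 1) \;\geq\; \sum_{k=1}^{2\#(j_M) - n} \gamma_k,
\end{equation*}
which I obtain by unpacking the lexicographic condition position-by-position: the rightmost $n - \#(j_M)$ ones of $j_M$ dominate the ones of $j_M \oplus 1$ in their indices (hence in $\gamma$-value by the non-decreasing gap assumption), so those pair off and cancel, leaving the leftmost $2\#(j_M) - n$ ones of $j_M$ as an unconditional surplus that is bounded below by $\sum_{k=1}^{2\#(j_M) - n} \gamma_k$ via the trivial estimate $j_M(k) \geq k$. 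The Hamming-weight hypothesis $\#(j_M) > n/2 + \ell$ yields $2\#(j_M) - n \geq L$ in both parities of $n$, and the stated bound on $\omega$ closes the argument.

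For the maximality directions, I will construct explicit $(\Gamma, \omega)$ that realise the exclusion for each $j_M$ outside the claimed set. In (i), if $j_M \notin \widetilde{\mathbb{S}}_{\text{n}}$ then either $\#(j_M) > n/2$, in which case the fully degenerate machine with $\omega \to 0$ places $j_M$ in $\mathbb{S}$, or the positional ordering fails at some index $i$, in which case concentrating a single large gap at position $i$ engineers $E(j_M) > E(j_M \oplus 1)$ at $\omega \to 0$. In (ii), a (near-)degenerate machine with $\omega$ approaching the boundary $\sum_{k=1}^L \gamma_k$ gives $j_M \notin \mathbb{S}$ whenever $\#(j_M) \leq n/2 + \ell$, while a perturbation of this machine handles residual violations of $j_M \succeq j_M \oplus 1$.

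The main obstacle is (iii). Including strings with $\#(j_M) = n/2$ makes the earlier bound $\sum_{k=1}^{2\#(j_M) - n}\gamma_k$ vacuous, so a sharper estimate is required for exactly these balanced configurations. The idea will be that $j_M \succeq j_M \oplus 1$ with $\#(j_M) = n/2$ forces $j_M(i) > (j_M \oplus 1)(i)$ strictly at every position, and I will pair positions to write the signed difference as a sum of positive consecutive-gap increments. Identifying the extremal balanced configurations (e.g.\ $0^{n/2}1^{n/2}$ and the maximally interleaved string) then yields the two competing lower bounds $\gamma_n - \gamma_1$ and $\gamma_1 + \gamma_2$ appearing in the statement, whose minimum governs the hypothesis on $\omega$. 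Maximality follows from the same type of tailored $\Gamma$ construction as in (ii), saturating whichever of the two bounds is active.
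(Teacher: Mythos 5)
Your containment arguments for (i) and (ii) are essentially the paper's: rewriting membership in $\mathbb{S}$ as $E(j_M)-E(j_M\oplus 1)>\frac{T_M}{T_S}\omega$, deriving $E(j_M)-E(j_M\oplus 1)\geq\sum_{k=1}^{2\#(j_M)-n}\gamma_k$ by pairing the rightmost $n-\#(j_M)$ ones of $j_M$ against the ones of $j_M\oplus 1$ and bounding the unmatched ones via $\gamma_{j_M(k)}\geq\gamma_k$, and then checking $2\#(j_M)-n\geq L$ in both parities, is exactly the computation in the paper, as is the (i) containment from the preceding lemma. The maximality directions, however, are under-argued. For a string with $\#(j_M)>\frac{n}{2}+\ell$ that violates $j_M\succeq j_M\oplus 1$, a \emph{perturbation} of the (near-)degenerate machine does not suffice in general: for $n=6$, $\ell=0$, $j_M=111110$ the required exclusion $E(j_M)-E(j_M\oplus 1)\leq\frac{T_M}{T_S}\omega<\gamma_1+\gamma_2$ forces $\gamma_6>\gamma_3+\gamma_4+\gamma_5$, an arbitrarily skewed two-valued structure; this is precisely what the paper constructs, setting all gaps from the failure position onward equal and taking them ``large enough''. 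Similarly in (i), ``a single large gap at position $i$'' is incompatible with $\gamma_1\leq\dots\leq\gamma_n$ unless $i=n$, and for $j_M=0110$ ($n=4$) a large $\gamma_4$ pushes the inequality the wrong way; the working construction is again the paper's step structure $\gamma_1=\dots=\gamma_{j_M(\ell)-1}\ll\gamma_{j_M(\ell)}=\dots=\gamma_n$.

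The genuine gap is (iii). First, $\gamma_1+\gamma_2$ does not come from balanced configurations at all: it is the $\ell=0$ surplus bound applied to the members of $\mathbb{S}_{\text{min}}^{-1}$ with $\#(j_M)>n/2$, and only $\gamma_n-\gamma_1$ concerns the strings with $\#(j_M)=n/2$. Second, your pairing plan cannot deliver that bound: for balanced $j_M\succeq j_M\oplus 1$ one does get strict domination $j_M(i)>(j_M\oplus 1)(i)$ at every position, so $E(j_M)-E(j_M\oplus 1)=\sum_i\bigl(\gamma_{j_M(i)}-\gamma_{(j_M\oplus 1)(i)}\bigr)$ is a sum of nonnegative increments, but nothing forces this sum to exceed $\gamma_n-\gamma_1$: for the interleaved string $(01)^{n/2}$ it equals $\sum_k(\gamma_{2k}-\gamma_{2k-1})$, which is $0$ for $\Gamma=(1,1,10,10)$ while $\gamma_n-\gamma_1=9$. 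Hence the ``extremal balanced configurations'' you name do not control the generic balanced string, and the hypothesis $\frac{T_M}{T_S}\omega<\min(\gamma_1+\gamma_2,\gamma_n-\gamma_1)$ cannot be recovered along this route. The paper's proof of this case instead extracts the extremal positions forced by strict domination, $j_M(n/2)=n$ and $(j_M\oplus 1)(1)=1$, before invoking the $\gamma_n-\gamma_1$ estimate; be aware that the same interleaved example shows the increment at $i=n/2$ is only $\gamma_n-\gamma_{(j_M\oplus 1)(n/2)}$, so the balanced sector is genuinely delicate and should be anchored on the minimal element $(01)^{n/2}$, whose surplus $\sum_k(\gamma_{2k}-\gamma_{2k-1})$ is the actual uniform threshold, rather than on a blanket $\gamma_n-\gamma_1$ bound.
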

\emph{Proof of (ii) and (iii).} We can rephrase statement (ii) as $j_M\in \mathbb S$ for all energy structures with $\frac{T_M}{T_S}\omega < \sum_{i=1}^{L}\gamma_i$ if and only if $j_M\in \mathbb{S}_\text{min}^\ell$ ($\ell\geq 0$). ($\Leftarrow$ of ii) Assuming $j_M\in \mathbb{S}_\text{min}^\ell$, we have $j_M\geq \tilde j_M$ and $J=\#(j_M)>\frac{n}{2}+\ell$. By definition of the lexicographic partial order and of conjugate bit-strings, we necessarily have
\begin{equation}
    E(j_M) = \sum_{i=1}^J \gamma_{j_M(i)} \geq E(\tilde j_M)+\sum_{i=1}^{2J-n}\gamma_{j_M(i)} \geq E(\tilde j_M) +\sum_{i=1}^{2J-n}\gamma_i~.
\end{equation}
By using $E(\tilde j_M) = E_{\text{Max}}-E(j_M)$, we find $E(j_M)\geq \frac{1}{2}(E_{\text{Max}}+\sum_{i=1}^{2J-n}\gamma_i)$.
For $n$ even we have $J\geq \frac{n}{2}+1+\ell$, and therefore $2J-n \geq 2(\ell+1)$. While for $n$ odd we have $J\geq \frac{n+1}{2}+\ell$, and therefore $2J-n \geq 2\ell+1$. Therefore $j_M\in \mathbb S$ for $\frac{T_M}{T_S}\omega < \sum_{i=1}^{L}\gamma_i$.\\
Similarly to (ii), We can rephrase statement (iii) $j_M\in \mathbb S$ for all energy structures with $\frac{T_M}{T_S}\omega\leq \textnormal{min}(\gamma_1+\gamma_2,\gamma_n-\gamma_1)$ if and only if $j_M\in \mathbb{S}_\text{min}^{-1}$ when $n$ is even. ($\Leftarrow$ of iii) It is sufficient to extend the argument for ($\Leftarrow$ of ii) with $\ell=0$ to the elements of $\mathbb{S}_{\text{min}}^{-1}$ that do not belong in $\mathbb{S}_{\text{min}}^{0}$. Assuming $j_M\in \mathbb{S}_{\text{min}}^{-1}$ and $j_M\notin \mathbb{S}_{\text{min}}^{0}$ we have $\#(j_M)=\#(\tilde j_M)=\frac{n}{2}$. Furthermore we also have that $j_M(i)>\tilde j_M(i)$ for all $i=1,\dots,\frac{n}{2}=\#(j_M)=\#(\tilde j_M)$, therefore $j_M(\frac{n}{2}) = n$ and $\tilde j_M(1) =1$, which implies $E(j_M) \geq E(\tilde j_M) +\gamma_n-\gamma_1$. Similarly to the argumentation in ($\Leftarrow$ of ii), we get $E(j_M) \geq \frac{1}{2}(E_{\text{Max}} +\gamma_n-\gamma_1)$ and therefore $j_M\in \mathbb S$ for $\frac{T_M}{T_S}\omega < \gamma_n-\gamma_1$. Concluding the proof of ($\Leftarrow$ of iii).\\
($\Rightarrow$ of ii and iii) Conversely, if we now assume $j_M\notin \mathbb{S}_\text{min}^\ell$, then either $j_M\in \mathbb{S}_\text{min}^{\ell-1}$ (for $\ell\geq0$) or $j_M\not\succeq\tilde j_M$. In both cases it is sufficient to show that there exists an energy structure (that satisfies the requirements of the corresponding part of the theorem) such that $j_M\notin \mathbb S$. We start with the part that is only required for the proof of (ii), assuming $j_M\in \mathbb{S}_\text{min}^{\ell-1}$ and $j_M\notin \mathbb{S}_\text{min}^{\ell}$ (with $\ell\geq0$), then $\#(j_M)=\lfloor\frac{n}{2}\rfloor+\ell$. For the energy structure $\gamma_1=\gamma_2=\dots=\gamma_n$ this means $E(j_M) = \frac{1}{2}(E_{\text{Max}}+ 2\ell\gamma_1)$ for $n$ even and $E(j_M) = \frac{1}{2}(E_{\text{Max}} + (2\ell-1)\gamma_1)$ for $n$ odd. By picking $\frac{T_M}{T_S}\omega = 2\ell\gamma_1$ we have $\frac{T_M}{T_S}\omega < (2\ell+1)\gamma_1 =\sum_{i=1}^{2\ell+1}\gamma_i$ and $E(j_M) \leq \frac{1}{2}(E_{\text{Max}}+\frac{T_M}{T_S}\omega)$, therefore $j_M\notin\mathbb S$.\\
Assuming $j_M\not\succeq \tilde j_M$ we have that either (a) $J<\frac{n}{2}$ or (b) $\exists~\ell\in\{1,\dots,n-J\}$ such that $\tilde j_M(\ell) > j_M(\ell+2J-n)$. For (a) we pick $\gamma_1=\gamma_2=\dots=\gamma_n$, then it is immediate that $E(\tilde j_M) > E(j_M)$, therefore $\frac{1}{2}E_{\text{Max}} > E(j_M)$ and $j_M\notin S$. For (b) we pick, $\gamma_1=\dots=\gamma_{\tilde j_M(\ell)-1}$ and $\gamma_{\tilde j_M(\ell)}=\dots=\gamma_n>\gamma_1$. This leads to $E(\tilde j_M) = (\ell-1)\gamma_1+ (n-J-\ell+1)\gamma_n$. While for $j_M$ we have
\begin{equation}
    E(j_M) = \sum_{i=1}^J\gamma_{j_M(i)} = (\ell+2J-n)\gamma_1 + \sum_{i=\ell+2J-n+1}^J \gamma_{j_M(i)} \leq (\ell+2J-n)\gamma_1+(n-J-\ell)\gamma_n~.
\end{equation}
Therefore we can pick $\gamma_n$ large enough such that $E(\tilde j_M)> E(j_M)$. Therefore $j_M\notin \mathbb S$ for all $\omega$. Thus concluding the proof (ii) and (iii).\\

\emph{Proof of (i).} The statement can be rephrased as $j_M\notin \mathbb S$ for all energy structures if and only if $j_M\in\widetilde{\mathbb{S}}_{\text{n}}$. By definition of the lexicographic order, $j_M\in \tilde S_\text{n}$ satisfies $\#(j_M)\leq\#(\tilde j_M)$. Therefore it is immediate that $\widetilde{\mathbb{S}}_{\text{n}} = \{j_M\oplus1|~j_M\in \mathbb{S}_\text{min}^{-1}\}$. We obtain ($\Leftarrow$) immediately from \lemref{lem:partial_order} and incompatibility of $j_M$ and $j_M\oplus 1$.\\
($\Rightarrow$) Conversely, we now assume $\tilde j_M\not\succeq j_M$. By definition of the lexicographic order, either (a) $J=\#(j_M)>n/2$ or (b) there exists an $\ell \in \{1,\dots,J\}$ such that $\tilde j_M(\ell+n-2J) < j_M(\ell)$. In both cases we construct examples of energy structures where $j_M\in \mathbb S$: for (a) $J>n/2$ we can pick $\gamma_1=\dots=\gamma_n$, from which it is immediate that $E(j_M) > E(\tilde j_M)$ and therefore $E(j_M) > \frac{1}{2}E_{\text{Max}}$. For (b) we can pick $\gamma_1=\dots=\gamma_{j_M(\ell)-1}$ and $\gamma_{j_M(\ell)}=\dots=\gamma_n>\gamma_1$. Similarly to the proof of (ii) and (iii) we obtain $E(j_M) = \sum_{i=1}^J\gamma_{j_M(i)} = (J-\ell+1)\gamma_n + (\ell-1)\gamma_1$ and $E(\tilde j_M)= \sum_{i=1}^{n-J}\gamma_{\tilde j_M(i)} \leq (J-\ell)\gamma_n +(\ell+n-2J)\gamma_1$. Therefore we can choose $\gamma_n$ large enough such that $E(j_M) > E(\tilde j_M)$. Therefore we conclude that if $j_M\notin \widetilde{\mathbb{S}}_{\text{n}}$ then there are energy structures such that $E(j_M) > \frac{1}{2}E_{\text{Max}}$ and therefore for which $j_M\in \mathbb S$. Thus concluding the proof of (i) and the theorem.\qed\\

Let us briefly unpack the statements made by this theorem by considering first the never-swappable set, as it is more intuitive. Starting from observations made above that use the lexicographic partial order, we are able to find a set of states that, regardless of the energy structure of the machine, will not need to be swapped. Furthermore, the set $\widetilde{\mathbb{S}}_{\text{n}}$ is maximal, in the sense that for any bit-string $j_M\notin \widetilde{\mathbb{S}}_{\text{n}}$ then we can find an energy structure where one needs to act on the population of the corresponding sate $\ket{0_Sj_M}$ to cool the system.

Ideally, we would want to make a similarly simple analysis on the set of states that are always need to be acted upon in order to cool the system. However, there is a complication to be taken into account: no state is \emph{always} worth swapping as one can simply pick a value of $\omega$ that is large enough so that it is not worth swapping. Therefore one has to work around this problem by keeping track of how cold is already the initial system. We can start by considering the case where $\omega=0$, for which the set states that always need to be swapped, regardless of the energy structure of the machine, is given by the $\mathbb S_{min}^{-1}$. Incidentally, and intuitively, the elements of this set are the conjugate elements of the never swappable set. Similarly to the never swappable set, $\mathbb S_{min}^{-1}$ is maximal in the sense that for any bit-string $j_M\notin \mathbb S_{min}^{-1}$ we can find a corresponding energy structure (that satisfies $\omega=0$) such that the state $\ket{0_Sj_M}$ does not need to be swapped in order to cool the system. As one increases the value of $\omega$ there will be some states that stop being always in the swappable set. Therefore one needs to ``adjust'' the mathematical statement accordingly. In the theorem we do for every value of $\omega$ until the machine becomes completely useless--which happens when the system starts so cold that the machine cannot be used to cool it further.

\subsubsection{Constructing the $\ell$-minimal swappable set}
We now move on to consider which are the elements of $\mathbb S_{min}^{\ell}$. For the practicality of this section, we will decompose this set into its hamming weight sectors: $A_k :=\{j_M \in \{0,1\}^{\times n} |~j_M \succeq j_M\oplus 1 \text{ and }\#(j_M) =k\}$, so that $$\mathbb S_{min}^{\ell} = \bigcup_{k>\frac{n}{2}+\ell} A_k~.$$

The defining property of the lexicographic partial order is that an bit-string $j_M$ is larger than a bit-string $k_M$ when for each $1$ in $k_M$ we can identify a unique $1$ in $j_M$ that is ``further to the right''. Therefore the condition $j_M \succeq j_M\oplus 1$ means that for every $0$ in $j_M$ there is a $1$ uniquely associated to this $0$ further to the right in the bit-string. Very naturally, we can immediately see that $A_\ell$ is empty for $\ell<\frac{n}{2}$ since there are more zeros than ones in those bit-strings. However the problem becomes less trivial for $\ell\geq\frac{n}{2}$ (which we will assume for the rest of this section), as there are bit-strings such as $1110$ which cannot be part of the set $A_3$ for $n=4$.

In practice one can simply brute-force check every element, and therefore one would obtain the list of elements of $A_k$ for a given $n$.  However one can do better than this--and even obtain analytical expressions that can be used for bounds--by using the property that $j_M \succeq k_M \Leftrightarrow k_M\oplus 1 \succeq j_M\oplus 1$ (this follows immediately from the fact that $\oplus1$ swaps the zeros and ones). In particular, let us consider $j_M,k_M\in\mathbb B_\ell$ such that $j_M\succeq k_M$. Then it is immediate from the above property (and transitivity) that $k_M\in A_\ell \Rightarrow j_M\in A_\ell$. 

Let us consider the decreasing operation of the positional partial order: within the bit-string we change a $01$ for a $10$. Since for any element of $A_k$ one can find a unique $1$ on the right of each $0$, then we can apply the decreasing operation to the bit-string until we obtain $a_M^k=1^{2k-n}(01)^{n-k}$. We can note that $a_M^k\in A_k$ and that if we apply any decreasing operation to $a_M^k$ then the resulting bit-string is not an element of $A_k$. Therefore $a_M^k$ is the smallest element of $A_k$ by the positional partial order. Since $A_k$ has a smallest element, by using the property mentioned above, we can construct all of $A_k$ by applying iteratively the increasing operation $10\rightarrow01$ to $a_M^k$. Furthermore, it is also worth noting that $0^{n-k}1^k$ is an element of $A_k$ by construction and is the largest element of $\mathbb B_k$, and therefore is also the largest element of $A_\ell$.\\

Finally, we can also note that if we change any $0$ of $a_M^k$ for a $1$ then we obtain an element that is larger or equal to $a_M^{k+1}$. The same remains true for any other element of $A_k$, since they are larger than $a_M^k$. But we also have $a_M^{k+1}\succeq a_M^k$, which allows to extend our generative method from one $A_k$ to all of them. Therefore we can simplify the definition of the $\ell$-minimal swappable set to be all the elements that are larger than $a_M^{\lfloor\frac{n}{2}+\ell+1\rfloor}$ by the lexicographic partial order
\begin{equation}
    \mathbb S^\ell_{min} = \{j_M \in \{0,1\}^{\times n} |~j_M \succeq a_M^{\lfloor\frac{n}{2}+\ell+1\rfloor}\}~.
\end{equation}
And one can generate this set by considering all the elements obtained by starting from $a_M^{\lfloor\frac{n}{2}+\ell+1\rfloor}$ and either changing a $0$ into a $1$ or changing the sequence $10$ to $01$ in the bit-string.

We can therefore rephrase the statements (ii) and (iii) of the theorem as\\
\emph{For $\sum_{i=1}^{2(\ell-1)-n}\gamma_i \leq \frac{T_M}{T_S}\omega <\sum_{i=1}^{2\ell-n}\gamma_i$ (with  $\ell = \lceil \frac{n+1}{2}\rceil,\dots,n$) the largest subset of $\mathbb{S}$ that is agnostic of $\Gamma$ is given by $\{j_M \in \{0,1\}^{\times n} |~j_M \succeq 1^{2\ell-n}(01)^{n-\ell}\}$. Furthermore, when $n$ is even and $\frac{T_M}{T_S}\omega<\min(\gamma_1+\gamma_2,\gamma_n-\gamma_1)$ we can expand this set to $\{j_M \in \{0,1\}^{\times n} |~j_M \succeq (01)^{n/2}\}$.}

\end{document}